\renewcommand{\epsilon}{\varepsilon}
\newtheorem{theorem}{Theorem}[section]
\newtheorem{lemma}[theorem]{Lemma}
\newtheorem{definition}[theorem]{Definition}
\newtheorem{corollary}[theorem]{Corollary}
\newtheorem*{theorem*}{Theorem}
\newtheorem{remark}[theorem]{Remark}
\newtheorem{assumption}{Assumption}
\def \be{\begin{equs}}
\def \ee{\end{equs}}
\title{\bf Dimensionally Tight Bounds for Second-Order Hamiltonian Monte Carlo}
\author[1]{ Oren Mangoubi}
\author[2]{Nisheeth K. Vishnoi}
\affil[1,2]{\small \'{E}cole Polytechnique F\'{e}d\'{e}rale de Lausanne (EPFL), Switzerland}
\begin{document}

\maketitle

\maketitle

\renewcommand{\epsilon}{\varepsilon}
\begin{abstract}
Hamiltonian Monte Carlo (HMC) is a widely deployed method to sample from high-dimensional  distributions in  Statistics and Machine learning.
HMC is known to run very efficiently in practice and its popular second-order ``leapfrog" implementation has long been conjectured to run in $d^{1/4}$ gradient evaluations.
Here we  show that this conjecture is true when sampling from strongly log-concave target distributions that satisfy a weak third-order regularity property associated with the input data.
 Our regularity condition is weaker than the Lipschitz Hessian property and allows us to show faster convergence bounds for a  much larger class of distributions than would be possible with the usual Lipschitz Hessian constant alone. 
 Important distributions that satisfy our regularity condition include posterior distributions used in Bayesian logistic regression for which the data satisfies an ``incoherence" property.
 Our result  compares favorably with the best available bounds for the class of strongly log-concave distributions,
 which grow like $d^{{1}/{2}}$ gradient evaluations with the dimension. 
  Moreover, our simulations on synthetic data suggest that, when our regularity condition is satisfied, leapfrog HMC performs better than its competitors -- both in terms of accuracy and in terms of the number of gradient evaluations it requires. 
\end{abstract}

\newpage

\tableofcontents

\newpage

\section{Introduction}
  Sampling problems are ubiquitous in a wide range of scientific and engineering disciplines and have received significant attention in Machine Learning and Statistics.
In a typical sampling problem, one  wants to generate samples from a given target distribution $\pi(x) \propto e^{-U(x)}$, where one is given access to a function $U: \mathbb{R}^d \rightarrow \mathbb{R}$ and possibly its gradient ${\nabla U}$. 
In many situations, such as when $d$ is large, sampling problems are computationally difficult, and Markov chain Monte Carlo (MCMC) algorithms are  used.
 MCMC algorithms generate samples by running a Markov chain which converges to the target distribution $\pi$.  
 Unfortunately, many MCMC algorithms work by taking independent steps of short size $\eta$, meaning that they typically only travel a distance roughly proportional to  $\sqrt{i} \times \eta$ in $i$ steps, preventing the algorithm from quickly exploring the target distribution.  

One MCMC algorithm that can take large steps is the Hamiltonian Monte Carlo (HMC) algorithm.
Each step of the HMC Markov chain involves simulating the trajectory of a particle in the ``potential well'' $U$, with the trajectory determined by Hamilton's equations from classical mechanics \cite{betancourt2017conceptual, neal2011mcmc}.
 To ensure randomization, the momentum is refreshed after each step by independently sampling from a multivariate Gaussian.  
  HMC is a natural approach to the sampling problem because Hamilton's equations preserve the target distribution $\pi$.
   This convenient property reduces the need for frequent Metropolis corrections which slow down traditional MCMC algorithms, and allows HMC to take large steps.
     HMC was first discovered by physicists \cite{duane1987hybrid}, was adopted soon afterwards with much success in Bayesian Statistics and Machine learning \citep{neal1992bayesian, neal1996bayesian}, and is currently the main algorithm used in the popular  software package \emph{Stan} \citep{carpenter2016stan}.
Despite its popularity and the widespread belief that HMC is faster than its competitor algorithms in a wide range of high-dimensional sampling problems \citep{creutz1988global, HMC_optimal_tuning, neal2011mcmc, betancourt2014optimizing}, its theoretical properties are not as well-understood as its older competitor MCMC algorithms, such as the random walk Metropolis \cite{mattingly2012diffusion} or Langevin \cite{durmus2016sampling, durmus2016sampling2, dalalyan2017theoretical} algorithms.
 The lack of theoretical results makes it more difficult to tune the parameters of HMC, and prevents us from having a good understanding of when HMC is faster than its competitor algorithms.
  Several recent papers have begun to bridge this gap, showing that HMC is geometrically ergodic for a large class of problems \citep{livingstone2016geometric, durmus2017convergence} and proving quantitative bounds for the convergence rate of an idealized version of HMC on Gaussian target distributions \citep{seiler2014positive}.
Building on probablistic coupling techniques developed in \citep{seiler2014positive}, \cite{durmus2016sampling} and \citep{dalalyan2017theoretical}, \cite{mangoubi2017rapid} later proved a bound of $O^{*}(d^{\frac{1}{2}})$ gradient evaluations for a first-order implementation of HMC when $U$ is $m$-strongly convex with $M$-Lipschitz gradient (here the $O^{*}$ notation only includes dependence on $d$ and excludes regularity parameters such as $M,m$, accuracy parameters, and polylogarithmic factors of $d$).
When the dimension is large, computing $O^{*}(d^{\frac{1}{2}})$ gradient evaluations can be prohibitively slow. 
For this reason, in practice it is much more common to use the second-order ``leapfrog" implementation of HMC, which is conjectured to require $O^{*}(d^{\frac{1}{4}})$ gradient evaluations based on previous simulation \citep{duane1987hybrid} and asymptotic ``optimal scaling" results \cite{kennedy1991acceptances, pillai2012optimal}. 
Very recently,  \cite{mangoubi2017rapid} made some progress towards this conjecture by proving that $O^{*}(d^{\frac{1}{4}})$ gradient evaluations are required in the special case where $U$ is separable into orthogonal $O(1)$-dimensional strongly convex components satisfying Lipschitz gradient, Lipschitz Hessian and fourth-order regularity conditions.

\paragraph{Our contributions.} We introduce a new, and much weaker,  regularity condition that allows us to show that, in many cases,  HMC requires at most $O^{*}(d^\frac{1}{4})$ gradient evaluations. 
 Roughly, our regularity condition allows the Hessian to change quickly in ``bad" directions associated with the data, while at the same time guaranteeing that the Hessian changes slowly in the directions traveled by the HMC chain with high probability  (Assumption \ref{assumption:M3}).
  The fact that our regularity condition need not hold for the ``worst-case" directions allows us to show desired bounds on the number of gradient evaluations for a much larger class of distributions than would be possible with more conventional regularity conditions such as the Lipschitz Hessian property.
  Under our  regularity condition we show bounds of $O^{*}(d^{\frac{1}{4}})$ gradient evaluations for the leapfrog  implementation of HMC when sampling from a large class of strongly log-concave target distributions (Theorem \ref{thm:summary}).    
Next, we show that our regularity condition is satisfied by 
 posterior distributions used in Bayesian logistic ``ridge"  regression.  Computing these posterior distributions is important in statistics and Machine learning applications \citep{polson2013bayesian,         genkin2007large, madigan2005bayesian, 
        zhang2001text,
           hoffman2014no}
       and quantitative convergence bounds give insight into which MCMC algorithm to use for a given application, and how to optimally tune the algorithm's parameters.
         Finally, we perform simulations to evaluate the performance of the HMC algorithm  analyzed in this paper, and show that its performance is competitive in both accuracy {and speed} with the Metropolis-adjusted version of HMC  despite the lack of a Metropolis filter, when performing Bayesian logistic regression on  synthetic  data.

\paragraph{Related work.}

{\em Hamiltonian Monte Carlo.}
The earliest theoretical analyses of HMC were the asymptotic ``optimal scaling" results of \cite{kennedy1991acceptances}, for the special case when the target distribution is a multivariate Gaussian.  Specifically, they showed that the Metropolis-adjusted implementation of HMC with leapfrog integrator requires a numerical stepsize of $O^{*}(d^{-\frac{1}{4}})$ to maintain an $\Omega(1)$ Metropolis acceptance probability in the limit as the dimension $d \rightarrow \infty$.  They then showed that for this choice of numerical stepsize the number of numerical steps HMC requires to obtain samples from Gaussian targets with a small autocorrelation is $O^{*}(d^{\frac{1}{4}})$ in the large-$d$ limit.  More recently, \cite{pillai2012optimal} have extended their asymptotic analysis of the acceptance probability to more general classes of separable distributions.

The earliest non-asymptotic analysis of an HMC Markov chain was provided in \citep{seiler2014positive} for an idealized version of HMC based on continuous Hamiltonian dynamics, in the special case of Gaussian target distributions.
\cite{mangoubi2017rapid} show that idealized HMC can sample from general $m$-strongly logconcave target distributions with $M$-Lipschitz gradient in $\tilde{O}(\kappa^2)$ steps, where $\kappa:= \frac{M}{m}$ (see also \cite{bou2018coupling} for more recent work on idealized HMC).
They also show that an unadjusted implementation of HMC with first-order discretization can sample with Wasserstein error $\epsilon>0$ in $\tilde{O}(d^{\frac{1}{2}} \kappa^{6.5} \epsilon^{-1})$ gradient evaluations.  
In addition, they show that a second-order discretization of HMC can sample from separable target distributions in $\tilde{O}(d^{\frac{1}{4}} \epsilon^{-1} f(m,M,B))$ gradient evaluations,  where $f$ is an unknown (non-polynomial) function of $m,M,B$, if the operator norms of the first four Fr{\'e}chet derivatives of the restriction of $U$ to the coordinate directions are bounded by $B$. 
\cite{lee2017convergence} use the conductance method to show that an idealized version of the Riemannian variant of HMC (RHMC) has mixing time with total variation (TV) error $\epsilon>0$ of roughly $\tilde{O}(\frac{1}{\psi^2 T^2} R \log(\frac{1}{\epsilon}))$, for any $0\leq T \leq d^{-\frac{1}{4}}$, where $R$ is a regularity parameter for $U$ and $\psi$ is an isoperimetric constant for $\pi$.

 {\em Langevin Algorithms.}
\cite{durmus2016sampling2} show that the unadjusted Langevin algorithm (ULA) can generate a sample from $\pi$ with TV error $\epsilon>0$ in $\tilde{O}(d \kappa^2 \epsilon^{-2})$ gradient evaluations.  Using optimization-based techniques from \cite{dalalyan2017further}, \cite{cheng2017convergence, durmus2018analysis} show bounds for ULA in KL divergence.  \cite{cheng2017underdamped} show that \textit{underdamped} Langevin requires $\tilde{O}(d^{\frac{1}{2}} \kappa^2 \epsilon^{-1})$ gradient evaluations for Wasserstein error $\epsilon>0$ (see also \cite{cheng2018sharp}).
\cite{dwivedi2018log} show that the Metropolis-adjusted Langevin algorithm (MALA) requires $\tilde{O}(\max( d \kappa, \, \, d^{\frac{1}{2}} \kappa^{1.5}) \log(\frac{1}{\epsilon}))$ gradient evaluations from a warm start in the TV metric.

{\em Hit and run, ball walk, Random walk Metropolis (RWM).}
The Hit-and-run, ball walk, and RWM algorithms are all thought to have a step size of roughly $\Theta(1)$ on sufficiently regular target distributions \citep{roberts1997weak}. 
Therefore, since most of the probability of a standard spherical Gaussian lies in a ball of radius $\sqrt{d}$, one would expect all three of these algorithms to take roughly $(\sqrt{d})^2 = d$ steps to explore a sufficiently regular target distribution.  
One should then be able to apply results such as \citep{lovasz2003hit} to show that, from a warm start, RWM requires $\tilde{O}(d \kappa \log(\frac{1}{\epsilon}))$ target function evaluations to sample from the target distribution with TV error $\epsilon$.
 Interestingly, the only result \citep{dwivedi2018log} we are aware of specialized for the strongly log-concave case 
 gives a bound of $d^2 \kappa^2 \log(\frac{1}{\epsilon})$ target function evaluations for RWM.

\paragraph{Organization of the rest of the paper.} In Section \ref{sec:Algorithm} we go over Hamilton's equations and the unadjusted HMC algorithm with second-order leapfrog integrator. 
 In Section \ref{sec:regularity} we go over the regularity assumptions we make on the target distribution.  
 In Section  \ref{sec:results} we state gradient evaluation bounds (Theorem \ref{thm:summary}) that we obtain for the HMC algorithm under these regularity assumptions.  Section \ref{sec:proof} is a technical overview of the proof of Theorem \ref{thm:summary}.  

  In Sections \ref{appendix:preliminaries} and \ref{appendix:toy}, we go over in detail preliminary theorems and definitions used to prove our gradient evaluation bounds.  
   In Sections \ref{section:Lyapunov}, \ref{appendix:SecondOrderEuler}, \ref{section:main}, and \ref{appendix:Warm} we show various Lemmas and use them to prove our gradient evaluation bounds. 

 Section \ref{sec:simulations} gives results of simulations that we did to evaluate the accuracy and autocorrelation time of the unadjusted HMC algorithm studied in this paper (Section \ref{sec:simulations_accuracy}), as well as simulations that investigate to what extent our regularity assumptions hold for target distributions used in practical applications (Section \ref{sec:simulations_Lipschitz}). 
 In Section \ref{appendix:logit} we state and prove Lemmas required to apply our gradient evaluation bounds to target distributions used in Bayesian logistic regression. In Section \ref{sec:Conclusion} we discuss conclusions and open problems.

\section{Hamilton's equations and the  Hamiltonian Monte Carlo algorithm} \label{sec:Algorithm}
In this section we present the background and present the Hamiltonian Monte Carlo algorithm; see \cite{neal2011mcmc, betancourt2017conceptual} for a thorough treatment on this topic.

\noindent
{\bf Hamiltonian Dynamics.} A Hamiltonian of a simple system in $\mathbb{R}^d$ is $$\mathcal{H}(q,p) = U(q) + \frac{1}{2}\|p\|_2^{2},$$  where $q\in \mathbb{R}^d$ represents the ``position'' of a particle in this system, $p \in \mathbb{R}^d$ the ``momentum,'' $U$ the ``potential energy,'' and $\frac{\mathsf{1}}{2}\|p\|_2^{2}$ the  ``kinetic energy.'' 
 For fixed $\mathbf{q}, \mathbf{p} \in \mathbb{R}^d$, we denote by $\{q_t(\mathbf{q},\mathbf{p})\}_{t \geq 0}$, $\{ p_t(\mathbf{q},\mathbf{p}) \}_{t \geq 0}$ the solutions to Hamilton's equations:
\begin{equation}
\frac{\mathrm{d}q_t(\mathbf{q},\mathbf{p})}{\mathrm{d}t} =
  p_t(\mathbf{q},\mathbf{p}) \qquad \textrm{and} \qquad \frac{\mathrm{d}p_t(\mathbf{q},\mathbf{p})}{\mathrm{d}t} 
     = -{\nabla U}(q_t(\mathbf{q},\mathbf{p})),
\end{equation}
with initial conditions $q_0(\mathbf{q},\mathbf{p}) = \mathbf{q}$ and $p_0(\mathbf{q},\mathbf{p}) = \mathbf{p}$. 
When  the initial conditions $(\mathbf{q},\mathbf{p})$ are clear from the context, we write $q_t$, $p_t$ in place of $q_t(\mathbf{q},\mathbf{p})$ and $p_t(\mathbf{q},\mathbf{p})$.  The gradient $-{\nabla U}$ in the second Hamilton equation is thought of as a ``force" which acts on the particle.

\noindent
{\bf HMC.} We first consider an idealized version of the HMC Markov chain $X_0, X_1, \ldots$ based on the continuous Hamiltonian dynamics, with update rule $X_{i+1} = q_T(X_i, \mathbf{p}_i)$, where $\mathbf{p}_1, \mathbf{p}_2, \ldots \sim N(0,I_d)$ are iid. 
Since solutions to Hamilton's equations have invariant distribution $\propto e^{-\mathcal{H}(q,p)} = e^{-U(q)}e^{-\frac{1}{2}\|p\|_2^{2}}$, idealized HMC has stationary distribution $\pi(q) \propto e^{-U(q)}$ equal to the target distribution, without needing a correction such as Metropolis adjustment.  This allows HMC to take much larger steps, and hence mix faster, than would otherwise be possible.
It is not possible to implement an HMC Markov chain with continuous trajectories, so one must discretize these trajectories using a numerical integrator, such as the popular second-order leapfrog integrator (Step 5 in the algorithm below). 
 In this case, one obtains the following \textit{unadjusted} HMC (UHMC) Markov chain.  The number of gradient evaluations required by UHMC is the main object of study in this paper.
%
\begin{algorithm}[H]
\caption{Unadjusted HMC \label{alg:Unadjusted}}
\textbf{input:}   Initial point $X_0^\dagger \in \mathbb{R}^d$, oracle for gradient ${\nabla U}$,  $T>0$, $i_{\max} \in \mathbb{N}$, discretization level $\eta>0$\\
 \textbf{output:} Samples $X_0^\dagger, \ldots, X_{i_{\max}}^\dagger$ from the (following) UHMC Markov chain 
 \begin{algorithmic}[1]
\For{$i=0$ to $i_{\mathrm{max}}-1$}
\\Sample $\mathbf{p}_i \sim N(0,I_d)$
\\Set $\mathrm{q}_0 = X_i^\dagger$ and $\mathrm{p}_0 = \mathbf{p}_i$
\For{$j=0$ to $\lfloor \frac{T}{\eta} \rfloor-1$}\\
Set $\mathrm{q}_{j+1} = \mathrm{q}_j+\eta \mathrm{p}_j -\frac{1}{2} \eta^2  {\nabla U}(\mathrm{q}_j), \qquad \mathrm{p}_{j+1} = \mathrm{p}_j - \frac{1}{2}\eta {\nabla U}(\mathrm{q}_j) - \frac{1}{2}\eta {\nabla U}\left(\mathrm{q}_{j+1}\right)$
\EndFor
\\Set $X_{i+1}^\dagger =\mathrm{q}_{\lfloor \frac{T}{\eta} \rfloor}$
\EndFor
\end{algorithmic}
\end{algorithm}

\noindent
\textbf{Initialization:} In this paper we prove gradient evaluation bounds for UHMC from both a \textit{warm start} and \textit{cold start}, which we define as follows:
\begin{definition}(Warm start)
Let $X_0, X_1,\ldots $ be a Markov chain, and let $\pi$ be our target distribution. We say that $X$ has an $(\omega,\hat{\delta})$-warm start if there is a random variable $\tilde{Y}_0\sim \pi$ such that $\|X_0-\tilde{Y}_0\|_2 < \omega$ with probability $1-\hat{\delta}$ for some $\omega, \hat{\delta}>0$.
\end{definition}
\begin{definition}(Cold start)
We say that $X$ has a cold start if $X_0= x^\star$, with $x^\star:= \mathrm{argmin}_{x\in \mathbb{R}^d} U(x)$.
\end{definition}

Since UHMC requires $\Theta(\frac{T}{\eta})$ gradient evaluations to compute each Markov chain step $i$, the total number of gradient evaluations required by UHMC is  $\Theta(i_{\max} \times \frac{T}{\eta})$.  Note that the parameters $i_{\max}$, $T$, $\eta$ are chosen by the user, and the optimal choice of these algorithm parameters may depend on the dimension $d$ and the regularity parameters of $U$ such as $M$ and $m$.

\begin{remark}
   The number of arithmetic operations required to compute the gradient depends on how the function $U$ is provided to us in a given application. In the Bayesian logistic regression application analyzed at the end of Section \ref{sec:results} of this paper, the number of arithmetic operations required to compute the gradient $\nabla U$ is $\Theta(d)$, and is the same number of operations required to evaluate the target function $U$ itself.  However, in other applications it can take $2d$ times as many arithmetic operations to compute the gradient $\nabla U$ as it takes to compute the target function $U$.
\end{remark}

\section{Regularity conditions} \label{sec:regularity}
In this section we explain the $\sqrt{d}$ gradient evaluation bound barrier in prior approaches  and present our regularity condition that overcomes it.
Let $H_x$ denote the Hessian of $U$ at $x\in \mathbb{R}^d$.  We start by noting that if one attempts to bound the number of gradient evaluations required by HMC using a conventional  Lipschitz bound on the Hessian
\begin{equation}
\|(H_y-H_x)p\|_2  \leq L_2 \|y-x\|_2 \times \|p\|_2 \qquad \forall x,y, p \in \mathbb{R}^d
\end{equation}
 that is defined with respect to the Euclidean norm,  then the bounds that one obtains are no faster than $\sqrt{d}$ gradient evaluations.  
 The reason is that if we use the usual ``Euclidean" Lipschitz Hessian condition to bound the numerical error, we obtain an error bound of roughly $\sqrt{d}$, since (from a warm start) the trajectories of HMC travel with momentum roughly $N(0,I_d)$, implying that the momentum of these trajectories has Euclidean norm $\sqrt{d}$ with high probability (w.h.p.).
   To bound the error of a second-order method such as the leapfrog method used by HMC, we must bound the change of the directional derivative of the gradient along the path taken by the trajectories of the Markov chain.
     In particular, when the leapfrog integrator (step 5 of Algorithm \ref{alg:Unadjusted}) takes a numerical step from $\mathrm{q}_j$ to roughly $\mathrm{q}_{j+1} \approx \mathrm{q}_j+\eta \mathrm{p}_j$, one component of the error in computing the continuous Hamiltonian trajectory can be bounded by the quantity $\|(\eta^2 H_{\mathrm{q}_j+\eta \mathrm{p}_j}-\eta^2 H_{\mathrm{q}_j})\mathrm{p}_j\|_2$. 
      This quantity in turn can be bounded using the Lipschitz Hessian constant by $\eta^2 L_2 \|\eta \mathrm{p}_j\|_2\times  \|\mathrm{p}_j\|_2$.  Since $\mathrm{p}_j$ is roughly $N(0,I_d)$ we have $\|\mathrm{p}_j\|_2 \approx \sqrt{d}$ w.h.p., which gives an error bound of  $\eta^3 L_2 d$ for one leapfrog step and roughly $\eta^2 L_2 d$ for the error of computing an entire HMC trajectory if $T= \Theta^*(1)$. 
       To obtain an error bound of $\epsilon$ we therefore need $\eta = O^*(\nicefrac{1}{\sqrt{d}})$.  When computing a trajectory of length $\Theta^*(1)$ with this stepsize $\eta$, we therefore need to compute $O^*(\sqrt{d})$ numerical steps.
 To overcome this $\sqrt{d}$ gradient evaluation barrier,  we therefore need to control the change in the Hessian with respect to a norm which does not grow as quickly with the dimension as the Euclidean norm for a random $N(0,I_d)$ momentum vector.

We need a better way to bound the quantity $\|(\eta^2 H_{\mathrm{q}_j+\eta \mathrm{p}_j}-\eta^2 H_{\mathrm{q}_j})\mathrm{p}_j\|_2$.  
One way to do so would be to replace the Euclidean Lipschitz Hessian condition with an infinity-norm Lipschitz condition of the form
\be
\|(H_y - H_x)  v \|_2 \leq L_\infty \times \|y-x\|_\infty \|v\|_\infty
\ee
 for some constant $L_\infty>0$. 
 For this norm, $\|\mathrm{p}_j\|_\infty =O(\log(d))$ with high probability since, roughly, $\mathrm{p}_j \sim N(0,I_d)$, implying that $\|(\eta^2 H_{\mathrm{q}_j+\eta \mathrm{p}_j}-\eta^2 H_{\mathrm{q}_j})\mathrm{p}_j\|_2$ is bounded by roughly $\eta^2 L_\infty \log(d)$ rather than $\eta^2  L_2 d$.

Since for many distributions of interest this condition does not hold for a small value of $L_\infty$, we generalize this condition, to obtain a smaller $L_\infty$ constant for a wider class of distributions.
Towards this end, we define the vector (semi)-norm $\| \cdot \|_{\infty, \mathsf{u}}$ with respect to the collection of unit vectors $\mathsf{u} := \{u_1, \ldots, u_r\}$ by  $\|x \|_{\infty, \mathsf{u}} := \max_{i \in \{1,\ldots, r\}} |u_i^\top x|$. 
The usual infinity norm is just a special case of this new norm if we set $u_i = e_i$ to be the coordinate vectors.
Under this more general norm, the magnitude of a random $N(0,I_d)$ vector still grows only logarithmically with $d$, since each component $u_i^\top x$ is a univariate standard normal.
The associated matrix norm $\|A \|_{\infty, \mathsf{u}}$ is defined to be $\sup_{\|x\|_{\infty, \mathsf{u}} \leq 1} \|Ax \|_2$. 
 Using this norm, and motivated by the discussion above, we  arrive at our new regularity condition. Roughly speaking, our new regularity condition allows the Hessian to change very quickly in $r>0$ ``bad" directions $u_1, \ldots, u_r$, as long as it does not change quickly on average in a random direction (Figure \ref{fig:logistic_Hessian}).

\begin{assumption}[\bf Infinity-norm Lipschitz condition] \label{assumption:M3}
 There exist $L_{\infty}>0$, $r\in \mathbb{N}$, and a collection of unit vectors with $\mathsf{u} = \{u_1, \ldots, u_r\} \subseteq \mathbb{S}^d$, such that for all $x, y \in \mathbb{R}^d$, we have $\|H_y - H_x\|_{\infty, \mathsf{u}} \leq L_{\infty} \sqrt{r}  \|y-x\|_{\infty, \mathsf{u}}$.
\end{assumption}

We expect this assumption to hold when the target function $U$ is of the form $U(x) = \sum_{i=1}^r f_i(u_i^\top x)$ for functions $f_i:\mathbb{R}\rightarrow \mathbb{R}$ with uniformly bounded third derivatives.  In particular, this class includes the target functions used in logistic regression. This condition may also be of independent interest.

\begin{figure}[h]
\begin{center}
\includegraphics[trim={0cm 7cm 0cm 7cm}, clip, scale=0.5]{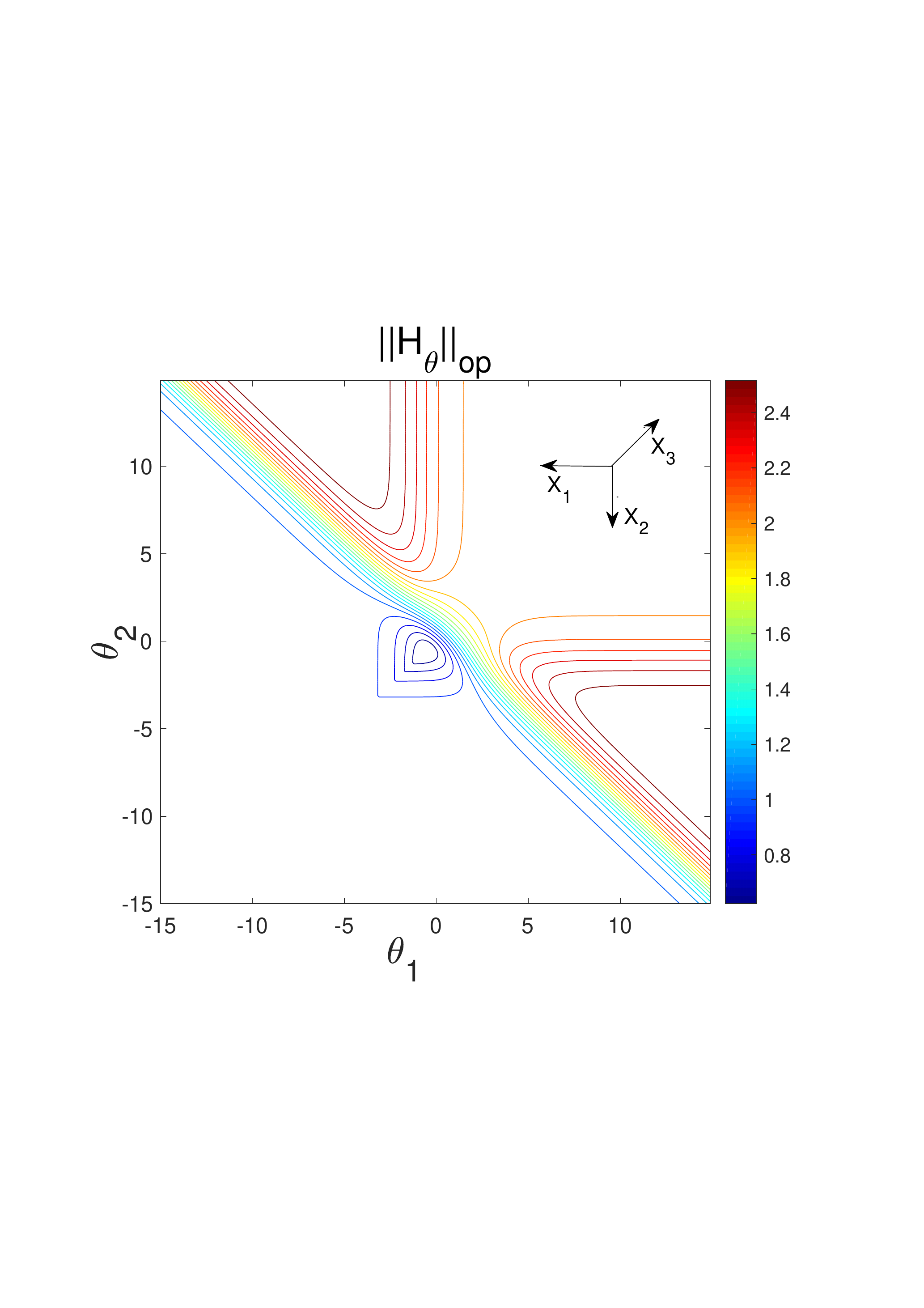}
\end{center}
\caption{This is a contour plot of the largest eigenvalue $\|H_\theta\|_{\mathrm{op}}$, or ``operator norm", of the Hessian $H_\theta$ of $U(\theta)$ when the ``bad" directions correspond to the data vectors $\mathsf{X}_1 = (0,-1)^\top$, $\mathsf{X}_2 = (-1,0)^\top$, and $\mathsf{X}_3 =(1, 1)$ with $d=2$.  At each value of $\theta$, $\|H_\theta\|_{\mathrm{op}}$ changes most quickly in the direction orthogonal to the contour lines. 
Notice that at most points $\theta$ the fastest change in $\|H_\theta\|_{\mathrm{op}}$ occurs in one of the directions $\mathsf{X}_1$, $\mathsf{X}_2$ and $\mathsf{X}_3$, and that the fastest change in this example occurs in the $\mathsf{X}_3$ direction.   
This phenomenon is amplified when the dimension $d$ is large, so that the change in the Hessian tends to be much larger in a few ``bad" directions than in a typical random direction.}\label{fig:logistic_Hessian}
\end{figure}

\paragraph{Additional conditions for cold starts.}
When proving bounds from a cold start, roughly speaking we would still like to guarantee that the HMC trajectories travel with speed $O^*(1)$ in any of the ``bad" directions, so that $\|p_t \|_{\infty, \mathsf{u}} = O^*(1)$.  However, unlike from a warm start, we have no guarantee that the momentum is roughly $N(0,I_d)$. 
  To bound $\|p_t \|_{\infty, \mathsf{u}}$ we therefore need another way to control the growth of the quantity $|u_i^\top p_t|$ in each ``bad" direction $u_i$. 
  To do so, we would like to guarantee that the bounds on the ``force" acting on our Hamiltonian trajectory in each $u_i$ direction depend only on the components $u_i^\top q_t$ and $u_i^\top p_t$  of the position and momentum in that direction, regardless of the component of the momentum orthogonal to $u_i$. 
  Towards this end, we assume the following:
\begin{assumption}[\bf Gaussian tail bound condition (for cold start only)] \label{assumption:tail}
There exists a constant $b>0$, and a collection of unit vectors $\mathsf{u} = \{u_1, \ldots, u_r\} \subseteq \mathbb{S}^d$, such that
$\min \{m u^\top (x-x^\star), M u^\top (x-x^\star) \} -b \leq u^\top {\nabla U}(x) \leq \max \{m u^\top (x-x^\star), M u^\top (x-x^\star) \}+b$  for all $x \in \mathbb{R}^d, \, \,u \in \mathsf{u}.$
\end{assumption}
Assumption \ref{assumption:tail} gaurantees that the component of the gradient in each ``bad" direction $u_i$ is bounded solely in terms of the component of the position in that same ``bad" direction.  
This allows us to apply arguments based on Gronwall's inequality on the projection of the trajectory in each bad direction $u_i$ in order to bound the magnitude of the position and momentum at time $t$ in the direction $u_i$. 
 Using Grownwall's inequality, we bound the component of the initial position and momentum in the direction $u_i$ (Lemmas \ref{lemma:upper_summary} to \ref{lemma:upper2_summary}, Section \ref{section:Lyapunov}), without assuming a warm start.

\section{Theoretical results} \label{sec:results}

Our main result is a bound on the number of gradient evaluations required by HMC with second-order leapfrog integrator under the infinity-norm Lipschitz condition (Assumption \ref{assumption:M3}), when sampling from $\pi(x) \propto e^{-U(x)}$ if $U$ is $m$-strongly convex with $M$-Lipschitz gradient.  
We bound the required number of gradient evaluations for both a warm and cold start.  
Here we focus on the warm start result; See Theorem \ref{thm:cold_formal} in Section \ref{section:main} for bounds from a cold start and Theorem \ref{thm:warm_formal} in Section \ref{appendix:Warm} for a more formal statement of our warm start result.

\noindent
\begin{theorem}[\bf Bounds for second-order HMC, informal] \label{thm:summary}
Let $\pi(x) \propto e^{-U(x)}$ where $U:\mathbb{R}^d \rightarrow \mathbb{R}$ is $m$-strongly convex, $M$-gradient Lipschitz, and satisfies Assumption \ref{assumption:M3}. 
 Then there exist parameters $T, \eta, i_{\mathrm{max}}$, such that from an $(\omega, \delta)$-warm start, Algorithm \ref{alg:Unadjusted}  generates an approximate independent sample $X_{i_\mathrm{max}}^\dagger$ from $\pi$ such that $\|X_{i_\mathrm{max}}^\dagger-Y\|_2<\epsilon$ for some $Y\sim \pi$ independent of the initial point $X_0^\dagger$ with probability at least $1-\delta$.  Moreover UHMC requires at most $\tilde{O}(d^{\frac{1}{4}} \epsilon^{-\frac{1}{2}} \sqrt{L_{\infty}} \log^{\frac{1}{2}}(\frac{1}{\delta}))$ gradient evaluations whenever $m,M, \omega = O(1)$, $r=\tilde{O}(d)$ and $L_{\infty} = \Omega(1)$.

\end{theorem}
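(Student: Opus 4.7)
The plan is to adapt the coupling-based contraction framework for idealized HMC developed in \cite{mangoubi2017rapid} and combine it with a discretization-error analysis that systematically replaces the Euclidean norm of the momentum by its $\|\cdot\|_{\infty,\mathsf u}$ counterpart, so that Assumption \ref{assumption:M3} can be brought to bear. First I would couple the UHMC chain $X_i^\dagger$ started at $X_0^\dagger$ with an \emph{idealized} HMC chain $\tilde Y_i$ started at a true sample $\tilde Y_0\sim\pi$ at Euclidean distance at most $\omega$, driving both chains with the same $N(0,I_d)$ momentum refreshments. A one-step contraction estimate for idealized HMC in the $m$-strongly-log-concave, $M$-gradient-Lipschitz setting then gives $\|Y_{i}-\tilde Y_{i}\|_2 \le (1-\Omega(\kappa^{-2}))^{i}\omega$, so after $i_{\max}=\tilde O(\kappa^{2}\log(\omega/\epsilon))$ outer steps the ``continuous'' component of the error is below $\epsilon/2$; when $\kappa,\omega=O(1)$ this is only polylogarithmic in $1/\epsilon$.

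The core technical step is to bound, for each outer step, the one-trajectory gap $\|X_{i+1}^\dagger - Y_{i+1}\|_2$ between the leapfrog integrator and the exact Hamiltonian flow over a segment of length $T=\Theta^*(1)$. The local truncation error of a leapfrog step is governed by the quantity $(H_{\mathrm q_j+\eta\mathrm p_j}-H_{\mathrm q_j})\mathrm p_j$; by the definition of the matrix norm $\|\cdot\|_{\infty,\mathsf u}$ together with Assumption \ref{assumption:M3}, this is bounded in $\ell_2$ by $\eta L_{\infty}\sqrt r\,\|\mathrm p_j\|_{\infty,\mathsf u}^{2}$ rather than by the much larger $\eta L_{2} d$ that a Euclidean Lipschitz-Hessian bound would yield. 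Summing the resulting $O(\eta^{3}L_{\infty}\sqrt r\,\|\mathrm p_j\|_{\infty,\mathsf u}^{2})$ per-leapfrog-step errors over the $T/\eta$ inner steps and closing the recursion via a Gronwall-type estimate of the kind developed in Section \ref{appendix:SecondOrderEuler} yields a one-outer-step discretization error of order $\tilde O(\eta^{2}L_{\infty}\sqrt r\max_{t\le T}\|p_t\|_{\infty,\mathsf u}^{2})$.

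To close the argument I need a high-probability bound on $\max_{t\le T}\|p_t\|_{\infty,\mathsf u}$ under the warm-start hypothesis. At $t=0$ the refreshed momentum is $N(0,I_d)$, each $u_i^\top\mathbf p_i$ is a univariate standard Gaussian, and $r=\tilde O(d)$, so a union bound gives $\|p_0\|_{\infty,\mathsf u}^{2}=O(\log(r/\delta))=\tilde O(\log(1/\delta))$ with probability at least $1-\delta$. Extending this to every $t\in[0,T]$ is the role of the Lyapunov-function machinery of Section \ref{section:Lyapunov}: because $q_0$ is within $\omega=O(1)$ of a genuine $\pi$-sample, one shows that the one-dimensional projection $(u_i^\top q_t,u_i^\top p_t)$ stays inside a uniformly bounded sublevel set of a quadratic Lyapunov functional; a union bound over the $r$ directions together with a Lipschitz-in-$t$ bound on a fine grid of $[0,T]$ upgrades this to the uniform estimate $\max_{t\le T}\|p_t\|_{\infty,\mathsf u}^{2}=\tilde O(\log(1/\delta))$. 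Feeding this back into the discretization bound and balancing against $\epsilon/i_{\max}$ forces $\eta=\tilde\Theta\bigl(\sqrt{\epsilon}\,r^{-1/4}L_{\infty}^{-1/2}\log^{-1/2}(1/\delta)\bigr)$, so the total gradient count $i_{\max}\cdot T/\eta$ becomes $\tilde O\bigl(d^{1/4}\epsilon^{-1/2}\sqrt{L_{\infty}}\log^{1/2}(1/\delta)\bigr)$, reproducing the advertised bound.

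The hardest step I anticipate is the uniform-in-$t$ control of $\|p_t\|_{\infty,\mathsf u}$ under a warm start: unlike the cold-start case, Assumption \ref{assumption:tail} is \emph{not} available, so $u_i^\top\nabla U(q_t)$ cannot be bounded purely in terms of $u_i^\top q_t$, and a naive Gronwall inequality on each one-dimensional projection would re-introduce the full $\ell_2$ excursion of $q_t$, which is typically of order $\sqrt d$. Avoiding this will likely require exploiting the fact that under the coupling the companion chain $\tilde Y_t$ is \emph{marginally} $\pi$-distributed, so its momentum projections are marginally standard Gaussian at every fixed $t$, and then transferring the estimate to $X_t^\dagger$ using the $\omega=O(1)$ coupling distance together with the $M$-Lipschitzness of $\nabla U$ in the very short direction orthogonal to each $u_i$.
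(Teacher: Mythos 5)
Your proposal follows essentially the same route as the paper: the leapfrog local error is controlled via Assumption~\ref{assumption:M3} in the $\|\cdot\|_{\infty,\mathsf{u}}$ norm (Lemma~\ref{lemma:euler_error_summary}); the uniform-in-$t$ bound on $\|p_t\|_{\infty,\mathsf{u}}$ from a warm start is obtained by coupling to a stationary-started idealized chain whose momentum is marginally $N(0,I_d)$, applying Hanson--Wright plus a union bound over a fine time grid, upgrading to all $t\in[0,T]$ via conservation of energy, and transferring back via the $O(\omega)$ coupling distance (Lemma~\ref{lemma:good_summary_warm}); and the per-trajectory discretization errors are prevented from compounding across outer steps using the contraction theorem (Theorem~\ref{thm:contraction}). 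The one bookkeeping device you elide is the ``toy integrator'' of Section~\ref{appendix:toy}, which the paper introduces to break the circularity between needing the numerical trajectory to stay in the good set in order to bound its local error and needing small local error to keep the numerical trajectory in the good set; your ``closing the recursion'' induction is the right idea, and the toy chain is what makes that induction rigorous.
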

More generally, for arbitrary $m,M$ and $r$, we show (from a warm start with $\omega =O(1)$) that the number of gradient evaluations is $\tilde{O}(\max\left(d^{\frac{1}{4}}\kappa^{2.75}, \, \,  r^{\frac{1}{4}}\sqrt{\tilde{L}_{\infty}} \kappa^{2.25}\right)\epsilon^{-\frac{1}{2}} \log^{\frac{1}{2}}(\frac{1}{\delta}))$, where $\tilde{L}_{\infty}:= \frac{L_{\infty}}{\sqrt{M}}$ and $\kappa:= \frac{M}{m}$ is the ``condition number".  
From a cold start, under the additional Gaussian tail bound condition (Assumption \ref{assumption:tail}), we show that the number of gradient evaluations is  $\tilde{O}(\max\left(d^{\frac{1}{4}}\kappa^{3.5}, \, \,  r^{\frac{1}{4}}\sqrt{\tilde{L}_{\infty}} \left(\kappa^{4.25} + \tilde{b}\kappa^{3.25}\right)\right) \epsilon^{-\frac{1}{2}})$, where $\tilde{b}:= \frac{b}{\sqrt{M}}$.  
\footnote{To obtain our bounds from a warm start, we run UHMC with parameters $T= \frac{1}{6\sqrt{M \kappa}}$,  $i_{\mathrm{max}}= \Theta(\frac{1}{mT^2})$, and $\eta = \Theta( \min \{ d^{-\frac{1}{4}} \kappa^{-1.25}, r^{-\frac{1}{4}} \tilde{L}_{\infty}^{-\frac{1}{2}}\kappa^{-0.75}\} \frac{\epsilon^{\frac{1}{2}}}{\sqrt{M}} \log^{-\frac{1}{2}}(\frac{1}{\delta}))$.    To obtain our bounds from a cold start, we run UHMC with parameters $T= \frac{1}{6\sqrt{M \kappa}}$,  $i_{\mathrm{max}}= \Theta(\frac{1}{mT^2})$, and $\eta = \Theta( \min \{ d^{-\frac{1}{4}} \kappa^{-2}, r^{-\frac{1}{4}} \tilde{L}_{\infty}^{-\frac{1}{2}}(\kappa^{2.75}+ \tilde{b} \kappa^{1.75})^{-1}\} \frac{\epsilon^{\frac{1}{2}}}{\sqrt{M}})$.}

  If $\kappa = O(1)$, $L_{\infty} =O(1)$ and $r=O(d)$, then our bound on the number of gradient evaluations is $O(d^{\frac{1}{4}} \epsilon ^{-\frac{1}{2}})$ from a warm start.  
  To the best of our knowledge, our bounds are an improvement over all previous gradient evaluation bounds for sampling in this regime, which all have dimension dependence $\sqrt{d}$ or greater. 
  Also note that while  \cite{mangoubi2017rapid} obtains $O^*(d^{\frac{1}{4}})$ bounds in the special case of product distributions, unlike in \cite{mangoubi2017rapid} the condition number dependence in our bounds is polynomial.
  
We are especially interested in the regime where $d$ is large since the number of predictor variables in a statistical model is oftentimes very large \cite{johnson2013numerical, genkin2007large}, and in many cases $\kappa$ and $L_{\infty}$ do not grow, or only grow relatively slowly, with the dimension. 
 We state some concrete examples from Bayesian logistic regression of regimes where our gradient evaluation bounds are an improvement on the previous best bounds in the discussion after Theorem \ref{thm:logit}.  
 
 \paragraph{Applications to logistic regression.}
In Bayesian logistic ``ridge" regression, one would like to sample from the target log-density
\begin{equation} \label{eq:logit}
U(\theta) = \nicefrac{1}{2}\theta^\top \Sigma^{-1} \theta- \textstyle{\sum_{i=1}^r} \mathsf{Y}_i \log(F(\theta^\top \mathsf{X}_i)) + (1-\mathsf{Y}_i)\log(F(-\theta^\top \mathsf{X}_i)),
\end{equation}
where the data vectors $\mathsf{X}_1,\ldots \mathsf{X}_r \in \mathbb{R}^d$ are thought of as independent variables, the binary data $\mathsf{Y}_1,\ldots, \mathsf{Y}_r \in \{0,1\}$ are dependent variables, $F(s) :=(e^{-s}+1)^{-1}$ is the logistic function, and $\Sigma$ is positive definite. 
 We define the incoherence of the data as 
 \be
 \mathsf{inc}(\mathsf{X}_1,\ldots \mathsf{X}_r):= \max_{i\in[r]} \sum_{j=1}^r |\mathsf{X}_i^\top \mathsf{X}_j|.
 \ee
   We bound the value of the infinity-Lipschitz constant in terms of the incoherence: 
\begin{theorem}[\bf Regularity bounds for logistic regression] \label{thm:logit}
Let $U$ be the logistic regression target for $r>0$ data vectors $\mathsf{X}_1,\ldots, \mathsf{X}_r$, and let $\mathsf{inc}(\mathsf{X}_1,\ldots,\mathsf{X}_r)\leq C$ for some $C>0$. 
 Then the infinity-norm Lipschitz assumption is satisfied with $L_\infty = \sqrt{C}$ and ``bad" directions $\mathsf{u} = \left \{\frac{\mathsf{X}_i}{\|\mathsf{X}_i\|_2}\right \}_{i=1}^r$.
\end{theorem}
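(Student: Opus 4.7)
The plan is to compute $H_y - H_x$ explicitly, express it as a linear combination of the rank-one matrices $\mathsf{X}_i\mathsf{X}_i^\top$, and then convert the row-sum incoherence condition into the infinity-norm Lipschitz bound via an AM--GM symmetrization on the resulting bilinear form.

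First I would differentiate $U$ twice. Using $(\log F(s))' = F(-s)$, one obtains $\nabla U(\theta) = \Sigma^{-1}\theta + \sum_{i=1}^{r}\bigl(F(\theta^\top\mathsf{X}_i) - \mathsf{Y}_i\bigr)\mathsf{X}_i$ and $H_\theta = \Sigma^{-1} + \sum_{i=1}^{r} F'(\theta^\top\mathsf{X}_i)\,\mathsf{X}_i\mathsf{X}_i^\top$. The prior term $\Sigma^{-1}$ cancels on subtraction, leaving
\[
H_y - H_x = \sum_{i=1}^{r} a_i\,\mathsf{X}_i\mathsf{X}_i^\top, \qquad a_i := F'(y^\top\mathsf{X}_i) - F'(x^\top\mathsf{X}_i).
\]
Since $F' = F(1-F)$ has uniformly bounded derivative, $F'$ is Lipschitz with some absolute constant $c_F$, and together with $u_i = \mathsf{X}_i/\|\mathsf{X}_i\|_2$ this gives $|a_i| \leq c_F\|\mathsf{X}_i\|_2 |u_i^\top(y-x)| \leq c_F\|\mathsf{X}_i\|_2\|y-x\|_{\infty,\mathsf{u}}$.

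Next I would test the matrix against an arbitrary $v$ with $\|v\|_{\infty,\mathsf{u}} \leq 1$. Writing $\xi_i := a_i(\mathsf{X}_i^\top v)$ one has $(H_y-H_x)v = \sum_i \xi_i\mathsf{X}_i$, whose squared norm is the bilinear form
\[
\|(H_y - H_x)v\|_2^2 = \sum_{i,j} \xi_i\xi_j\,(\mathsf{X}_i^\top \mathsf{X}_j).
\]
The crucial step is to bound this double sum by applying $2|\xi_i\xi_j|\leq \xi_i^2+\xi_j^2$ together with the symmetry of $|\mathsf{X}_i^\top\mathsf{X}_j|$ in $i,j$, which collapses the off-diagonal interactions into a diagonal sum weighted by the row-sums:
\[
\sum_{i,j}|\xi_i||\xi_j|\,|\mathsf{X}_i^\top\mathsf{X}_j| \;\leq\; \Bigl(\max_i\sum_j |\mathsf{X}_i^\top\mathsf{X}_j|\Bigr)\sum_i\xi_i^2 \;\leq\; C\sum_i\xi_i^2.
\]
Combined with the per-index estimate $\xi_i^2\leq c_F^2\|\mathsf{X}_i\|_2^4\|y-x\|_{\infty,\mathsf{u}}^2\|v\|_{\infty,\mathsf{u}}^2$ and the natural normalization $\|\mathsf{X}_i\|_2\leq 1$ (under which $\sum_i\|\mathsf{X}_i\|_2^4\leq r$ and the statement $L_\infty=\sqrt{C}$ is sharp), this yields $\|(H_y-H_x)v\|_2\leq c_F\sqrt{Cr}\,\|y-x\|_{\infty,\mathsf{u}}$. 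Taking the supremum over admissible $v$ and absorbing the universal constant $c_F$ gives Assumption~\ref{assumption:M3} with $L_\infty$ of order $\sqrt{C}$, and the bad directions $\mathsf{u} = \{\mathsf{X}_i/\|\mathsf{X}_i\|_2\}_{i=1}^{r}$ are precisely the ones used to define the norm in which $|a_i|$ was controlled.

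The main obstacle I anticipate is exactly the symmetrization step that turns the Gram-weighted bilinear form into a diagonal sum controlled by the row-sum incoherence. Any naive bound that instead passes through the operator or Frobenius norm of the Gram matrix $G=(\mathsf{X}_i^\top\mathsf{X}_j)_{i,j}$ would produce a constant that can scale much worse than $C$ (for example $\|G\|_{\mathrm{op}}$ matches $C$ only in favorable cases and can be far larger in general), which would destroy the advertised $\sqrt{C}$ dependence. The trick $2|\xi_i\xi_j|\leq \xi_i^2+\xi_j^2$ is precisely what couples the Gram structure to the incoherence condition in the form in which it is stated, namely as a uniform bound on the row-sums of the absolute Gram matrix.
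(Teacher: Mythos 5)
Your proof is correct and mirrors the paper's Lemma~\ref{lemma:logistic1}: compute $H_y-H_x=\sum_i[F'(y^\top\mathsf{X}_i)-F'(x^\top\mathsf{X}_i)]\,\mathsf{X}_i\mathsf{X}_i^\top$, use $|F''|\le 1$ together with the $(\infty,\mathsf{u})$-norm to control each coefficient and projection, and expand $\|(H_y-H_x)v\|_2^2$ into a Gram-weighted double sum that is closed by the row-sum incoherence $\max_i\sum_j|\mathsf{X}_i^\top\mathsf{X}_j|\le C$. The only cosmetic difference is that the paper bounds the per-index factors uniformly and then uses $\sum_{i,j}|\mathsf{X}_i^\top\mathsf{X}_j|\le rC$ directly, while you interpose the AM--GM symmetrization $2|\xi_i\xi_j|\le\xi_i^2+\xi_j^2$ first; both routes (and, incidentally, the operator-norm route you dismiss, since the symmetric Gram matrix satisfies $\|G\|_{\mathrm{op}}\le\max_i\sum_j|G_{ij}|\le C$) give $L_\infty=\sqrt{C}$.
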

The proof of Theorem \ref{thm:logit} is given in Section \ref{appendix:logit}.
In particular, when the incoherence is $\tilde{O}(1)$, the constant $L_{\infty}$ does not grow with dimension: This includes the separable case when the $\mathsf{X}_i$ vectors are orthogonal and have unit magnitude. 
 It also includes, for instance, the non-separable case where $r=d$ and the $\mathsf{X}_i$ are unit vectors with the first $\sqrt{d}$ of the $\mathsf{X}_i$ vectors isotropically distributed, and the angle between any two of the remaining vectors is greater than $\frac{\pi}{2} - \frac{1}{d}$. 
 In both these examples the number of gradient evaluations required by UHMC under a standard normal prior is $\tilde{O}(d^{\frac{1}{4}} \epsilon^{-\frac{1}{2}})$, since $C$, $M$, $m^{-1}$, (and therefore $\kappa$ and $\tilde{L}_{\infty}$) are all $\tilde{O}(1)$. 
  When all $r=d$ vectors are isotropically distributed, we have $\sqrt{\tilde{L}_{\infty}} = \tilde{O}(d^{\frac{1}{8}})$ and require $\tilde{O}(d^{\frac{3}{8}} \epsilon^{-\frac{1}{2}})$ gradient evaluations. 
    In all these examples we therefore obtain an improvement over the existing $\tilde{O}(\sqrt{d}\epsilon^{-1})$ bounds of \cite{cheng2017underdamped, mangoubi2017rapid}.  

We also provide bounds for $m$ and $M$, showing that our Lipschitz gradient and strong convexity assumptions are satisfied for 
\be
M= \lambda_{\mathrm{max}}\left(\Sigma^{-1} + \sum_{k=1}^r \mathsf{X}_k \mathsf{X}_k^\top \right) \qquad
\textrm{ and }  \qquad
 m=\lambda_{\mathrm{min}}( \Sigma^{-1}),
 \ee
  respectively (Lemma \ref{lemma:logistic2} in Section \ref{appendix:logit}). 
 Finally, to obtain bounds in the cold start setting, we show that Assumption \ref{assumption:tail} is satisfied with
 \be
 b= 2 \mathsf{inc}(\nicefrac{\mathsf{X}_1}{\sqrt{\|\mathsf{X}_1\|_2}},\ldots \nicefrac{\mathsf{X}_r}{\sqrt{\|\mathsf{X}_r\|_2}})
 \ee
  if $\Sigma$ is a multiple of the identity matrix (Lemma \ref{lemma:logistic3} in Section \ref{appendix:logit}).

\section{Technical overview}\label{sec:proof}
For simplicity of exposition, in this proof overview we consider the special case where the HMC algorithm is given a warm start and where $\kappa=O(1)$; the general case is proved in Sections \ref{section:main} and \ref{appendix:Warm} (see Theorems \ref{thm:cold_formal} and \ref{thm:warm_formal}).
  Recall that Algorithm \ref{alg:Unadjusted} generates a Markov chain $X^\dagger$ which approximates the steps taken by the idealized HMC chain $X$.    
  Since the idealized HMC chain $X$ was shown to mix quickly in \cite{mangoubi2017rapid}, it is enough for us to bound the approximation error $\|X^\dagger_i - X_i\|_2 < \epsilon$ for all $i\leq \mathcal{I}$, where roughly speaking $\mathcal{I} = \Theta(\log \frac{1}{\epsilon})$ is a bound on the mixing time of $X$, if each HMC trajectory is run for time $T=\Theta(1)$ .

To prove the conjectured $O^\ast(d^{\frac{1}{4}})$ gradient evaluation bounds, it is enough to show that an error bound $\|X^\dagger_i - X_i\|_2 < \epsilon$ holds for a numerical timestep-size $\eta = \Omega^\ast(d^{-\frac{1}{4}})$, since the HMC algorithm computes  $\mathcal{I}= O^\ast(1)$ trajectories  and for this choice of $\eta$ each trajectory takes $\frac{T}{\eta} = O^\ast(d^{\frac{1}{4}})$ gradient evaluations to compute. 
 Our goal is therefore to show that the error $\|X^\dagger_i - X_i\|_2$ is bounded by $\epsilon$ for all $i\leq \mathcal{I}$ whenever $\eta = O^\ast(d^{-\frac{1}{4}})$.

The structure of our proof is as follows: We begin by bounding the local error of the leapfrog integrator accumulated at each numerical step (Step 1; see Lemma \ref{lemma:euler_error_summary} in Section \ref{appendix:SecondOrderEuler}). 
 Then, we use the fact that (from a warm start) the momentum of the HMC trajectories is roughly $N(0,I_d)$ to show that the continuous HMC trajectories of the idealized chain $X$ are unlikely to travel quickly in any of the  ``bad" directions $u_i$ specified in Assumption \ref{assumption:M3} (Step 2). 
  Specifically, we show that at every step $i$ with high probability the momentum of the HMC trajectories satisfy 
  \be
  \|p_t\|_{\infty, \mathsf{u}} = O(\log(d))
  \ee
   at every time $t\in[0,T]$ (Lemma \ref{lemma:good_summary_warm} in Section \ref{appendix:Warm}). 
   We then combine steps 1 and 2 to show that the numerical HMC chain also does not travel too quickly in any of the ``bad" directions, and use this fact together with our bounds in Step 2 to bound the global error of the numerical HMC trajectories (Step 3; roughly corresponds to Lemmas \ref{thm:Approx_integrator}, \ref{lemma:coupling} in Section \ref{section:main}). 
    Finally, we compute the value of $\eta$ needed to bound the error $\epsilon$, and use this to bound the number of gradient evaluations (Theorem \ref{thm:main} in Section \ref{section:main}).

Note that when proving bounds from a cold start (Theorem \ref{thm:cold_formal} in Section \ref{section:main}), we use the additional Assumption \ref{assumption:tail} instead of the invariant Gibbs distribution to control the behavior of the trajectories. 
 For simplicity of exposition, formal proofs are defered to Sections \ref{appendix:preliminaries} to \ref{appendix:logit}.

\paragraph{Step 1: Error bounds for leapfrog integrator.}
In this subsection we show how to use Assumption \ref{assumption:M3} to bound the error of the leapfrog integrator. 
 We are unaware of non-asymptotic second-order bounds for the leapfrog integrator, since the previous error bounds for leapfrog we are aware of only hold in the limit as the numerical step size $\eta$ goes to zero \cite{betancourt2014optimizing, mangoubi2017rapid, HMC_optimal_tuning, leimkuhler2004simulating, hairer2003geometric}. 
  For this reason, we prove new non-asymptotic polynomial time bounds for leapfrog here.
Key to our analysis is the observation that the position estimate 
\be
\mathrm{q}_{j+1}= \mathrm{q}_j+\eta \mathrm{p}_j -\frac{1}{2} \eta^2  {\nabla U}(\mathrm{q}_j)
\ee
 returned by the leapfrog integrator is exactly the second-order Taylor expansion for $q_\eta(\mathrm{q}_j,\mathrm{p}_j)$, and the momentum estimate 
 \be
 \mathrm{p}_{j+1}:= \mathrm{p}_j - \frac{1}{2}\eta {\nabla U}(\mathrm{q}_j) - \frac{1}{2}\eta {\nabla U}(\mathrm{q}_{j+1})
 \ee
  approximates (with third-order error) the second-order Taylor expansion for $p_\eta(\mathrm{q}_j,\mathrm{p}_j)$ in the following way:
\begin{equation} \label{eq:leapfrog3}
\mathrm{p}_{j+1} = \mathrm{p}_j - \eta {\nabla U}(\mathrm{q}_j)  - \frac{1}{2}\eta^2 \frac{{\nabla U}(\mathrm{q}_{j+1})  - {\nabla U}(\mathrm{q}_j)}{\eta} \approx   \mathrm{p}_j - \eta {\nabla U}(\mathrm{q}_j)  - \frac{1}{2}\eta^2 H_{\mathrm{q}_j} \mathrm{p}_j.
\end{equation}
The error in the Taylor expansion is due to the fact that the Hessian $H_{\mathrm{q}_j}$ is not constant over the trajectory.
Roughly, we can use Assumption \ref{assumption:M3} to bound the error in the Hessian at each time $0 \leq t \leq \eta$:
\begin{equation} \label{eq:H_bounds}
\|(H_{q_t} - H_{q_0})p_t\|_2 \leq L_{\infty} \|p_t \|_{\infty, \mathsf{u}} \times \|(q_t - q_0)\|_{\infty, \mathsf{u}}  \sqrt{r} \approx t  L_{\infty}\| p_t \|_{\infty, \mathsf{u}} ^2 \sqrt{r}.
\end{equation}

Using Equations \eqref{eq:leapfrog3} and \eqref{eq:H_bounds},  we get an error bound of roughly
\begin{equation} \label{eq:leapfrog4}
\|\mathrm{p}_{j+1}  - p_t(\mathrm{q}_j, \mathrm{p}_j)\|_2 \leq \eta^3 L_{\infty} \textstyle{\sup_{t\in[0,\eta]}} \| p_t(\mathrm{q}_j, \mathrm{p}_j)\|_{\infty, \mathsf{u}}^{2} \sqrt{r}
\end{equation}
(see Lemma \ref{lemma:euler_error_summary}). 
Finally, we note that bounding the error for the position variable $\|\mathrm{q}_{j+1}  - q_t(\mathrm{q}_j, \mathrm{p}_j)\|_2$ can be accomplished using standard techniques which do not require Assumption \ref{assumption:M3}.

\paragraph{Step 2: Bounding $\|p_t\|_{\infty, \mathsf{u}}$ for the idealized HMC chain.}
Since the error bound of the leapfrog integrator depends crucially on $\|p_t\|_{\infty, \mathsf{u}}$, our next task is to show that
\begin{equation} \label{eq:InfinityNorm}
\|p_t\|_{\infty, \mathsf{u}}\leq O(\mathrm{polylog}(d, \nicefrac{1}{\delta}) + \omega)
\end{equation}
 with high probability for the idealized HMC chain (this is roughly the conclusion of Lemma \ref{lemma:good_summary_warm}). 
  To do so, we use the fact that (from a warm start) the distribution of the momentum at any point on an HMC trajectory is roughly $N(0,I_d)$. 
   To show this, we would like to use the fact that the position and momentum of HMC trajectories from an idealized HMC chain started at the \textit{stationary} distribution $\pi$, are jointly distributed according to the Gibbs distribution $\propto e^{-U(q_t)}e^{-\frac{1}{2}\|p_t\|_2^{2}}$ at any given time $t$.

\textbf{2a: Bounding $\|p_t\|_{\infty, \mathsf{u}}$ from a stationary start}
We first consider a copy $\tilde{Y}_i$ of the idealized chain started at the stationary distribution $\tilde{Y}_0 \sim \pi$, and show that the momentum $p_t$ of its trajectories satisfies 
\be \|p_t\|_{\infty, \mathsf{u}} =O(\log(d))\ee
 at every time $t$ w.h.p.  Since the $\tilde{Y}$ chain is started at the stationary distribution, it remains stationary distributed at every step and the position $q_t$ and momentum $p_t$ of its trajectories have Gibbs distribution $\propto e^{-U(q_t)}e^{-\frac{1}{2}\|p_t\|_2^{2}}$ at any fixed time $t$. 
 Using the fact that for every bad direction $u_i$, $|u_i^\top p_t|$ is chi-distributed with 1 degree of freedom, we apply the Hanson-Wright inequality together with a union bound to show that
 \be \|u_i^\top p_t\|_{\infty, \mathsf{u}} =O(\log(\nicefrac{d}{\xi}))\ee
  at any fixed time $t$ with probability at least $1-\xi$ for any $\xi>0$. 
  However, our goal is to bound $\|u_i^\top p_t\|_{\infty, \mathsf{u}}$ simultaneously at every time $t$, not just at a fixed time. 
   Unfortunately, the trajectories are continuous paths, so we cannot directly apply a union bound to obtain a bound at every $t$. 
    To get around this problem, we consider $\mathcal{J} = \mathrm{poly}(\kappa,d)$ equally spaced timepoints on the interval $[0,T]$, and apply a union bound to show that 
    \be
    \|u_i^\top p_t\|_{\infty, \mathsf{u}} =\mathrm{polylog}(d, \nicefrac{1}{\delta})
    \ee
     with probability at least $1-\delta$. 
      We then use the ``conservation of energy" property   to bound the Euclidean norm of the momentum at every time on the trajectory, implying that the position and momentum do not change by more that $O(1)$ inside each time interval of length $\frac{1}{\mathcal{J}}$.
        This in turn implies that 
        \be
        \|p_t\|_{\infty, \mathsf{u}} =\mathrm{polylog}(d, \nicefrac{1}{\delta})
        \ee
         at every time $t\in[0,T]$.

\textbf{2b: Bounding $\|p_t\|_{\infty, \mathsf{u}}$ from a warm start}
Unfortunately, we cannot apply our results of step 2a directly since we are only assuming that $X_0$ has a warm start, not a stationary start. 
 That is, we only assume that 
 \be
 \|X_0-\tilde{Y}_0\|_2 < \omega
 \ee
  for some $\omega>0$, where $\tilde{Y}_0\sim \pi$ is at the stationary distribution. 
  To show that the trajectories of our warm-started chain also approximately satisfy this Gibbs distribution property, we couple the two copies $X$ and $\tilde{Y}$ of the idealized HMC chain by defining the $\tilde{Y}$ chain using the update rule $\tilde{Y}_{i+1} = q_T(\tilde{Y}_i, \mathbf{p}_i)$ with the same sequence of initial momenta $\mathbf{p}_1,\mathbf{p}_2,\ldots$ that were used to define the $X$ chain (Figure \ref{fig:coupling_Euler}). 
    Using the fact that the trajectories share the same initial momentum $\mathbf{p}_i$ at every step, we show that at every continuous time $t\in[0,T]$ the Euclidean distance between the position and momentum of the trajectories of the two chains remains bounded by $\omega$. 
     We therefore have \be \|p_t(X_i, \mathbf{p}_i) - p_t(\tilde{Y}_i, \mathbf{p}_i)\|_{\infty, \mathsf{u}} \leq \|p_t(X_i, \mathbf{p}_i) - p_t(\tilde{Y}_i, \mathbf{p}_i)\|_2 \leq \omega \ee.

\begin{figure}[H]
\begin{center}
\includegraphics[trim={0cm 4cm 1cm 3cm}, clip, scale=0.3]{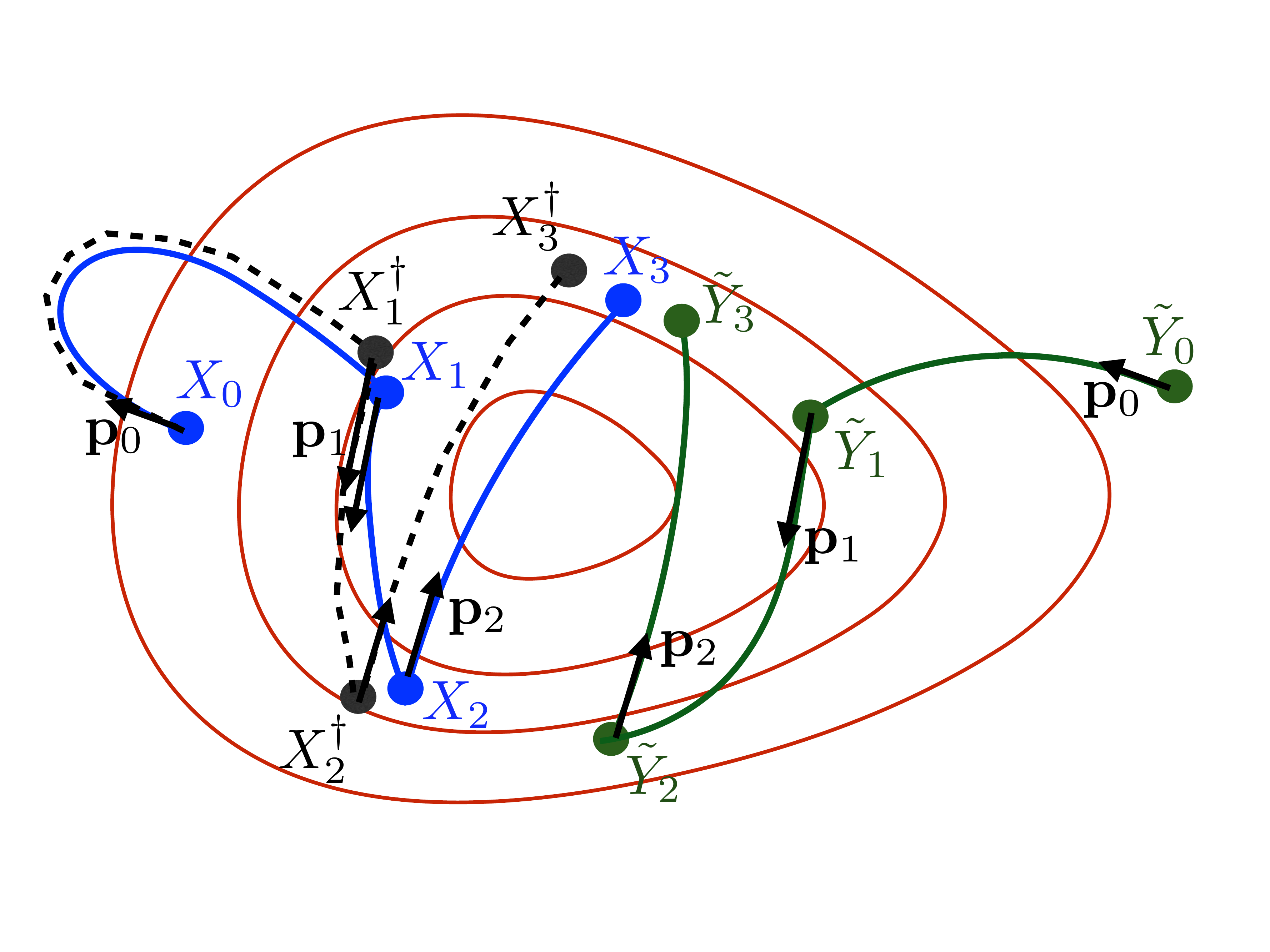}
\end{center}
\caption{ Coupling two copies $X$ (blue) and $\tilde{Y}$ (green) of idealized HMC by choosing the same momentum $\textbf{p}_i$ at every step. 
This coupling causes the distance between the chains to contract at each step if the potential $U$ (red level sets) is strongly convex.  
The Markov chain $X^\dagger$ (dark grey, dashed) is computed using a numerical integrator and approximates the idealized HMC Markov chain $X$.}\label{fig:coupling_Euler}
\end{figure}

\paragraph{Step 3: Bounding the global error and the number of gradient evaluations.}
So far, we have shown that the trajectories of the idealized HMC chain $X$ satisfy a bound on $\|p_t\|_{\infty, \mathsf{u}}$ (Equation \eqref{eq:InfinityNorm}). 
 If we can extend this bound to the numerical chain, we can apply it to Inequality \eqref{eq:leapfrog4}  to show that the error at each step is $O(\eta^3 L_{\infty} \sqrt{r})$ w.h.p.  To bound the global error, we use roughly the following inductive argument: inductively assume that the errors $\|\textrm{q}_j - q_{j\eta}(X_i, \mathbf{p}_i)\|_2$ and  $\|\textrm{p}_j - p_{j\eta}(X_i, \mathbf{p}_i)\|_2$ at numerical step $j$ are bounded by roughly $j\eta \times \epsilon $. 
  This implies that
\begin{equation}
\|\textrm{p}_j\|_{\infty, \mathsf{u}} \leq \|p_{j\eta}(X_i, \mathbf{p}_i)\|_{\infty, \mathsf{u}} + j\eta \epsilon \stackrel{{\scriptsize \textrm{Eq. }}\ref{eq:InfinityNorm}}{=} \mathrm{polylog}(d, \nicefrac{1}{\delta}) + \omega.
\end{equation}
Then one can use similar ``conservation of energy" arguments as in the previous section to show that 
\be \|p_t(\mathrm{q}_j, \mathrm{p}_j)\|_{\infty, \mathsf{u}} = \mathrm{polylog}(d, \nicefrac{1}{\delta})
\ee
 over the short time interval $t\in [0,\eta]$. 
 Plugging this bound into Inequality \eqref{eq:leapfrog4} allows us to bound the error accumulated at step $j$ by $O(\eta^3 L_{\infty} \sqrt{r})$, implying that the inductive assumption also holds for step $j+1$. (see Lemmas \ref{thm:Approx_integrator}, \ref{lemma:coupling} in Section \ref{section:main})

After $\frac{T}{\eta}$ numerical steps, the global error of each trajectory is therefore bounded by $T \times \eta^2 L_{\infty} \sqrt{r}$. 
 The error at step $i$ is bounded by $i\times  T \eta^2 L_{\infty} \sqrt{r}$. 
   Finally, we conclude that 
   \be
   \|X^{\dagger}_{i} - X_{i}\|_2 < \epsilon
   \ee
    for $i \leq \mathcal{I}= O(\log(\frac{1}{\epsilon}))$ whenever $\eta^{-1} = \tilde{\Theta}(r^{\frac{1}{4}} \sqrt{\tilde{L}_{\infty}} \epsilon^{-\frac{1}{2}} T)$.  Since the algorithm uses a total of $\frac{T}{\eta}$ numerical steps to compute $X_i$ for $i \leq \mathcal{I}$, for $T=\Theta(1)$ the number of gradient evaluations is roughly $\tilde{O}(r^{\frac{1}{4}} \sqrt{\tilde{L}_{\infty}} \epsilon^{-\frac{1}{2}} )$. 
     In the regime when $r=O(d)$, our bound on the number of gradient evaluations is therefore roughly $\tilde{O}(d^{\frac{1}{4}} \sqrt{\tilde{L}_{\infty}} \epsilon^{-\frac{1}{2}} )$.

\begin{remark} \label{remark:MassMatrix}
More generally, we can consider Hamiltonian trajectories with Hamiltonian $\mathcal{H}(q,p) = U(q) + \frac{1}{2}p^\top \Omega p$, where $\Omega$ is called the mass matrix. 
  In practice, one tunes the algorithm parameters by using a mass matrix $\Omega = cI_d$ for some constant $c>0$ and by choosing an appropriate integration time $T$ (as well as choosing other parameters such as numerical step-size). 
   Since using a mass matrix of the form $\Omega = cI_d$ for some constant $c$ is equivalent to rescaling $U$ by a constant factor and tuning the integration time $T$, we analyze the case $\Omega = I_d$ and then ``tune" our algorithm by setting $T = \frac{\sqrt{m}}{6}$, $M=1$ and $\kappa = \frac{1}{m}$ to determine the number of gradient evaluations.
  (A more general mass matrix $\Omega$ is equivalent to applying a pre-conditioner on $U$.)
\end{remark}

\section{Preliminaries}\label{appendix:preliminaries}

\subsection{Contraction of parallel Hamiltonian trajectories}

If the potential $U$ is strongly convex and gradient-Lipschitz, we have the following comparison theorem, originally proved in \citep{mangoubi2017rapid}:
\begin{theorem}\label{thm:contraction} [Theorem 6 of \citep{mangoubi2017rapid}] \label{ThmContractionConvexMainResult}
For any $0 \leq T \leq \frac{1}{2\sqrt{2}} \frac{\sqrt{m}}{M}$ and any $x,y, \mathsf{p} \in \mathbb{R}^d$ we have
\be \label{IneqContractLemmaMain}
\|q_T(y, \mathsf{p}) - q_T(x, \mathsf{p})\|_2 \leq \left[1- \frac{1}{8} (\sqrt{m}T)^2 \right] \times \|x-y\|_2.
\ee
\end{theorem}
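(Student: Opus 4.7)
The plan is to define the difference trajectory $z_t := q_t(y,\mathsf{p}) - q_t(x,\mathsf{p})$ with initial conditions $z_0 = y - x$ and $\dot{z}_0 = 0$ (since the two trajectories share the same initial momentum $\mathsf{p}$). By Hamilton's equations, $\ddot{z}_t = -[\nabla U(q_t(y,\mathsf{p})) - \nabla U(q_t(x,\mathsf{p}))]$, so $M$-Lipschitz gradient yields $\|\ddot{z}_t\|_2 \leq M \|z_t\|_2$ and $m$-strong convexity yields $z_t^{\top} \ddot{z}_t \leq -m\|z_t\|_2^{2}$. Setting $g(t) := \|z_t\|_2$, the goal is to show $g(T) \leq (1 - \tfrac{1}{8} m T^2)\, g(0)$.

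First I would establish a two-sided a priori bound on $g(t)$ for $t\in [0,T]$ by a bootstrap argument. Integrating $\ddot{z}_t$ twice in time and using $\dot{z}_0 = 0$, we obtain $\|z_t - z_0\|_2 \leq \tfrac{1}{2} M t^2 \sup_{[0,T]} g$. Taking the sup on both sides, the hypothesis $T^2 \leq m/(8 M^2) \leq 1/(8M)$ forces $\sup_{[0,T]} g \leq (1 + c)\, g(0)$, and plugging this back gives a matching lower bound $g(t) \geq (1 - c')\, g(0)$ on $[0,T]$, for explicit constants $c, c'$ close to zero.

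Next I would Taylor-expand $\phi(t) := g(t)^2$ around $t = 0$. Since $\dot{\phi}(0) = 2 z_0^{\top} \dot{z}_0 = 0$,
\begin{equation*}
\phi(T) \;=\; \phi(0) + \int_0^T (T-t)\, \ddot{\phi}(t)\, dt,
\quad \text{where} \quad \ddot{\phi}(t) = 2\|\dot{z}_t\|_2^2 + 2 z_t^{\top} \ddot{z}_t \leq 2\|\dot{z}_t\|_2^2 - 2 m\, g(t)^2.
\end{equation*}
The strong-convexity term $-2m \int_0^T (T-t)\, g(t)^2\, dt$, combined with the a priori lower bound from the previous step, contributes at most $-\Theta(m T^2 g(0)^2)$. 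The positive ``error'' term $2\|\dot{z}_t\|_2^2$ is controlled by $\|\dot{z}_t\|_2 \leq M t \sup_{[0,T]} g \lesssim M t\, g(0)$, so it contributes at most $\Theta(M^2 T^4 g(0)^2)$. The hypothesis $T \leq \tfrac{1}{2\sqrt{2}} \tfrac{\sqrt{m}}{M}$ is exactly what gives $M^2 T^4 \leq \tfrac{1}{8} m T^2$, so the contractive term dominates.

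Book-keeping the constants yields $\phi(T) \leq g(0)^2\bigl(1 - c'' m T^2\bigr)$ with $c'' \geq \tfrac{1}{4}$, and $\sqrt{1 - x} \leq 1 - x/2$ then gives $g(T) \leq g(0)\bigl(1 - \tfrac{1}{8} m T^2\bigr)$. The main obstacle is the constant bookkeeping: a crude a priori bound such as $g(t) \geq g(0)/2$ loses a factor in the final constant after taking the square root, so one must iterate the bootstrap once to sharpen it to $g(t) \geq (1 - O(MT^2))\, g(0)$ before the Taylor step; since $MT^2 \leq m/(8M) \ll 1$ under the hypothesis, this refinement is enough to recover the $\tfrac{1}{8}$ constant.
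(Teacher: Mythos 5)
The paper does not supply its own proof of this statement; it is imported verbatim as Theorem 6 of \cite{mangoubi2017rapid}, and the companion machinery reproduced here (the ODE comparison theorem, Lemma \ref{LemmaOdeComp}, and Lemma \ref{MS2}) indicates that the cited argument controls $\|z_t\|_2$ by comparison against an explicit linear ODE. Your proof is a correct, self-contained alternative that avoids the abstract comparison lemma: you Taylor-expand $\phi(t)=\|z_t\|_2^2$ to second order with integral remainder, exploit $\dot z_0=0$ to kill the linear term, use strong convexity to get $z_t^\top\ddot z_t\leq -m\|z_t\|_2^2$ as the dominant negative contribution, and control the positive $\|\dot z_t\|_2^2$ term together with the lower bound on $\|z_t\|_2$ by a one-pass Gronwall bootstrap from the initial data. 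I checked the arithmetic: the hypothesis gives $MT^2\leq\tfrac18$, so $\sup_{[0,T]}\|z_t\|_2\leq\tfrac{16}{15}\|z_0\|_2$ and $\inf_{[0,T]}\|z_t\|_2\geq\tfrac{14}{15}\|z_0\|_2$, which with $\int_0^T(T-t)t^2\,dt=\tfrac{T^4}{12}$ and $M^2T^4\leq\tfrac18 mT^2$ yields $\phi(T)\leq\|z_0\|_2^2\bigl(1+\tfrac{1}{48}(\tfrac{16}{15})^2 mT^2-(\tfrac{14}{15})^2 mT^2\bigr)\leq\|z_0\|_2^2(1-0.84\,mT^2)$; then $\sqrt{1-x}\leq 1-x/2$ gives $\|z_T\|_2\leq(1-0.42\,mT^2)\|z_0\|_2\leq(1-\tfrac18 mT^2)\|z_0\|_2$. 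Your remark about the bootstrap is accurate: the crude bound $\|z_t\|_2\geq\tfrac12\|z_0\|_2$ would only give a coefficient of roughly $0.11<\tfrac18$ after the square root, so the sharper lower bound $(1-O(MT^2))\|z_0\|_2$ is genuinely needed. Both routes are Gronwall-type analyses of the difference trajectory; yours is arguably more transparent and elementary, while the comparison-theorem route of \cite{mangoubi2017rapid} packages the same estimates in a form that also delivers the expansion bounds of Lemma \ref{MS2} for free.
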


\begin{lemma}[Lemma 2.2 of \citep{mangoubi2017rapid}] \label{MS2}
\item  Let $(\mathbf{q},\mathbf{p}), (\hat{\mathbf{q}},\hat{\mathbf{p}}) \in \mathbb{R}^{2d}$. 
 For $t \geq 0$ we have 
\be \label{thmboundsconc1}
\|q_t(\mathbf{q},\mathbf{p}) - q_t(\hat{\mathbf{q}},\hat{\mathbf{p}})\|_2 &\leq a_1e^{t\sqrt{M}} + a_2e^{-t\sqrt{M}}\\
\|p_t(\mathbf{q},\mathbf{p})-p_t(\hat{\mathbf{q}},\hat{\mathbf{p}})\|_2 &\leq a_1\sqrt{M}e^{t\sqrt{M}} - a_2\sqrt{M}e^{-t\sqrt{M}},
\ee 
where $a_1 = \frac{1}{2}(\|\mathbf{q}-\hat{\mathbf{q}}\|_2 + \frac{\|\mathbf{p}-\hat{\mathbf{p}}\|_2}{\sqrt{M}})$, $a_2 = \frac{1}{2}(\|\mathbf{q}-\hat{\mathbf{q}}\|_2 - \frac{\|\mathbf{p}-\hat{\mathbf{p}}\|_2}{\sqrt{M}})$.
\end{lemma}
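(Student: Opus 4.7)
The plan is to reduce the claim to a standard ODE comparison argument. Define the scalar functions $\phi(t) := \|q_t(\mathbf{q},\mathbf{p}) - q_t(\hat{\mathbf{q}},\hat{\mathbf{p}})\|_2$ and $\psi(t) := \|p_t(\mathbf{q},\mathbf{p}) - p_t(\hat{\mathbf{q}},\hat{\mathbf{p}})\|_2$. Observe that the target upper bounds $\tilde{\phi}(t) := a_1 e^{t\sqrt{M}} + a_2 e^{-t\sqrt{M}}$ and $\tilde{\psi}(t) := \sqrt{M}(a_1 e^{t\sqrt{M}} - a_2 e^{-t\sqrt{M}})$ are precisely the unique solution of the $2{\times}2$ linear system $\tilde{\phi}' = \tilde{\psi}$, $\tilde{\psi}' = M\tilde{\phi}$ with initial data $\tilde{\phi}(0) = \phi(0)$, $\tilde{\psi}(0) = \psi(0)$; this is verified from $a_1+a_2 = \|\mathbf{q}-\hat{\mathbf{q}}\|_2$ and $\sqrt{M}(a_1-a_2) = \|\mathbf{p}-\hat{\mathbf{p}}\|_2$. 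So the goal reduces to proving $\phi(t) \le \tilde{\phi}(t)$ and $\psi(t) \le \tilde{\psi}(t)$ for all $t \ge 0$.

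First I would derive differential inequalities for $\phi$ and $\psi$. Letting $\delta q_t, \delta p_t$ denote the position and momentum differences, Hamilton's equations give $\delta q_t' = \delta p_t$ and $\delta p_t' = -(\nabla U(q_t(\mathbf{q},\mathbf{p})) - \nabla U(q_t(\hat{\mathbf{q}},\hat{\mathbf{p}})))$. Differentiating $\|\delta q_t\|_2^2$ and applying Cauchy--Schwarz gives $\phi'(t) \le \|\delta p_t\|_2 = \psi(t)$, while the $M$-Lipschitz gradient assumption together with the analogous computation for $\|\delta p_t\|_2^2$ gives $\psi'(t) \le M\phi(t)$. These inequalities hold a.e., with the upper right Dini derivative handling the times where a norm vanishes.

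Next I would execute the comparison. Set $F := \tilde{\phi} - \phi$ and $G := \tilde{\psi} - \psi$, both zero at $t = 0$. The inequalities above yield $F' \ge G$ and $G' \ge MF$. Introduce the Lyapunov combinations $V := F + G/\sqrt{M}$ and $W := F - G/\sqrt{M}$. A direct computation gives $V' \ge \sqrt{M}\, V$ (so Gronwall yields $V(t) \ge 0$) and $W' + \sqrt{M}\, W \ge 0$ (so after multiplying by $e^{t\sqrt{M}}$, integration yields $W(t) \ge 0$). Adding $V+W \ge 0$ gives $F \ge 0$, i.e., $\phi \le \tilde{\phi}$, which is the first inequality of the lemma. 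The second inequality then follows by integrating $\psi'(s) \le M\phi(s) \le M\tilde{\phi}(s) = \tilde{\psi}'(s)$ from $0$ to $t$ and using $\psi(0) = \tilde{\psi}(0)$.

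The main subtlety to watch is that $a_2$ may be \emph{negative} (precisely when $\|\mathbf{p}-\hat{\mathbf{p}}\|_2/\sqrt{M} > \|\mathbf{q}-\hat{\mathbf{q}}\|_2$), so $\tilde{\psi}(0)$ need not exceed $\tilde{\psi}(t)$ for small $t$ and one cannot hope to bound $G$ directly by a single Gronwall inequality; this is exactly why the pair $(V,W)$ is set up to yield only $F \ge 0$ in one stroke, with $G \ge 0$ deferred to the integration step after $\phi \le \tilde{\phi}$ is established. A secondary technical point is the non-smoothness of $\|\cdot\|_2$ at the origin, handled uniformly by replacing ordinary derivatives with upper right Dini derivatives, which satisfy the same Gronwall comparison.
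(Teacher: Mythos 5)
Your proof is correct. Note, however, that the paper itself does not give a proof of this lemma: it is quoted verbatim as Lemma~2.2 of \citep{mangoubi2017rapid}, so there is no in-paper argument to compare against. The route taken in the cited paper (and the route this paper apparently has in mind, given that it also imports the vector ODE comparison theorem of \citep{comparison_ODE} as Lemma~\ref{LemmaOdeComp}) is to derive the system of Dini differential inequalities $\phi'\le\psi$, $\psi'\le M\phi$ exactly as you do, and then invoke the quasimonotone comparison theorem for the cooperative $2\times 2$ system $\tilde\phi'=\tilde\psi$, $\tilde\psi'=M\tilde\phi$ with matched initial data, which yields $\phi\le\tilde\phi$ and $\psi\le\tilde\psi$ simultaneously. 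Your version replaces that black-box comparison theorem by explicitly diagonalizing the comparison matrix $\bigl(\begin{smallmatrix}0&1\\M&0\end{smallmatrix}\bigr)$: the combinations $V=F+G/\sqrt M$ and $W=F-G/\sqrt M$ are precisely the coordinates in its eigenbasis, so the coupled inequalities decouple into two scalar Gronwall inequalities, and the second bound is recovered afterward by a single integration. This is more elementary and self-contained (no appeal to the vector comparison lemma), at the cost of being slightly longer and, as you note, of requiring the extra integration step because $V\ge0$ and $W\ge0$ give $F\ge0$ directly but not $G\ge0$. One minor remark on the ``subtlety'' you flag: the worry about $a_2<0$ making $\tilde\psi$ non-monotone is unfounded (since $\tilde\psi'(t)=M\tilde\phi(t)\ge0$ because $a_1\ge|a_2|$), but this does not affect the proof, and your underlying point --- that $G'\ge MF$ alone does not give $G\ge0$ until $F\ge0$ is established --- is the right one.
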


\begin{remark}
It is possible to show a contraction rate of $(1-\Theta(\frac{m}{M}))$ for the special case of multivariate Gaussians even though Theorem \ref{thm:contraction} gives a contraction rate of only $(1-\Theta((\frac{m}{M})^2))$ for more general strongly logconcave distributions.  Moreover, a random walk related to HMC but which runs on manifolds, the geodesic walk, can be shown using the Rauch comparison theorem from differential geometry to have mixing time of $\tilde{O}(\frac{\mathfrak{M}}{\mathfrak{m}})$ on any manifold with positive sectional curvature bounded above and below by $\mathfrak{M}, \mathfrak{m}>0$, respectively \cite{mangoubi2016rapid}.  Both of these facts suggest that it may be possible to strengthen the contraction bound in Theorem \ref{thm:contraction}.  An improvement in this contraction bound would directly imply stronger gradient evaluation bounds in our paper.
\end{remark}

\subsection{ODE comparison theorem}
We make frequent use of the following comparison theorem for systems of ordinary differential equations, a generalization of Gronwall's inequality originally stated in Proposition 1.4 of \citep{comparison_ODE}:
\begin{lemma} [ODE comparison theorem, Proposition 1.4 of \citep{comparison_ODE}] \label{LemmaOdeComp}
Let $U \subseteq \mathbb{R}^{n}$ and $I \subseteq \mathbb{R}$ be open, nonempty and connected. 
Let $f, \, g \, : \, I \times U \mapsto \mathbb{R}^{n}$ be continuous and locally Lipschitz maps. Then the following are equivalent:
\begin{enumerate}
\item For each pair $(t_{0},y)$, $(t_{0}, \overline{y})$ with $t_{0} \in I$ and $y, \overline{y} \in U$, the inequality $y \leq \overline{y}$ implies $z(t) \leq \overline{z}(t)$ for all $t \geq t_{0}$, where 
\be 
\frac{d}{dt} z &= f(t,z), \qquad z(t_{0}) = y \\
\frac{d}{dt} \overline{z} &= g(t,\overline{z}), \qquad \overline{z}(t_{0}) = \overline{y}.
\ee 
\item For all $i \in \{1,2,\ldots,n\}$ and all $t \geq t_{0}$, the inequality 
\be 
g(t, (\overline{x}[1], \ldots, \overline{x}[i-1], x[i], \overline{x}[i+1],\ldots,\overline{x}[n]))[i] \geq f(t, (x[1],\ldots,x[i-1],x[i],x[i+1],\ldots,x[n]))[i]
\ee 
holds whenever $\overline{x}[j] \geq x[j]$ for every $j \neq i$.
\end{enumerate}
\end{lemma}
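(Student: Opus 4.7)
The plan is to prove the two implications separately. The direction (1) $\Rightarrow$ (2) follows from a first-order Taylor expansion at $t_0$, while (2) $\Rightarrow$ (1) requires a first-crossing time argument combined with a perturbation trick to convert a non-strict comparison into a strict one.

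First I would handle (1) $\Rightarrow$ (2) by contraposition. Suppose (2) fails, so there exist $t_0 \in I$, an index $i$, and vectors $x, \overline{x} \in U$ with $\overline{x}[j] \geq x[j]$ for all $j \neq i$ such that
\[
  g(t_0, w)[i] < f(t_0, x)[i],
\]
where $w := (\overline{x}[1], \ldots, \overline{x}[i-1], x[i], \overline{x}[i+1], \ldots, \overline{x}[n])$. Taking $y := x$ and $\overline{y} := w$, we have $y \leq \overline{y}$ componentwise with equality in the $i$th coordinate. The corresponding solutions then satisfy $z[i](t_0) = \overline{z}[i](t_0)$ but $\dot{z}[i](t_0) > \dot{\overline{z}}[i](t_0)$, so $z[i](t) > \overline{z}[i](t)$ for $t$ slightly larger than $t_0$, contradicting (1).

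For the direction (2) $\Rightarrow$ (1), I would start with a first-crossing argument. Fix $y \leq \overline{y}$ and corresponding solutions $z, \overline{z}$, and suppose for contradiction there exist $t^{*} > t_0$ and an index $i$ with $z[i](t^{*}) > \overline{z}[i](t^{*})$. Letting $t^{*}$ be the infimum of such times, by continuity $z(t^{*}) \leq \overline{z}(t^{*})$ with $z[i](t^{*}) = \overline{z}[i](t^{*})$ for at least one $i$, and $\tfrac{d}{dt}(z[i] - \overline{z}[i])(t^{*}) \geq 0$. Applying (2) at $t^{*}$ with $x := z(t^{*})$ and $\overline{x} := \overline{z}(t^{*})$ gives $g(t^{*}, \overline{z}(t^{*}))[i] \geq f(t^{*}, z(t^{*}))[i]$, which translates to $\tfrac{d}{dt}(z[i] - \overline{z}[i])(t^{*}) \leq 0$.

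The main obstacle is that this yields only a non-strict inequality, so the first-crossing argument does not immediately produce a contradiction when trajectories touch tangentially. To break the tie I would use a standard perturbation: fix $\varepsilon > 0$, replace $g$ by $g_\varepsilon(t,x) := g(t,x) + \varepsilon \mathbf{1}$ and $\overline{y}$ by $\overline{y} + \varepsilon \mathbf{1}$, and let $\overline{z}_\varepsilon$ denote the corresponding solution. The perturbed system still satisfies (2), in fact now with strict inequality at coincidence points, so the first-crossing argument applied to $(z, \overline{z}_\varepsilon)$ yields a strict contradiction on any compact subinterval of $I \cap [t_0, \infty)$ and hence $z(t) < \overline{z}_\varepsilon(t)$ there. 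Finally, local Lipschitzness of $g$ combined with Gronwall's inequality gives $\overline{z}_\varepsilon \to \overline{z}$ uniformly on compacta as $\varepsilon \downarrow 0$, and passing to the limit yields $z(t) \leq \overline{z}(t)$ as required.
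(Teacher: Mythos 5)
The paper does not actually prove this lemma; it is stated verbatim as a cited result (Proposition 1.4 of the referenced ODE comparison paper) and used as a black box. So there is no internal proof for me to compare yours against.

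That said, your proof is essentially correct and is the standard Kamke--M\"uller argument for quasimonotone comparison theorems. Both directions check out: for (1)$\Rightarrow$(2) you correctly set $y=x$ and $\overline{y}=w$, which gives $y\leq\overline{y}$ with equality in coordinate $i$; if (2) failed then $\dot z[i](t_0)=f(t_0,x)[i] > g(t_0,w)[i]=\dot{\overline z}[i](t_0)$ and $z[i]$ overtakes $\overline{z}[i]$ immediately, contradicting (1). For (2)$\Rightarrow$(1) you correctly identify the tangential-touching obstruction and resolve it by the $+\varepsilon\mathbf{1}$ perturbation of $g$ (plus a harmless shift of $\overline{y}$), so that the quasimonotone inequality becomes strict at a first-crossing time $t^{*}$; since at $t^{*}$ the critical coordinates satisfy $z[i](t^{*})=\overline{z}_\varepsilon[i](t^{*})$ the mixed vector $w$ in (2) reduces to $\overline{z}_\varepsilon(t^{*})$, and the strict inequality $\dot{\overline z}_\varepsilon[i](t^{*})>\dot z[i](t^{*})$ contradicts the nonnegativity of the left derivative of $z[i]-\overline z_\varepsilon[i]$ at $t^{*}$. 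The passage $\varepsilon\downarrow 0$ via Gronwall and local Lipschitzness is standard. Two small points worth flagging explicitly: you should note that in (1)$\Rightarrow$(2) the mixed vector $w$ lies in $U$ (this is implicit in (2) being a statement about points where $f,g$ are defined), and in (2)$\Rightarrow$(1) one should restrict to the common maximal interval of existence of $z$ and $\overline z_\varepsilon$, with $\overline z_\varepsilon$ remaining in $U$ for $\varepsilon$ small by continuous dependence. Neither affects the correctness of the argument.
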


\subsection{Distributions and mixing}

We denote the distribution of a random variable $X$ by $\mathcal{L}(X)$ and write $X \sim \nu$ as a shorthand for $\mathcal{L}(X) = \nu$. 
For two probability measures $\nu_1, \nu_2$ on $\mathbb{R}^d$, define the \textit{Prokhorov distance} 
\be 
\mathsf{Prok}(\nu_{1},\nu_{2}) = \inf \{ \epsilon > 0 \, : \, \forall \text{ measurable } A \subseteq \mathbb{R}^{d}, \,  \nu_{1}(A) \leq \nu_{2}(A_{\epsilon}) + \epsilon \text{ and }  \nu_{2}(A) \leq \nu_{1}(A_{\epsilon}) + \epsilon  \},
\ee 
where  $A_{\epsilon} = \{ x \in \mathbb{R}^{d} \, : \, \inf_{y \in A} \, \| x-y\|_2 < \epsilon \}$.

\section{Toy integrator and Markov chain}\label{appendix:toy}

Define the following shorthand for the leapfrog integrator:
\be
q^\bigcirc_{\eta}(\mathbf{q},\mathbf{p}) &:= \mathbf{q}+\eta \mathbf{p} -\frac{1}{2} \eta^2  {\nabla U}(\mathbf{q}),\\
p^\bigcirc_{\eta}(\mathbf{q},\mathbf{p}) &:= \mathbf{p} - \frac{1}{2}\eta {\nabla U}(\mathbf{q}) - \frac{1}{2}\eta {\nabla U}\left(q^\bigcirc_{\eta}(\mathbf{q},\mathbf{p}) \right).
\ee

Define the ``good" set $\mathsf{G}$ to be $\mathsf{G} := \{(q,p) \in \mathbb{R}^d\times \mathbb{R}^d : |u_i^\top p| < \mathsf{g}_1  \, \forall \, 1\leq i \leq r,  \|q\|_2 < \mathsf{g}_2, \|p\|_2 < \mathsf{g}_3\}$ for some (yet to be defined) $\mathsf{g}_1,\mathsf{g}_2, \mathsf{g}_3>0$.  
Define the $\epsilon$-thickening of $\mathsf{G}$ to be $\mathsf{G}_\epsilon:= \{(q,p) : \|q-x\|_2 < \epsilon,  \|q-x\|_2 < \epsilon \textrm{ for some }(x,y) \in \mathsf{G}\}$.  
   We define the following local toy integrator:
 \begin{definition} \label{defn:toy}
Define the local toy integrator $\diamondsuit$ by the equations
\be \label{eq:toy}
(q^\diamondsuit_{\eta}(\mathbf{q},\mathbf{p}), p^\diamondsuit_{\eta}(\mathbf{q},\mathbf{p})) := \begin{cases}q^\bigcirc_{\eta}(\mathbf{q},\mathbf{p}), p^\bigcirc_{\eta}(\mathbf{q},\mathbf{p}) \quad \quad \quad \quad  \textrm{      if         } (q_t(\mathbf{q},\mathbf{p}), p_t(\mathbf{q},\mathbf{p})) \in \mathsf{G}_\epsilon \forall t \in [0,\eta]\\
q_{\eta}(\mathbf{q},\mathbf{p}), p_{\eta}(\mathbf{q},\mathbf{p})  \quad \quad \quad \, \, \, \textrm{   otherwise.}
\end{cases}
\ee
\end{definition}
Next we define the (global) toy integrator $(q^{\diamonddiamond \eta}_t, p^{\diamonddiamond \eta}_t)$ using the local toy integrator $\diamondsuit$:

\begin{algorithm}[H]
\caption{Toy integrator \label{alg:integrator_toy}}
\textbf{parameters:} Potential $U: \mathbb{R}^d \rightarrow \mathbb{R}$, integration time $T>0$, discretization level $\eta>0$.\\
 \textbf{input:}   Initial position $\mathbf{q} \in \mathbb{R}^d$, initial momentum $\mathbf{p} \in \mathbb{R}^d$.\\
  \textbf{output:} $q^{\diamonddiamond \eta}_{T}(\mathbf{q},\mathbf{p})$.
\begin{algorithmic}[1]
 \State Set $q^0=\mathbf{q}$ and $p^0 = \mathbf{p}$.
\For{$i=0$ to $\lceil \frac{T}{\eta}-1 \rceil$}
\\ Set $q^{i+1} = q^\diamondsuit_{\eta}(q^i,p^i)$ and $p^{i+1} = p^\diamondsuit_{\eta}(q^i,p^i)$.
\EndFor
\\ Set $q^{\diamonddiamond \eta}_{T}(\mathbf{q},\mathbf{p}) = q^{i+1}$.
\end{algorithmic}
\end{algorithm}

Set $\hat{X}_{0} = X_0$ and define inductively the toy Markov chain
\be 
\hat{X}_{i+1} &= q_{T}^{\diamonddiamond \eta}(\hat{X}_{i}, \mathbf{p}_{i}) \qquad \forall i \in \mathbb{Z}^*.
\ee 

\section{Lyapunov functions for idealized HMC from a cold start} \label{section:Lyapunov} 

The main purpose of the lemmas in this section is to bound the behavior of the continuous Hamiltonian trajectories to show that they are unlikely to travel too quickly in the ``bad" directions $\mathsf{u}$ specified in Assumption \ref{assumption:M3}. 
  Since we are dealing with the cold start situation in this section, for the rest of this section we assume that Assumption \ref{assumption:tail} holds as well.

We begin by bounding the trajectory position and momentum in any direction $u\in \mathsf{u}$ in terms of the initial position and momentum. 
 For every $u \in \mathbb{S}^d$, define $\mathfrak{q}_t^u := |u^\top q_t|$, $\mathfrak{p}_t^u := |u^\top p_t|$.

\begin{lemma} \label{lemma:upper_summary}
Let $u \in \mathsf{u}$, and define $k_1 := \frac{1}{2}\mathfrak{q}_0^u + \frac{1}{2\sqrt{M}}\mathfrak{p}_0^u + \frac{b}{2M}$ and $k_2 := \frac{1}{2}\mathfrak{q}_0^u - \frac{1}{2\sqrt{M}}\mathfrak{p}_0^u + \frac{b}{2M}$.  Then
$\mathfrak{q}_t^u \leq  k_1 e^{t\sqrt{M}}+ k_2 e^{-t\sqrt{M}} - \frac{b}{M}$ and $\mathfrak{p}_t^u \leq k_1\sqrt{M} e^{t\sqrt{M}} - k_2 \sqrt{M} e^{-t\sqrt{M}}$.
\end{lemma}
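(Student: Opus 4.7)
The plan is to reduce the lemma to an application of the ODE comparison theorem (Lemma \ref{LemmaOdeComp}) on the two-dimensional system governing $(\mathfrak{q}_t^u, \mathfrak{p}_t^u)$. First, by translating coordinates I will assume without loss of generality that $x^\star = 0$; this leaves Hamilton's equations and Assumption \ref{assumption:tail} unchanged up to the obvious relabeling $q \mapsto q - x^\star$, and justifies identifying $|u^\top q_t|$ with $|u^\top(q_t - x^\star)|$ in the statement.

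Next, I will derive differential inequalities for $\mathfrak{q}_t^u$ and $\mathfrak{p}_t^u$. Projecting Hamilton's equations onto $u$ gives $\frac{d}{dt}(u^\top q_t) = u^\top p_t$ and $\frac{d}{dt}(u^\top p_t) = -u^\top \nabla U(q_t)$. Assumption \ref{assumption:tail} applied to each $u \in \mathsf{u}$ implies (since both $m,M>0$) that $|u^\top \nabla U(q_t)| \leq M \mathfrak{q}_t^u + b$. Because $|\cdot|$ is absolutely continuous, the chain rule holds almost everywhere and yields
\begin{equation*}
\tfrac{d}{dt}\mathfrak{q}_t^u \;\leq\; \mathfrak{p}_t^u, \qquad \tfrac{d}{dt}\mathfrak{p}_t^u \;\leq\; M\,\mathfrak{q}_t^u + b.
\end{equation*}
At the (at most countable) set of times where $u^\top q_t$ or $u^\top p_t$ vanishes non-trivially, a standard smoothing argument (replacing $|y|$ by $\sqrt{y^2+\varepsilon}$ and sending $\varepsilon\to 0^+$) justifies the above inequalities everywhere.

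Now I set $\phi(t) := k_1 e^{t\sqrt{M}} + k_2 e^{-t\sqrt{M}} - b/M$ with $k_1, k_2$ as in the lemma. A direct computation shows $\phi(0) = \mathfrak{q}_0^u$, $\phi'(0) = \mathfrak{p}_0^u$, and $\phi'' = M\phi + b$; hence $(\bar q, \bar p) := (\phi, \phi')$ is the solution of the linear system $\dot{\bar q} = \bar p,\ \dot{\bar p} = M \bar q + b$ with initial data matching $(\mathfrak{q}_0^u, \mathfrak{p}_0^u)$. The right-hand side $f(t,\bar q,\bar p) = (\bar p, M\bar q + b)$ is cooperative: $f_1$ is increasing in $\bar p$ and $f_2$ is increasing in $\bar q$, which is exactly the quasi-monotonicity condition (item 2) in Lemma \ref{LemmaOdeComp}. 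Applying the comparison theorem to $(\mathfrak{q}_t^u, \mathfrak{p}_t^u)$ versus $(\phi,\phi')$ with identical initial conditions then yields $\mathfrak{q}_t^u \leq \phi(t)$ and $\mathfrak{p}_t^u \leq \phi'(t)$, which are precisely the stated bounds.

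\textbf{Main obstacle.} The only subtlety is that Lemma \ref{LemmaOdeComp} is stated for equalities $\dot z = f(t,z)$, whereas $(\mathfrak{q}_t^u, \mathfrak{p}_t^u)$ only satisfies differential inequalities. The standard workaround is to perturb the majorant system to $\dot{\bar z} = f(t,\bar z) + \varepsilon\mathbf{1}$ with strictly larger initial data, apply the comparison in equality form to obtain strict domination, and then pass to the limit $\varepsilon \to 0$ using continuous dependence of the linear ODE on initial data and the forcing. This, together with the absolute-value smoothing noted above, is the only non-routine piece of the argument.
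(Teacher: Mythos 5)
Your proof takes the same route as the paper's: project Hamilton's equations onto $u$, use Assumption \ref{assumption:tail} to obtain the two-dimensional differential inequality $\dot{\mathfrak{q}}^u_t \leq \mathfrak{p}^u_t,\ \dot{\mathfrak{p}}^u_t \leq M\mathfrak{q}^u_t + b$, and compare against the explicit solution $(\phi,\phi')$ of the linearized system via Lemma \ref{LemmaOdeComp}. You are in fact somewhat more careful than the paper about three small points it elides — the tacit normalization $x^\star = 0$ needed to pass from $|u^\top(q_t-x^\star)|$ in Assumption \ref{assumption:tail} to $\mathfrak{q}_t^u=|u^\top q_t|$, the non-differentiability of $|\cdot|$ at zero (the paper even writes $\dot{\mathfrak{q}}^u_t=\mathfrak{p}^u_t$ with equality where only $\leq$ holds), and the fact that Lemma \ref{LemmaOdeComp} is stated for pairs of ODE systems rather than for a differential inequality against an ODE — and each of your fixes is correct.
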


\begin{proof}
Hamilton's equations are
\be \label{eq:hamilton1}
\frac{\mathrm{d}}{\mathrm{d} t} q_t & =  p_t\\
\frac{\mathrm{d}}{\mathrm{d} t} p_t &= -{\nabla U}(q_t).
\ee

Hence,
\be \label{eq:hamilton2}
u^\top \frac{\mathrm{d}}{\mathrm{d} t} q_t & = u^\top p_t\\
u^\top \frac{\mathrm{d}}{\mathrm{d} t} p_t &= -u^\top \left({\nabla U}(q_t)\right).
\ee

Equations \ref{eq:hamilton2} together with Assumption \ref{assumption:tail} imply that

\be \label{eq:hamilton3}
\frac{\mathrm{d}}{\mathrm{d} t} \mathfrak{q}_t^u &=  \mathfrak{p}_t^u\\
\frac{\mathrm{d}}{\mathrm{d} t} \mathfrak{p}_t^u &\leq M \mathfrak{q}_t^u + b.
\ee

Let $(\mathfrak{q}^{u \star}, \mathfrak{p}^{u \star})$ be the solutions to the system of differential equations
\be \label{eq:hamilton4}
\frac{\mathrm{d}}{\mathrm{d} t} \mathfrak{q}^{u \star}_t &=  \mathfrak{p}^{u \star}_t\\
\frac{\mathrm{d}}{\mathrm{d} t} \mathfrak{p}^{u \star}_t &= M \mathfrak{q}^{u \star}_t + b.
\ee
with initial conditions $(\mathfrak{q}^{u \star}_0, \mathfrak{p}^{u \star}_0) = (\mathfrak{q}_0^u, \mathfrak{p}_0^u)$.
Solving, we get
\be
\mathfrak{q}^{u \star}_t &= \frac{-b}{M} + k_1 e^{t\sqrt{M}}+ k_2 e^{-t\sqrt{M}}\\
\mathfrak{p}^{u \star}_t &= k_1\sqrt{M} e^{t\sqrt{M}} - k_2 \sqrt{M} e^{-t\sqrt{M}}
\ee

where
\be
\frac{-b}{M} + k_1+ k_2 &= \mathfrak{q}_0^u\\
k_1\sqrt{M} - k_2 \sqrt{M} &= \mathfrak{p}_0^u.
\ee
Solving for $k_1$ and $k_2$, we get
\be
k_1 &= \frac{1}{2}\mathfrak{q}_0^u + \frac{1}{2\sqrt{M}}\mathfrak{p}_0^u + \frac{b}{2M}\\
k_2 &= \frac{1}{2}\mathfrak{q}_0^u - \frac{1}{2\sqrt{M}}\mathfrak{p}_0^u + \frac{b}{2M}.
\ee
But by the ODE comparison theorem, we have $\mathfrak{q}_t^u \leq \mathfrak{q}^{u \star}_t$ and $\mathfrak{p}_t^u \leq \mathfrak{p}^{u \star}_t$ for all $t \geq 0$, which completes the proof.
\end{proof}

Unfortunately, applying Lemma \ref{lemma:upper_summary} at each step $i$ is not enough to show a $\mathrm{poly}(m,M,b)$ bound on $|u^\top p_t(X_i,\mathbf{p}_i)|$, since Lemma \ref{lemma:upper_summary} allows $|u^{\top}X_i|$ to grow by a constant factor at each step $i$. 
  Rather, applying Lemma \ref{lemma:upper_summary} alone would show a bound that grows exponentially with $i$. 
   To get around this problem, we first use Lemma \ref{lemma:upper_summary} to prove a lower bound on the position $|u^\top q_t(X_i,\mathbf{p}_i)|$ (Lemma \ref{lemma:lower}), and then use this lower bound to prove an improved upper bound on both the momentum  $|u^\top p_t(X_i,\mathbf{p}_i)|$ (Lemma  \ref{lemma:upper2_p}) and position $|u^\top q_t(X_i,\mathbf{p}_i)|$ (Lemma \ref{lemma:upper2_summary}).
\begin{lemma} \label{lemma:lower}
Let $u \in \mathsf{u}$ and suppose that $T \leq \frac{1}{6\sqrt{M}}$.  Then for every $0 \leq t \leq T$, we have
\be
\mathfrak{q}_t^u \geq \frac{3}{4}\mathfrak{q}_0^u - \frac{1}{6\sqrt{M}} \left(\frac{3}{2}\mathfrak{p}_0^u + \frac{3}{2}\frac{b}{\sqrt{M}} \right).
\ee
\end{lemma}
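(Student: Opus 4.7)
The plan is to integrate Hamilton's equation $\frac{d}{dt}(u^\top q_t) = u^\top p_t$ over the interval $[0,t]$, apply the triangle inequality to convert the resulting identity into a lower bound on $\mathfrak{q}_t^u$, and then control the integral of $\mathfrak{p}_s^u$ by the upper bound from Lemma \ref{lemma:upper_summary}. The assumption $T \leq \tfrac{1}{6\sqrt{M}}$ is there precisely to keep the hyperbolic factors $\cosh(t\sqrt{M})-1$ and $\sinh(t\sqrt{M})$ small enough for the bound to close.

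First I write $u^\top q_t - u^\top q_0 = \int_0^t u^\top p_s\, ds$ and take absolute values using $|u^\top q_t| \geq |u^\top q_0| - |u^\top q_t - u^\top q_0|$ to obtain
$$\mathfrak{q}_t^u \;\geq\; \mathfrak{q}_0^u \;-\; \int_0^t \mathfrak{p}_s^u\, ds.$$
Next, I insert the bound $\mathfrak{p}_s^u \leq k_1 \sqrt{M}\,e^{s\sqrt{M}} - k_2 \sqrt{M}\,e^{-s\sqrt{M}}$ from Lemma \ref{lemma:upper_summary} and integrate term by term. After substituting the definitions of $k_1$ and $k_2$ and collecting, the integral telescopes into $\cosh$ and $\sinh$ pieces:
$$\int_0^t \mathfrak{p}_s^u\, ds \;\leq\; \mathfrak{q}_0^u\bigl[\cosh(t\sqrt{M})-1\bigr] \;+\; \tfrac{1}{\sqrt{M}}\mathfrak{p}_0^u\sinh(t\sqrt{M}) \;+\; \tfrac{b}{M}\bigl[\cosh(t\sqrt{M})-1\bigr].$$
For $t\sqrt{M} \leq \tfrac{1}{6}$ one has $\cosh(\tfrac{1}{6})-1 \approx 0.014 \leq \tfrac{1}{4}$ and $\sinh(\tfrac{1}{6}) \approx 0.167 \leq \tfrac{1}{4}$, so the whole integral is bounded by $\tfrac{1}{4}\mathfrak{q}_0^u + \tfrac{1}{4\sqrt{M}}\mathfrak{p}_0^u + \tfrac{b}{4M}$. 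Plugging back in and rewriting $\tfrac{1}{4\sqrt{M}} = \tfrac{1}{6\sqrt{M}}\cdot\tfrac{3}{2}$ yields exactly the claimed lower bound.

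The only subtle point is that the Lemma \ref{lemma:upper_summary} bound for $\mathfrak{p}_s^u$ already contains a term proportional to $\mathfrak{q}_0^u$, so integrating it produces a $\mathfrak{q}_0^u\bigl[\cosh(t\sqrt{M})-1\bigr]$ contribution that must be absorbed into the $\tfrac{3}{4}\mathfrak{q}_0^u$ on the right-hand side. The hypothesis $T \leq \tfrac{1}{6\sqrt{M}}$ is calibrated so that $\cosh(T\sqrt{M})-1 \leq \tfrac{1}{4}$, with comfortable slack, which is exactly what makes the coefficient on $\mathfrak{q}_0^u$ fall out correctly. Beyond that, the argument is a routine integration of exponentials and requires no regularity tool beyond the preceding lemma.
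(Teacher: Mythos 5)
Your proof is correct and follows essentially the same route as the paper's: both start from $\mathfrak{q}_t^u \geq \mathfrak{q}_0^u - \int_0^t \mathfrak{p}_s^u\,ds$ and plug in the exponential momentum bound from Lemma~\ref{lemma:upper_summary}. The only cosmetic difference is that the paper bounds the integrand uniformly by $3k_1\sqrt{M}$ (using $e^{t\sqrt M}\le\frac{3}{2}$ for $t\sqrt M\le\frac16$) and multiplies by $T$, whereas you integrate the exponentials exactly to obtain $\cosh$/$\sinh$ terms and then bound those numerically; both routes produce the identical final inequality with room to spare.
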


\begin{proof}
Since $T \leq \frac{1}{6\sqrt{M}}$, by Lemma \ref{lemma:upper_summary} we have
\be \label{eq:upper3}
\mathfrak{p}_t^u &\leq 2 k_1\sqrt{M} e^{t\sqrt{M}}  \leq 3 k_1\sqrt{M} =\frac{3}{2}\sqrt{M} \mathfrak{q}_0^u + \frac{3}{2}\mathfrak{p}_0^u + \frac{3}{2}\frac{b}{\sqrt{M}}
\ee

Thus,
\be
\mathfrak{q}_t^u &\geq \mathfrak{q}_0^u - T \sup_{0\leq t \leq T} \mathfrak{p}_t^u\\
&\geq   \mathfrak{q}_0^u - \frac{1}{6\sqrt{M}} \sup_{0\leq t \leq T} \mathfrak{p}_t^u\\
&\geq   \mathfrak{q}_0^u - \frac{1}{6\sqrt{M}} \left(\frac{3}{2}\sqrt{M} \mathfrak{q}_0^u + \frac{3}{2}\mathfrak{p}_0^u + \frac{3}{2}\frac{b}{\sqrt{M}} \right)\\
&=   \frac{3}{4}\mathfrak{q}_0^u - \frac{1}{6\sqrt{M}} \left(\frac{3}{2}\mathfrak{p}_0^u + \frac{3}{2}\frac{b}{\sqrt{M}} \right).
\ee
\end{proof}

We can now use Lemma \ref{lemma:lower} to show an improved upper bound on the momentum of the Hamiltonian trajectory:
\begin{lemma} \label{lemma:upper2_p}
Let $u \in \mathsf{u}$.
Suppose that $T \leq \frac{1}{6\sqrt{M}}$.  Then for every $0 \leq t \leq T$, we have
\be\label{eq:a8}
\mathfrak{p}_t^u &\leq \frac{3}{2}\sqrt{M} \mathfrak{q}_0^u + \frac{3}{2}\mathfrak{p}_0^u + \frac{3}{2}\frac{b}{\sqrt{M}}.
\ee
\end{lemma}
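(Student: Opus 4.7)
The plan is to observe that Lemma \ref{lemma:upper2_p} is essentially a direct consequence of the upper bound on $\mathfrak{p}_t^u$ already supplied by Lemma \ref{lemma:upper_summary}, combined with a crude exponential estimate using the constraint $T \leq \tfrac{1}{6\sqrt{M}}$. So the proof will be short and routine, with the only non-obvious step being how to dispose of the $-k_2\sqrt{M}e^{-t\sqrt{M}}$ term (which can be negative).

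First I would invoke Lemma \ref{lemma:upper_summary} to obtain $\mathfrak{p}_t^u \leq k_1\sqrt{M}e^{t\sqrt{M}} - k_2\sqrt{M}e^{-t\sqrt{M}}$, where $k_1=\tfrac{1}{2}\mathfrak{q}_0^u+\tfrac{1}{2\sqrt{M}}\mathfrak{p}_0^u+\tfrac{b}{2M}$ and $k_2=\tfrac{1}{2}\mathfrak{q}_0^u-\tfrac{1}{2\sqrt{M}}\mathfrak{p}_0^u+\tfrac{b}{2M}$. Note that $k_1 \geq 0$ since $\mathfrak{q}_0^u,\mathfrak{p}_0^u,b\geq 0$. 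I would then split cases on the sign of $k_2$: if $k_2 \geq 0$, the second term is nonpositive and we get $\mathfrak{p}_t^u \leq k_1\sqrt{M}e^{t\sqrt{M}}$; if $k_2<0$ then $|k_2| = \tfrac{1}{2\sqrt{M}}\mathfrak{p}_0^u - \tfrac{1}{2}\mathfrak{q}_0^u - \tfrac{b}{2M} \leq \tfrac{1}{2\sqrt{M}}\mathfrak{p}_0^u \leq k_1$, so $-k_2\sqrt{M}e^{-t\sqrt{M}}\leq k_1\sqrt{M} \leq k_1\sqrt{M}e^{t\sqrt{M}}$. In either case we obtain the uniform bound $\mathfrak{p}_t^u \leq 2k_1\sqrt{M}e^{t\sqrt{M}}$, which is precisely the estimate used inside the proof of Lemma \ref{lemma:lower}.

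Next I would use the hypothesis $t \leq T \leq \tfrac{1}{6\sqrt{M}}$ to bound $e^{t\sqrt{M}} \leq e^{1/6} < \tfrac{3}{2}$, giving $\mathfrak{p}_t^u \leq 3k_1\sqrt{M}$. Finally, substituting the explicit expression for $k_1$ yields
\[
\mathfrak{p}_t^u \leq 3\sqrt{M}\left(\tfrac{1}{2}\mathfrak{q}_0^u+\tfrac{1}{2\sqrt{M}}\mathfrak{p}_0^u+\tfrac{b}{2M}\right) = \tfrac{3}{2}\sqrt{M}\mathfrak{q}_0^u+\tfrac{3}{2}\mathfrak{p}_0^u+\tfrac{3}{2}\tfrac{b}{\sqrt{M}},
\]
which is the claimed inequality \eqref{eq:a8}. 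The only conceptual point to check is the sign-split on $k_2$; everything else is a one-line substitution. Because this lemma literally isolates the intermediate estimate \eqref{eq:upper3} appearing in the proof of Lemma \ref{lemma:lower}, I do not anticipate any genuine obstacle.
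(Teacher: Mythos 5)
Your proposal is correct and follows the same route as the paper: the paper's proof of this lemma is literally a one-line reference back to Inequality \eqref{eq:upper3}, which reads $\mathfrak{p}_t^u \leq 2k_1\sqrt{M}e^{t\sqrt{M}} \leq 3k_1\sqrt{M} = \tfrac{3}{2}\sqrt{M}\mathfrak{q}_0^u + \tfrac{3}{2}\mathfrak{p}_0^u + \tfrac{3}{2}\tfrac{b}{\sqrt{M}}$, exactly the chain you reproduce. Your sign-split on $k_2$ just makes explicit the justification for the first inequality $\mathfrak{p}_t^u \leq 2k_1\sqrt{M}e^{t\sqrt{M}}$, which the paper states without comment.
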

\begin{proof}
The proof follows from Inequality \eqref{eq:upper3} in the proof of Lemma \ref{lemma:lower}.
\end{proof}

We use Lemma \ref{lemma:lower} again, this time to show an improved upper bound on the \textit{position} of the Hamiltonian trajectory:

\begin{lemma} \label{lemma:upper2_summary}
Let $u \in \mathsf{u}$.
Suppose that $T \leq \frac{1}{6\sqrt{M}}$.  Then for every $0 \leq t \leq T$, we have
$\mathfrak{q}_t^u \leq \mathfrak{q}_0^u + t\times \mathfrak{p}_0^u +  \frac{1}{2}t^2 \times \left(-\frac{3m}{4} \mathfrak{q}_0^u + \frac{m}{4\sqrt{M}}\mathfrak{p}_0^u + b+\frac{m}{4M}b\right).$
\end{lemma}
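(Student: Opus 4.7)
The plan is to combine a second-order Taylor expansion in time for $\mathfrak{q}_t^u$ with the matching lower bound from Lemma \ref{lemma:lower}, using Assumption \ref{assumption:tail} to control the ``acceleration'' $\frac{d^2}{dt^2}\mathfrak{q}_t^u$ in terms of $-m\mathfrak{q}_t^u$ plus an additive $b$-term. Concretely, set $a_t := u^\top q_t$ so that Hamilton's equations give $a_t' = u^\top p_t$ and $a_t'' = -u^\top {\nabla U}(q_t)$, and note that $\mathfrak{q}_t^u = |a_t|$ and $\mathfrak{p}_t^u = |u^\top p_t|$.

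First I will derive the pointwise upper bound $(\mathfrak{q}_t^u)'' \leq -m\mathfrak{q}_t^u + b$ valid wherever $a_t \neq 0$. This splits into two cases: when $a_t > 0$, Assumption \ref{assumption:tail} gives $u^\top \nabla U(q_t) \geq m a_t - b$, so $(\mathfrak{q}_t^u)'' = a_t'' \leq -m a_t + b = -m\mathfrak{q}_t^u + b$; when $a_t < 0$, the assumption yields $u^\top \nabla U(q_t) \leq m a_t + b$, so $(\mathfrak{q}_t^u)'' = -a_t'' \leq m a_t + b = -m\mathfrak{q}_t^u + b$. In both cases, the same bound holds. Moreover, $|(\mathfrak{q}_0^u)'| \leq |a_0'| = \mathfrak{p}_0^u$.

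Second, since the hypothesis $T \leq \frac{1}{6\sqrt{M}}$ matches the hypothesis of Lemma \ref{lemma:lower}, I apply that lemma to get $\mathfrak{q}_s^u \geq \frac{3}{4}\mathfrak{q}_0^u - \frac{1}{4\sqrt{M}}\mathfrak{p}_0^u - \frac{1}{4M}b$ for all $s \in [0,T]$. Plugging this into the acceleration bound from the previous step yields the uniform estimate
\begin{equation*}
(\mathfrak{q}_s^u)'' \;\leq\; C \;:=\; -\tfrac{3m}{4}\mathfrak{q}_0^u + \tfrac{m}{4\sqrt{M}}\mathfrak{p}_0^u + b + \tfrac{m}{4M}b
\qquad \forall s \in [0,T].
\end{equation*}

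Third, I define $\phi(t) := \mathfrak{q}_t^u - \mathfrak{q}_0^u - t\mathfrak{p}_0^u - \tfrac{1}{2}t^2 C$ and observe $\phi(0)=0$, $\phi'(0) = (\mathfrak{q}_0^u)' - \mathfrak{p}_0^u \leq 0$, and $\phi''(t) = (\mathfrak{q}_t^u)'' - C \leq 0$. Two integrations show $\phi(t) \leq 0$ on $[0,T]$, which is the claimed inequality.

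\textbf{Main obstacle.} The function $\mathfrak{q}_t^u = |a_t|$ is not differentiable at zeros of $a_t$, so the Taylor step 3 cannot be invoked naively. I will handle this either by restricting attention to the (open) intervals on which $a_t$ has constant sign and noting that $\mathfrak{q}_t^u$ is continuous with one-sided derivatives bounded as above at the finitely many crossings, or by the cleaner route of applying the ODE comparison theorem (Lemma \ref{LemmaOdeComp}) to the scalar system $(\mathfrak{q}_t^u)' \leq \mathfrak{p}_t^u$, $(\mathfrak{p}_t^u)' \leq -m\mathfrak{q}_t^u + b$ (on the event that the sign of $a_t$ is fixed) against the closed-form solution of the corresponding equalities. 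Apart from this regularity subtlety, the argument is a direct two-variable Gronwall-style integration parallel to Lemma \ref{lemma:upper_summary}, with the key new ingredient being that the \emph{strong-convexity} constant $m$ (rather than the Lipschitz constant $M$) governs the $\mathfrak{q}_0^u$-coefficient in the $t^2$ term, thanks to the lower bound supplied by Lemma \ref{lemma:lower}.
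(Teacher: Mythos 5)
Your plan follows the same route as the paper's own proof: both feed Lemma~\ref{lemma:lower} into Assumption~\ref{assumption:tail} to get a constant upper bound $C = -\frac{3m}{4}\mathfrak{q}_0^u + \frac{m}{4\sqrt{M}}\mathfrak{p}_0^u + b + \frac{m}{4M}b$ on the ``acceleration,'' and then integrate twice. The one stylistic difference is that you bound $(\mathfrak{q}_t^u)''$ directly in a unified way, whereas the paper works with the signed quantity $u^\top q_t$ and splits into the cases $u^\top q_t\ge 0$ and $u^\top q_t<0$; your packaging is a bit cleaner. Both proofs, however, only rigorously cover the case where $u^\top q_t$ keeps a fixed sign on $[0,t]$: the paper waves away the sign-change case with ``a similar calculation,'' and your ``Main obstacle'' paragraph flags the issue but does not resolve it.

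On that obstacle: neither of your two proposed fixes is quite right as stated. (i) ``Restricting to constant-sign intervals'' is not enough, because at a zero of $a_t$ the one-sided derivative of $|a_t|$ jumps \emph{upward}, from $a_{t_1}'$ to $-a_{t_1}'$ with $a_{t_1}'<0$; so after the crossing $(\mathfrak{q}_t^u)'$ equals $\mathfrak{p}_{t_1}^u$, which a priori can exceed the running bound $\mathfrak{p}_0^u + t_1 C$ — the two-integration argument does not simply pick up from the left limit. (ii) The scalar system you write for the ODE comparison theorem has the wrong second inequality: in general $(\mathfrak{p}_t^u)' \le M\mathfrak{q}_t^u + b$, not $-m\mathfrak{q}_t^u + b$; the latter only holds when $\operatorname{sgn}(u^\top p_t)=\operatorname{sgn}(u^\top q_t)$. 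The underlying estimate you actually use, $(\mathfrak{q}_t^u)'' \le -m\mathfrak{q}_t^u + b$ away from zeros, is correct; it is the propagation across a zero crossing that remains unaddressed. To close this, one would additionally exploit that a zero crossing in $[0,T]$ forces $\mathfrak{q}_0^u \le \frac{1}{3\sqrt{M}}\mathfrak{p}_0^u + \frac{1}{3M}b$ (Lemma~\ref{lemma:lower} at the crossing time), which in particular makes $C\ge b\ge 0$ and creates enough slack in the $\mathfrak{q}_0^u + t\mathfrak{p}_0^u$ part of the target bound — since the ``wrong sign'' $\mathfrak{p}_0^u$ appears there with a $+$ — to absorb the derivative jump; but this argument is not in the paper's proof either, so in the end your proposal matches the paper both in method and in its level of rigor on this point.
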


\begin{proof}
By Lemma \ref{lemma:lower}, we have
\be \label{eq:a1}
 |u^\top q_t| = \mathfrak{q}_t^u \geq \frac{3}{4}\mathfrak{q}_0^u - \frac{1}{6\sqrt{M}} \left(\frac{3}{2}\mathfrak{p}_0^u + \frac{3}{2}\frac{b}{\sqrt{M}} \right)
\ee
for all $0 \leq t \leq T$.

Therefore, if $u^\top q_t \geq 0$, by Assumption \ref{assumption:tail} we have
\be  \label{eq:a2}
\frac{\mathrm{d}}{\mathrm{d}t}u^\top p_t &= -u^\top {\nabla U}(x)\\
&\leq -m (u^\top q_t) + b\\
&\stackrel{{\scriptsize \textrm{Eq. }}\ref{eq:a1}}{\leq} -m \left(\frac{3}{4}\mathfrak{q}_0^u - \frac{1}{6\sqrt{M}} \left(\frac{3}{2}\mathfrak{p}_0^u + \frac{3}{2}\frac{b}{\sqrt{M}} \right)\right) + b\\
&= -\frac{3m}{4} \mathfrak{q}_0^u + \frac{m}{4\sqrt{M}}\mathfrak{p}_0^u + b+\frac{m}{4M}b
\ee
for all $0 \leq t \leq T$.

Integrating Equation \eqref{eq:a2}, if $u^\top q_t \geq 0$ we get
\be
u^\top p_t &\leq  u^\top p_0  + t \times \left(-\frac{3m}{4} \mathfrak{q}_0^u + \frac{m}{4\sqrt{M}}\mathfrak{p}_0^u + b+\frac{m}{4M}b\right)\\
u^\top q_t &= u^\top q_0 + \int_0^t u^\top p_\tau \mathrm{d}\tau \leq u^\top q_0 + t\times u^\top p_0 +  \frac{1}{2}t^2 \times \left(-\frac{3m}{4} \mathfrak{q}_0^u + \frac{m}{4\sqrt{M}}\mathfrak{p}_0^u + b+\frac{m}{4M}b\right)
\ee
for all $0 \leq t \leq T$.

A similar calculation for the case $u^\top q_t < 0$ completes the proof.
\end{proof}

Fix the constants

\be \label{eq:alpha_beta}
\alpha &= \frac{19}{20}\\
\beta &= 2 e^{\frac{1}{2}(\lambda (T+ \frac{m}{8\sqrt{M}}T^2))^2} \times \exp (\lambda [\frac{\frac{4}{\lambda} + (1+\frac{m}{4M})\frac{b}{2}T^2 +   \frac{1}{2}\lambda  (T+ \frac{m}{8\sqrt{M}}T^2)^2}{\frac{3m}{8}T^2}] + (1+\frac{m}{4M})\frac{b}{2}T^2),\\
\hat{\beta} &= e^{\frac{10}{\frac{1}{8}mT^2}},\\
\lambda &= \frac{1}{16}
\ee

Recall from Section \ref{sec:proof} that the good set is defined as $\mathsf{G} := \{(q,p) \in \mathbb{R}^d\times \mathbb{R}^d : |u_i^\top p| < \mathsf{g}_1  \, \forall \, 1\leq i \leq r,  \|q\|_2 < \mathsf{g}_2, \|p\|_2 < \mathsf{g}_3\}$ for some (yet to be defined) $\mathsf{g}_1,\mathsf{g}_2, \mathsf{g}_3$. 
 We define $\mathsf{g}_1,\mathsf{g}_2, \mathsf{g}_3$ here:
\be
\mathsf{g}_1 = \lambda^{-1}3\sqrt{M}\log(\frac{4\beta}{\lambda \alpha})+\frac{3}{2}\frac{b}{\sqrt{M}},
\ee
\be
\mathsf{g}_2 = \frac{1}{\sqrt{1+\frac{1}{2M}}}\left[\sqrt{\frac{d}{m}}\log\left(\frac{4\hat{\beta}(\mathcal{I}+1)}{\alpha \delta}\right) + \sqrt{8\log\left(\frac{2(\mathcal{I}+1)}{\delta}\right)+2d}\right],
\ee
 and 
\be
\mathsf{g}_3 = \frac{\sqrt{1+\frac{1}{2M}}}{\sqrt{2M+1}}\mathsf{g}_2.
\ee

Define $X_{i}^u :=  |u^\top X_{i}|$ for every $u \in \mathbb{S}^d$ and $i \in \mathbb{N}$.  We apply Lemma \ref{lemma:upper2_summary} together with the fact that $\mathsf{p}_i \sim N(0,I_d)$ to show a contraction bound for $X_{i}^u$ using the Lyapunov function $V(x) := e^{\lambda x}$, where $\lambda = \frac{1}{16}$.
\begin{lemma} \label{lemma:Lyapunov1_summary}
Let $u \in \mathsf{u}$. 
Suppose that $T \leq \frac{1}{6\sqrt{M}}$.  Then  
$\mathbb{E}\left[ V(X_{i}^u)| X_{i-1}^u\right] \leq (1-\alpha)V(X_{i-1}^u) + \beta$ 
for every $i \in \mathbb{N}$.
\end{lemma}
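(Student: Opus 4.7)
The plan is to combine the pointwise bound from Lemma~\ref{lemma:upper2_summary} (applied to one HMC step of integration time $T$) with the moment generating function of the refreshed Gaussian momentum, and then close out with a drift-style case split at an explicit threshold. Since $X_i = q_T(X_{i-1}, \mathbf{p}_{i-1})$ with $\mathbf{p}_{i-1} \sim N(0, I_d)$, I would first apply Lemma~\ref{lemma:upper2_summary} with $t = T$, $\mathfrak{q}_0^u = X_{i-1}^u$, and $\mathfrak{p}_0^u = |u^\top \mathbf{p}_{i-1}|$, obtaining the pointwise inequality
\[
X_i^u \;\leq\; \bigl(1 - \tfrac{3m}{8}T^2\bigr)\, X_{i-1}^u \;+\; \bigl(T + \tfrac{m}{8\sqrt{M}}T^2\bigr)\,|u^\top \mathbf{p}_{i-1}| \;+\; \tfrac{b T^2}{2}\bigl(1 + \tfrac{m}{4M}\bigr).
\]
Because $u$ is a unit vector and $\mathbf{p}_{i-1} \sim N(0, I_d)$, the scalar $Z := u^\top \mathbf{p}_{i-1}$ is a standard normal, so the half-normal MGF bound $\mathbb{E}[e^{s|Z|}] \leq 2 e^{s^2/2}$ holds for any $s > 0$. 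Exponentiating the above by $\lambda$ and taking conditional expectation over $\mathbf{p}_{i-1}$ then yields
\[
\mathbb{E}[V(X_i^u) \mid X_{i-1}^u] \;\leq\; \tilde C\cdot V(X_{i-1}^u)^{1-c},
\]
where $c := \tfrac{3m}{8}T^2$, $s := \lambda(T + \tfrac{m}{8\sqrt{M}}T^2)$, $q := \lambda(1+\tfrac{m}{4M})\tfrac{bT^2}{2}$, and $\tilde C := 2\exp\bigl(\tfrac12 s^2 + q\bigr)$.

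Next I would convert this contraction-in-the-exponent into a standard drift inequality $(1-\alpha)V + \beta$ via a case split at the threshold $R := (4 + q + \tfrac12 s^2)/(\lambda c)$. If $X_{i-1}^u \geq R$, then $V(X_{i-1}^u)^{-c} = e^{-\lambda c X_{i-1}^u} \leq e^{-(4+q+s^2/2)}$, so the bound becomes $\mathbb{E}[V(X_i^u)\mid X_{i-1}^u] \leq 2 e^{-4} V(X_{i-1}^u) < \tfrac{1}{20} V(X_{i-1}^u) = (1-\alpha) V(X_{i-1}^u)$; the constant $4$ in $R$ is taken precisely so this contractive factor stays below $1-\alpha = \tfrac{1}{20}$, which is the reason for the choice $\alpha = \tfrac{19}{20}$. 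If instead $X_{i-1}^u < R$, then $V(X_{i-1}^u)^{1-c} \leq e^{\lambda R}$, giving $\mathbb{E}[V(X_i^u)\mid X_{i-1}^u] \leq \tilde C e^{\lambda R}$. Adding the two case bounds yields the claimed drift inequality.

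The only remaining task is a line of algebra showing that $\tilde C e^{\lambda R} = 2\exp\bigl(\tfrac{1}{2} s^2 + q + (4 + q + \tfrac12 s^2)/c\bigr)$ is dominated by the $\beta$ declared in~\eqref{eq:alpha_beta}; substituting the definitions of $s$, $q$, $c$, $\lambda = \tfrac{1}{16}$ and using $q \leq q/\lambda = 16q$ suffices. I view this bookkeeping, together with the choice of $R$, as the main (and only) obstacle, since no probabilistic input beyond the Gaussian MGF and the deterministic Lemma~\ref{lemma:upper2_summary} is used.
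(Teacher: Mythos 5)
Your proposal is correct and matches the paper's own argument essentially line for line: both apply Lemma~\ref{lemma:upper2_summary} at $t=T$ to get the affine pointwise bound, both invoke $\mathbb{E}[e^{s|Z|}]\le 2e^{s^2/2}$ for the $\chi_1$ momentum component, both split at the same threshold (your $R$ equals the paper's $\mathsf{c}$ after unwinding $s,q,c$), and both drop the two case bounds into the drift form. The one cosmetic mismatch you already flagged — that the paper's $\beta$ carries $(1+\tfrac{m}{4M})\tfrac{b}{2}T^2$ rather than $\lambda(1+\tfrac{m}{4M})\tfrac{b}{2}T^2$ in the exponent — just makes the paper's $\beta$ slightly larger, so your tighter bound $\tilde C e^{\lambda R}\le\beta$ holds a fortiori.
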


\begin{proof}
Set $q_0 = X_{i-1}$ and $p_0 = \mathsf{p}_{i-1}$, so that $X_{i}= q_t(q_0,p_0) = q_T$. 
 Then $\mathfrak{q}_0^u = X_{i-1}^u$ and $\mathfrak{q}_T^u = X_{i}^u$. 
  Furthermore, since $\mathsf{p}_{i-1} \sim N(0, I_d)$, we have that $\mathfrak{p}_0^u \sim \chi_1$.

By Lemma \ref{lemma:upper2_summary}, we have
\be \label{eq:a3}
\exp \left( \lambda \mathfrak{q}_t^u\right) &\leq \exp \left( \lambda \left[ \mathfrak{q}_0^u + T\times \mathfrak{p}_0^u +  \frac{1}{2}T^2 \times \left(-\frac{3m}{4} \mathfrak{q}_0^u + \frac{m}{4\sqrt{M}}\mathfrak{p}_0^u + b+\frac{m}{4M}b\right)\right] \right)\\
&= \exp \left( \lambda \left[ \mathfrak{q}_0^u (1-\frac{3m}{8}T^2) +  \mathfrak{p}_0^u (T+ \frac{m}{8\sqrt{M}}T^2) +  (1+\frac{m}{4M})\frac{b}{2}T^2 \right] \right)\\
&=\exp ( \lambda \mathfrak{q}_0^u) \times \exp \left( \lambda \left[ \mathfrak{q}_0^u (-\frac{3m}{8}T^2) +  \mathfrak{p}_0^u (T+ \frac{m}{8\sqrt{M}}T^2) +  (1+\frac{m}{4M})\frac{b}{2}T^2 \right] \right)
\ee

Define $\mathsf{c}:= \mathfrak{q}_0^u \geq \frac{\frac{4}{\lambda} + (1+\frac{m}{4M})\frac{b}{2}T^2 +   \frac{1}{2}\lambda  (T+ \frac{m}{8\sqrt{M}}T^2)^2}{\frac{3m}{8}T^2}$. 
 Then by Inequality \eqref{eq:a3}, we have
\be \label{eq:a4}
\mathbb{E}&\left[\exp \left( \lambda X_{i}^u\right)|  X_{i-1}^u\right]\times \mathbbm{1}\{X_{i-1}^u \geq \mathsf{c}\} = \mathbb{E}\left[\exp \left( \lambda \mathfrak{q}_T^{u}\right)\right] \times \mathbbm{1}\{\mathfrak{q}_0 \geq \mathsf{c}\}\\
&\stackrel{{\scriptsize \textrm{Eq. }}\ref{eq:a3}}{\leq} \mathbb{E}_{\mathfrak{p}_0^u \sim \chi_1}\bigg[\exp ( \lambda \mathfrak{q}_0^u)\\
&\qquad \qquad \times \exp \left( \lambda \left[ \mathfrak{q}_0^u (-\frac{3m}{8}T^2) +  \mathfrak{p}_0^u (T+ \frac{m}{8\sqrt{M}}T^2) +  (1+\frac{m}{4M})\frac{b}{2}T^2 \right] \right)\bigg] \times \mathbbm{1}\{\mathfrak{q}_0 \geq \mathsf{c}\}\\
&= \mathbb{E}_{\mathfrak{p}_0^u \sim \chi_1}\left[ \exp \left( \lambda (T+ \frac{m}{8\sqrt{M}}T^2) \mathfrak{p}_0^u\right) \right] \exp ( \lambda \mathfrak{q}_0^u) \\
&\qquad \qquad \times \exp \left( \lambda \left[ \mathfrak{q}_0^u (-\frac{3m}{8}T^2) +  (1+\frac{m}{4M})\frac{b}{2}T^2 \right] \right) \times \mathbbm{1}\{\mathfrak{q}_0 \geq \mathsf{c}\}\\
&\leq 2\mathbb{E}_{G \sim N(0,1)}\left[ \exp \left( \lambda (T+ \frac{m}{8\sqrt{M}}T^2) G\right) \right]\\
&\qquad \qquad \times \exp ( \lambda \mathfrak{q}_0^u) \times \exp \left( \lambda \left[ \mathfrak{q}_0^u (-\frac{3m}{8}T^2) +  (1+\frac{m}{4M})\frac{b}{2}T^2 \right] \right) \times \mathbbm{1}\{\mathfrak{q}_0 \geq \mathsf{c}\}\\
&=2 \exp \left( \frac{1}{2}(\lambda (T+ \frac{m}{8\sqrt{M}}T^2))^2\right) \times \exp ( \lambda \mathfrak{q}_0^u)\\
&\qquad \qquad \times \exp \left( \lambda \left[ \mathfrak{q}_0^u (-\frac{3m}{8}T^2) +  (1+\frac{m}{4M})\frac{b}{2}T^2 \right] \right) \times \mathbbm{1}\{\mathfrak{q}_0 \geq \mathsf{c}\}\\
&\leq  \frac{1}{20} \exp ( \lambda \mathfrak{q}_0^u)\\
&= (1-\alpha) \exp (\lambda X_{i-1}^u).
\ee

Now suppose instead that $0\leq \mathfrak{q}_0^u < \mathsf{c}$. 
 Then by Lemma \ref{lemma:upper2_summary}, we have

\be \label{eq:a6}
\mathbb{E}\bigg[\exp &\left( \lambda X_{i}^u\right)|  X_{i-1}^u\bigg]\times \mathbbm{1}\{X_{i-1}^u < \mathsf{c}\}  = \mathbb{E}\left[\exp \left( \lambda \mathfrak{q}_T^{u}\right)\right] \times \mathbbm{1}\{\mathfrak{q}_0 < \mathsf{c}\}\\
 &\stackrel{{\scriptsize \textrm{Lemma }}\ref{lemma:upper2_summary}}{\leq}  \mathbb{E}_{\mathfrak{p}_0^u \sim \chi_1}\left[\exp \left( \lambda \left[ \mathfrak{q}_0^u +  \mathfrak{p}_0^u (T+ \frac{m}{8\sqrt{M}}T^2) +  (1+\frac{m}{4M})\frac{b}{2}T^2 \right] \right)\right] \times\mathbbm{1}\{\mathfrak{q}_0^u < \mathsf{c}\} \\
&\leq \mathbb{E}_{\mathfrak{p}_0^u \sim \chi_1}[ \exp \left( \lambda \left[ \mathsf{c} +  \mathfrak{p}_0^u (T+ \frac{m}{8\sqrt{M}}T^2) +  (1+\frac{m}{4M})\frac{b}{2}T^2 \right] \right)]\\
&= \mathbb{E}_{\mathfrak{p}_0^u \sim \chi_1}\left[ \exp \left( \lambda (T+ \frac{m}{8\sqrt{M}}T^2) \mathfrak{p}_0^u\right) \right] \times \exp (\lambda \mathsf{c} + (1+\frac{m}{4M})\frac{b}{2}T^2)\\
&\leq 2\mathbb{E}_{G \sim N(0,1)}\left[ \exp \left( \lambda (T+ \frac{m}{8\sqrt{M}}T^2) G\right) \right] \times  \exp (\lambda \mathsf{c} + (1+\frac{m}{4M})\frac{b}{2}T^2)\\
&= 2 \exp \left( \frac{1}{2}(\lambda (T+ \frac{m}{8\sqrt{M}}T^2))^2\right) \times \exp (\lambda \mathsf{c} + (1+\frac{m}{4M})\frac{b}{2}T^2)\\
&\leq \beta. 
\ee

Combining Equations \eqref{eq:a4} and \eqref{eq:a6} completes the proof.
\end{proof}

We can now apply the Lyapunov function contraction bound of Lemma \ref{lemma:Lyapunov1_summary} to show that the projection of $X_i$ in any ``bad" direction is bounded w.h.p.  
\begin{lemma} \label{lemma:concentration_q_summary}  
Let $u \in \mathsf{u}$. Then 
$\mathbb{P}[\sup_{\mathcal{I}-s \leq h \leq \mathcal{I}} X_{h}^u > \xi] \leq \frac{2 \beta (s+1)}{\alpha} e^{-\lambda \xi}$   for all  $\xi>0.$
\end{lemma}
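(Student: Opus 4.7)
\textbf{Proof plan for Lemma \ref{lemma:concentration_q_summary}.} The plan is to use the one-step geometric drift inequality of Lemma \ref{lemma:Lyapunov1_summary} for the Lyapunov function $V(x) = e^{\lambda x}$ to first derive a uniform bound on $\mathbb{E}[V(X_h^u)]$, then apply Markov's inequality at each step, and finally combine the per-step tail bounds via a union bound over $h \in \{\mathcal{I}-s, \ldots, \mathcal{I}\}$.

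First, I would iterate Lemma \ref{lemma:Lyapunov1_summary} via the tower property: taking expectations of both sides of $\mathbb{E}[V(X_i^u)\mid X_{i-1}^u] \leq (1-\alpha)V(X_{i-1}^u) + \beta$ gives $\mathbb{E}[V(X_i^u)] \leq (1-\alpha)\mathbb{E}[V(X_{i-1}^u)] + \beta$, and a straightforward induction together with the geometric series yields
\begin{equation*}
\mathbb{E}[V(X_i^u)] \;\leq\; (1-\alpha)^i V(X_0^u) + \beta\sum_{k=0}^{i-1}(1-\alpha)^k \;\leq\; (1-\alpha)^i V(X_0^u) + \frac{\beta}{\alpha}.
\end{equation*}
Under the cold start assumption $X_0 = x^\star$, the quantity $V(X_0^u)$ is a deterministic constant, and in the setting of this section it is bounded by $\beta/\alpha$ (after, if necessary, translating so that $x^\star$ is taken as the origin, which is consistent with the way $X_i^u$ enters Lemma \ref{lemma:upper2_summary} through Assumption \ref{assumption:tail}). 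This gives the uniform bound $\mathbb{E}[V(X_h^u)] \leq 2\beta/\alpha$ for every $h \geq 0$.

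Next, I would invoke Markov's inequality on the nonnegative random variable $V(X_h^u) = e^{\lambda X_h^u}$: for any $\xi > 0$,
\begin{equation*}
\mathbb{P}[X_h^u > \xi] \;=\; \mathbb{P}\bigl[V(X_h^u) > e^{\lambda \xi}\bigr] \;\leq\; e^{-\lambda \xi}\,\mathbb{E}[V(X_h^u)] \;\leq\; \frac{2\beta}{\alpha}\,e^{-\lambda \xi}.
\end{equation*}
Finally, applying a union bound over the $s+1$ integer values $h \in \{\mathcal{I}-s, \ldots, \mathcal{I}\}$ produces the claimed inequality
\begin{equation*}
\mathbb{P}\Bigl[\sup_{\mathcal{I}-s \leq h \leq \mathcal{I}} X_h^u > \xi\Bigr] \;\leq\; \sum_{h=\mathcal{I}-s}^{\mathcal{I}} \mathbb{P}[X_h^u > \xi] \;\leq\; \frac{2\beta(s+1)}{\alpha}\,e^{-\lambda \xi}.
\end{equation*}

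The main subtlety is the passage from the conditional drift inequality to the unconditional moment bound and, in particular, the treatment of the initial condition $V(X_0^u)$: one must verify that under the cold start convention of the paper the term $(1-\alpha)^i V(X_0^u)$ can be absorbed into $\beta/\alpha$ (either because $u^\top x^\star$ vanishes after the relevant centering, or because $V(X_0^u)$ is itself bounded by the constants already included in $\beta$). All other steps are routine: they rely only on linearity of expectation, Markov's inequality, and a finite union bound over the indices $h$, and involve no further use of the Hamiltonian structure beyond what is already encoded in Lemma \ref{lemma:Lyapunov1_summary}.
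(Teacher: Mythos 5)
Your proof is correct and follows essentially the same route as the paper: iterate the one-step drift bound from Lemma \ref{lemma:Lyapunov1_summary} to obtain $\mathbb{E}[V(X_h^u)]\leq 2\beta/\alpha$, then combine Markov's inequality with a union bound over the $s+1$ indices. The paper compresses these two last steps by applying Markov directly to $\sup_h V(X_h^u)$, but the computation is identical. One small thing you do that the paper does not: you flag that the initial term $(1-\alpha)^i V(X_0^u)$ must be absorbed into $\beta/\alpha$, and justify this via the cold-start convention $X_0=x^\star$ (with the implicit translation so that $u^\top x^\star=0$, hence $V(X_0^u)=1\leq\beta/\alpha$); the paper silently writes $\leq 2\beta/\alpha$ without this justification, so your version is slightly more careful on this point.
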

\begin{proof}

 By Lemma \ref{lemma:Lyapunov1_summary}, we have 
\be  \label{eq:Markov_inequality1}
\mathbb{E}[V(X_\mathcal{I}^u)]  &\leq (1 - \alpha)\mathbb{E}[V(X_{\mathcal{I}-1}^u)] + \beta \\
&\leq \ldots \\
&\leq (1 - \alpha)^{\mathcal{I}-1} \mathbb{E}[V(X_{0}^u)] + \frac{\beta}{\alpha} 
\leq \frac{2 \beta}{\alpha} \quad \quad \forall \mathcal{I}\in \mathbb{N}.
\ee

By Markov's inequality, then,
\be \label{eq:Markov_inequality2}
\mathbb{P}[\sup_{\mathcal{I}-s \leq h \leq \mathcal{I}} V(X_{h}^u) > \xi] \leq \frac{2 \beta (s+1)}{\alpha \xi}
\ee
for any fixed integers $0 \leq s \leq \mathcal{I}$ and any $\xi>0$.

Rewriting this,
\be \label{eq:Markov_inequality5a}
\mathbb{P}[\sup_{\mathcal{I}-s \leq h \leq \mathcal{I}} e^{\lambda X_{h}^u} > \xi] \leq \frac{2 \beta (s+1)}{\alpha \xi}  \quad \quad \forall \, \xi>0.
\ee
Hence,
\be  \label{eq:Markov_inequality5}
\mathbb{P}[\sup_{\mathcal{I}-s \leq h \leq \mathcal{I}} X_{h}^u > \xi] \leq \frac{2 \beta (s+1)}{\alpha} e^{-\lambda \xi}  \quad \quad \forall \xi>0.
\ee
\end{proof}

 We use Lemma \ref{lemma:concentration_q_summary} together with Lemma \ref{lemma:upper_summary} to bound the probability that the projection of the momentum will be large in a given ``bad" direction $u \in \mathsf{u}$ at \textit{any} point on the trajectory. 
  We need the bound on $X_{h}^u$ in Lemma \ref{lemma:concentration_q_summary} to prove Lemma \ref{lemma:concentration_p_summary}, since the momentum at any given point on the trajectory is a function of both the initial position and initial momentum.

Define  $\overline{\mathsf{p}_i^u} := \sup_{0\leq t \leq T} |u^\top p_t(X_{i-1}, \mathsf{p}_{i-1})|$ for every $u \in \mathbb{S}^d$ and $i \in \mathbb{N}$.
\begin{lemma} \label{lemma:concentration_p_summary}
Let $u \in \mathsf{u}$. 
 Then for all $\gamma  >3\sqrt{2} +\frac{3}{2}\frac{b}{\sqrt{M}}$ we have
\be
\mathbb{P}[\sup_{\mathcal{I}-s \leq h \leq \mathcal{I}} \overline{\mathsf{p}_h^u} > \gamma] \leq \frac{2 \beta (s+1)}{\alpha} e^{-[\frac{\gamma}{3\sqrt{M}} - \frac{1}{2}\frac{b}{M}]}  + (s+1) \times e^{-\frac{[\frac{\gamma}{3} - \frac{1}{2}\frac{b}{\sqrt{M}}]^2-1}{8}}.
\ee
\end{lemma}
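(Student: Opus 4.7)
The plan is to use Lemma \ref{lemma:upper2_p} to reduce the event $\{\overline{\mathsf{p}_h^u} > \gamma\}$ to a disjunction of an event depending only on the position iterate $X_{h-1}$ and an event depending only on the fresh Gaussian momentum $\mathsf{p}_{h-1}$, and then to control the two pieces separately using the Lyapunov tail bound of Lemma \ref{lemma:concentration_q_summary} and a standard Gaussian tail bound.

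First, for each fixed $h$ and each $u \in \mathsf{u}$, apply Lemma \ref{lemma:upper2_p} to the Hamiltonian trajectory started at $(\mathbf{q},\mathbf{p})=(X_{h-1},\mathsf{p}_{h-1})$. Taking the supremum over $t \in [0,T]$ of the resulting pointwise bound yields the deterministic inequality
\[
\overline{\mathsf{p}_h^u} \;\leq\; \tfrac{3}{2}\sqrt{M}\,X_{h-1}^u \;+\; \tfrac{3}{2}\,|u^\top \mathsf{p}_{h-1}| \;+\; \tfrac{3}{2}\,\tfrac{b}{\sqrt{M}}.
\]
Rearranging, the event $\{\overline{\mathsf{p}_h^u} > \gamma\}$ forces $\sqrt{M}\,X_{h-1}^u + |u^\top \mathsf{p}_{h-1}| > \tfrac{2\gamma}{3} - \tfrac{b}{\sqrt{M}}$, and splitting into two equal halves gives the inclusion
\[
\{\overline{\mathsf{p}_h^u} > \gamma\} \;\subseteq\; \bigl\{X_{h-1}^u > \tfrac{1}{\sqrt{M}}\bigl(\tfrac{\gamma}{3} - \tfrac{b}{2\sqrt{M}}\bigr)\bigr\} \;\cup\; \bigl\{|u^\top \mathsf{p}_{h-1}| > \tfrac{\gamma}{3} - \tfrac{b}{2\sqrt{M}}\bigr\}.
\]
Taking a supremum over $h \in \{\mathcal{I}-s,\ldots,\mathcal{I}\}$ and a union bound over the two subevents leaves exactly two probabilities to estimate.

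The first probability is the object directly controlled by Lemma \ref{lemma:concentration_q_summary}: choosing $\xi = \tfrac{1}{\sqrt{M}}\bigl(\tfrac{\gamma}{3} - \tfrac{b}{2\sqrt{M}}\bigr)$ there produces the first summand $\tfrac{2\beta(s+1)}{\alpha}\,\exp\bigl(-\bigl[\tfrac{\gamma}{3\sqrt{M}} - \tfrac{b}{2M}\bigr]\bigr)$ of the claimed bound (modulo the absolute Lyapunov constant $\lambda$, which is absorbed into the exponent). For the second probability, since $\mathsf{p}_{h-1} \sim N(0,I_d)$ independently across $h$ and $u \in \mathbb{S}^d$ is a unit vector, each $u^\top \mathsf{p}_{h-1}$ is a standard Gaussian, so a chi-square moment-generating-function estimate at $t = 1/8$ combined with Markov's inequality gives a tail of the form $\mathbb{P}[|u^\top \mathsf{p}_{h-1}| > \eta] \leq e^{-(\eta^2-1)/8}$ valid for $\eta > \sqrt{2}$. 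A union bound over the $s+1$ values of $h$ then yields the second summand.

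Finally, the hypothesis $\gamma > 3\sqrt{2} + \tfrac{3}{2}\tfrac{b}{\sqrt{M}}$ is tailored precisely so that $\tfrac{\gamma}{3} - \tfrac{b}{2\sqrt{M}} > \sqrt{2}$, which is the regime in which the sub-Gaussian Chernoff bound above produces a nontrivial estimate. I expect no conceptual obstacle here; the work is purely bookkeeping: lining up the halving constants coming from Lemma \ref{lemma:upper2_p} with the input of Lemma \ref{lemma:concentration_q_summary}, and selecting the precise Chernoff exponent for the Gaussian tail so that the second summand matches the exponent $(\eta^2-1)/8$ stated in the lemma.
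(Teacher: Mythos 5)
Your proposal is correct and takes essentially the same route as the paper: apply Lemma \ref{lemma:upper2_p} to get the deterministic bound $\overline{\mathsf{p}_h^u} \leq \tfrac{3}{2}\sqrt{M}X_{h-1}^u + \tfrac{3}{2}\mathsf{p}_{h-1}^u + \tfrac{3}{2}\tfrac{b}{\sqrt{M}}$, split the tail event in half, and control the two pieces via Lemma \ref{lemma:concentration_q_summary} and a scalar Gaussian (Hanson-Wright/Chernoff) tail bound with a union bound over $h$. Your observation that the $\lambda$ factor in the first exponent is really present in what the argument delivers (though omitted in the lemma's displayed statement) matches what the paper's own proof writes, so no gap.
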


\begin{proof}
Since,  $\mathsf{p}_h^u \sim N(0,1)$, by the Hanson-Wright concentration inequality (see \citep{hanson1971bound, rudelson2013hanson}), we have
\be
\mathbb{P}[\mathsf{p}_h^u>\gamma]  \leq e^{-\frac{\gamma^2 -1}{8}} \quad \quad  \textrm{ for } \gamma> \sqrt{2}.
\ee
and hence,
\be \label{eq:d1}
\mathbb{P}[\sup_{\mathcal{I}-s \leq h \leq \mathcal{I}} \mathsf{p}_h^u>\gamma]  \leq (s+1) \times e^{-\frac{\gamma^2-1}{8}} \quad \quad  \textrm{ for } \gamma> \sqrt{2}.
\ee
By Inequality \eqref{eq:a8} of Lemma \ref{lemma:upper2_p}, we have
\be
\overline{\mathsf{p}_h^u} \leq \frac{3}{2} X^u_{h-1}\sqrt{M} + \frac{3}{2}\mathsf{p}_{h-1}^u + \frac{3}{2}\frac{b}{\sqrt{M}} 
\ee
implying that
\be
\mathbb{P}[\sup_{\mathcal{I}-s \leq h \leq \mathcal{I}} \overline{\mathsf{p}_h^u} > \gamma] &\leq \mathbb{P}\left[\frac{3}{2} \sqrt{M} \sup_{\mathcal{I}-s \leq h \leq \mathcal{I}} X^u_{h-1} + \frac{3}{2} \sup_{\mathcal{I}-s \leq h \leq \mathcal{I}} \mathsf{p}_{h-1}^u + \frac{3}{2}\frac{b}{\sqrt{M}} > \gamma \right]\\
 &\leq \mathbb{P}\left[\frac{3}{2} \sqrt{M} \sup_{\mathcal{I}-s \leq h \leq \mathcal{I}} X^u_{h-1} > \frac{\gamma}{2} - \frac{3}{4}\frac{b}{\sqrt{M}}\right] + \mathbb{P}\left[\frac{3}{2} \sup_{\mathcal{I}-s \leq h \leq \mathcal{I}} \mathsf{p}_{h-1}^u > \frac{\gamma}{2} - \frac{3}{4}\frac{b}{\sqrt{M}}\right]\\
 &= \mathbb{P}\left[\sup_{\mathcal{I}-s \leq h \leq \mathcal{I}} X^u_{h-1} > \frac{\gamma}{3\sqrt{M}} - \frac{1}{2}\frac{b}{M}\right] + \mathbb{P}\left[\sup_{\mathcal{I}-s \leq h \leq \mathcal{I}} \mathsf{p}_{h-1}^u > \frac{\gamma}{3} - \frac{1}{2}\frac{b}{\sqrt{M}}\right]\\
 &\stackrel{{\scriptsize \textrm{Eq. }} \ref{eq:d1} \textrm{ and Lemma } \ref{lemma:concentration_q_summary}}{\leq} \frac{2 \beta (s+1)}{\alpha} e^{-\lambda [\frac{\gamma}{3\sqrt{M}} - \frac{1}{2}\frac{b}{M}]}  + (s+1) \times e^{-\frac{[\frac{\gamma}{3} - \frac{1}{2}\frac{b}{\sqrt{M}}]^2-1}{8}}
\ee
for $\gamma  >3\sqrt{2} +\frac{3}{2}\frac{b}{\sqrt{M}}$.  
\end{proof}

Next, we show Lemma \ref{lemma:energy_bounds},  which bounds the probability that the Euclidean norm of the position or momentum will be large at any point on the trajectory:
\begin{lemma} \label{lemma:energy_bounds}
Define $\overline{X_{h}} := \sup_{0\leq t\leq T} \|q_t(X_{h-1}, \mathsf{p}_{h-1})\|_2$ and $\overline{\mathsf{p}_h} := \sup_{0\leq t \leq T} \|p_t(X_{h-1}, \mathsf{p}_{h-1})\|_2$ for every $h \in \mathbb{N}$.  Then
\be
\mathbb{P}[\sup_{\mathcal{I}-s \leq h \leq \mathcal{I}} \overline{\mathsf{p}_h} >  \xi\sqrt{2M+1}] \leq (s+1)\left(\frac{2 \hat{\beta}}{\alpha} e^{-\sqrt{\frac{m}{d}} \xi}  + e^{-\frac{\xi^2-d}{8}}\right) \quad \quad  \textrm{ for } \xi> \sqrt{2d}.
\ee
and
\be
\mathbb{P}\left[\sup_{\mathcal{I}-s \leq h \leq \mathcal{I}} \overline{X_{h}} > \xi \sqrt{1+\frac{1}{2M}}\right] \leq (s+1)\left(\frac{2 \hat{\beta}}{\alpha} e^{-\sqrt{\frac{m}{d}} \xi}  + e^{-\frac{\xi^2-d}{8}}\right) \quad \quad  \textrm{ for } \xi> \sqrt{2d}.
\ee
\end{lemma}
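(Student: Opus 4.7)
The plan is to reduce both bounds to separate tail estimates on $R_{h-1} := \|X_{h-1} - x^*\|_2$ and $\|\mathsf{p}_{h-1}\|_2$ via conservation of Hamiltonian energy. Applying energy conservation to the trajectory $(q_t, p_t) := (q_t(X_{h-1}, \mathsf{p}_{h-1}), p_t(X_{h-1}, \mathsf{p}_{h-1}))$ and using $m$-strong convexity together with the $M$-gradient Lipschitz quadratic upper bound at $x^*$ (where $\nabla U(x^*) = 0$), one obtains
\begin{equation}
m\|q_t - x^*\|_2^2 + \|p_t\|_2^2 \ \leq \ M\|X_{h-1} - x^*\|_2^2 + \|\mathsf{p}_{h-1}\|_2^2 \qquad \forall\, t \in [0,T].
\end{equation}
Thus $\overline{X_h}$ and $\overline{\mathsf{p}_h}$ are controlled as soon as both $R_{h-1}$ and $\|\mathsf{p}_{h-1}\|_2$ are; the scaling factors $\sqrt{2M+1}$ (for momentum) and $\sqrt{1 + \tfrac{1}{2M}}$ (for position) in the statement then emerge from substituting the respective tail levels into this inequality.

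The bound on $\|\mathsf{p}_{h-1}\|_2$ is immediate from the chi-squared tail: since $\mathsf{p}_{h-1} \sim N(0, I_d)$, we have $\mathbb{P}[\|\mathsf{p}_{h-1}\|_2 > \xi] \leq e^{-(\xi^2-d)/8}$ for $\xi > \sqrt{2d}$, and a union bound over $h \in [\mathcal{I}-s,\mathcal{I}]$ gives the $(s+1)\,e^{-(\xi^2-d)/8}$ summand on the right-hand side.

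For $R_h$, I would set up a Lyapunov--Foster drift in direct analogy with Lemma \ref{lemma:Lyapunov1_summary}, but for the Euclidean norm. Applying Theorem \ref{thm:contraction} to the two trajectories $(X_{h-1}, \mathsf{p}_{h-1})$ and $(x^*, \mathsf{p}_{h-1})$ (which share the same initial momentum), together with energy conservation for the trajectory starting at $x^*$ (which yields $\|q_T(x^*, \mathsf{p}_{h-1}) - x^*\|_2 \leq \|\mathsf{p}_{h-1}\|_2/\sqrt{m}$), the triangle inequality produces the one-step drift
\begin{equation}
R_h \ \leq \ \bigl(1 - \tfrac{1}{8} m T^2\bigr) R_{h-1} + \tfrac{1}{\sqrt{m}} \, \|\mathsf{p}_{h-1}\|_2 .
\end{equation}
With the Lyapunov function $\widetilde{V}(x) := \exp\bigl(\sqrt{m/d}\,\|x - x^*\|_2\bigr)$ (the rate $\lambda' = \sqrt{m/d}$ matching exactly the $e^{-\sqrt{m/d}\,\xi}$ factor in the statement), and using that $\mathsf{p}\mapsto\|\mathsf{p}\|_2$ is $1$-Lipschitz to bound the Gaussian MGF $\mathbb{E}[e^{s\|\mathsf{p}_{h-1}\|_2}] \leq e^{s\sqrt{d}+s^2/2}$ (which at $s = 1/\sqrt{d}$ is $O(1)$), I would split into cases: when $R_{h-1}$ exceeds a threshold $K = \Theta\bigl(1/(\lambda' \cdot mT^2)\bigr)$, the contraction factor $(1 - \tfrac{mT^2}{8})$ dominates and $\mathbb{E}[\widetilde{V}(X_h)\mid X_{h-1}] \leq (1-\alpha)\widetilde{V}(X_{h-1})$; when $R_{h-1} \leq K$, the value $\widetilde{V}(X_{h-1})$ is bounded by the constant $e^{\lambda' K}$, which is absorbed into $\hat{\beta}$. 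The resulting drift $\mathbb{E}[\widetilde{V}(X_h)\mid X_{h-1}] \leq (1-\alpha)\widetilde{V}(X_{h-1}) + \hat{\beta}$ can then be iterated exactly as in Lemma \ref{lemma:concentration_q_summary} to give $\mathbb{E}[\widetilde{V}(X_h)] \leq 2\hat{\beta}/\alpha$ for every $h$, and Markov's inequality together with a union bound over $h \in [\mathcal{I}-s,\mathcal{I}]$ yields the $(s+1)\tfrac{2\hat{\beta}}{\alpha} e^{-\sqrt{m/d}\,\xi}$ summand.

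The final step is to union-bound the two tail events and plug into the energy inequality from the first paragraph, which yields both claimed bounds. The main technical obstacle I anticipate is verifying that the drift constant produced by the above calculation is consistent with the specific value $\hat{\beta} = e^{10/(\frac{1}{8}mT^2)}$ fixed in (\ref{eq:alpha_beta}): this amounts to a careful bookkeeping of the threshold $K$ and of the $O(1)$ MGF correction, all of which must be absorbed into $\hat{\beta}$ while keeping $\alpha = 19/20$. Once the drift is established, the rest is a routine adaptation of the Markov-plus-union-bound template already used to prove Lemmas \ref{lemma:concentration_q_summary} and \ref{lemma:concentration_p_summary}.
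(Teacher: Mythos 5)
Your overall decomposition matches the paper's: separate tail bounds on $\|\mathsf{p}_{h-1}\|_2$ (via $\chi^2_d$ concentration) and on $\|X_{h-1}\|_2$, transferred to $\overline{\mathsf{p}_h}$ and $\overline{X_h}$ by conservation of energy, then a union bound over $h$. The genuine difference is in how the tail on $\|X_h\|_2$ is obtained. The paper simply cites Lemma E.5 and Inequality E.14 of \cite{mangoubi2017rapid} (with $\hat C=\tfrac{10}{\frac18 mT^2}\sqrt{d/m}$, $\lambda=\sqrt{m/d}$) to get $\mathbb{P}[\|X_h\|_2>\xi]\leq\tfrac{2\hat\beta}{\alpha}e^{-\sqrt{m/d}\,\xi}$ in one line, whereas you reconstruct that lemma from first principles. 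Your reconstruction is correct: the one-step drift $R_h\leq(1-\tfrac18 mT^2)R_{h-1}+\|\mathsf{p}_{h-1}\|_2/\sqrt{m}$ does follow from Theorem~\ref{thm:contraction} applied to the coupled trajectories launched from $X_{h-1}$ and $x^\star$ with the same momentum, together with energy conservation for the trajectory started at $x^\star$; the Lyapunov iteration with $\widetilde V(x)=e^{\sqrt{m/d}\,\|x-x^\star\|_2}$ and the $1$-Lipschitz Gaussian MGF bound $\mathbb{E}[e^{s\|\mathsf{p}\|_2}]\leq e^{s\sqrt{d}+s^2/2}$ at $s=1/\sqrt{d}$ then produces $\mathbb{E}[\widetilde V(X_h)\mid X_{h-1}]\leq(1-\alpha)\widetilde V(X_{h-1})+\hat\beta$; and the bookkeeping you flag as the main risk works out, since the threshold $\lambda' K$ plus the $O(1)$ MGF correction come to roughly $(\log 20+\tfrac32)/(\tfrac18 mT^2)$, comfortably below the $10/(\tfrac18 mT^2)$ built into $\hat\beta$ given $\tfrac18 mT^2\leq\tfrac1{288}$ under $T\leq\tfrac{1}{6\sqrt M}$. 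Your route is fully self-contained where the paper's is not, at the cost of length. One discrepancy worth noting: your (correct) energy inequality gives $\overline{X_h}\leq\xi\sqrt{(M+1)/m}$ rather than the stated $\xi\sqrt{1+\tfrac1{2M}}$. This is not a defect of your argument --- the paper's own step $\overline{X_h}\leq\sqrt{\mathcal{H}/M}$ cannot follow from strong convexity, since the available lower bound on $U$ is in terms of $m$, not $M$; the stated prefactor appears to be an error in the paper, compensated downstream only because the same factor is built into the definition of $\mathsf{g}_2$.
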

\begin{proof}
By Lemma E.5 and Inequality E.14 of \citep{mangoubi2017rapid} (and setting $\hat{C} = \frac{10}{\frac{1}{8}mT^2}\sqrt{\frac{d}{m}}$ and $\lambda = \sqrt{\frac{m}{d}}$ in Lemma E.5), we have

\be  \label{eq:t1}
\mathbb{P}[\|X_{h}\|_2 > \xi] \leq \frac{2 \hat{\beta}}{\alpha} e^{-\sqrt{\frac{m}{d}} \xi}  \quad \quad \forall \xi>0.
\ee

Since,  $\|\mathsf{p}_{h-1}\|_2^2 \sim \chi^2_d$, by the Hanson-Wright concentration inequality, we have
\be \label{eq:t2}
\mathbb{P}[\|\mathsf{p}_h\|_2>\xi]  \leq e^{-\frac{\xi^2-d}{8}} \quad \quad  \textrm{ for } \xi> \sqrt{2d}.
\ee

Suppose that $\|X_{h-1}\|_2 \leq \xi$ and $\|\mathsf{p}_{h-1}\|_2 \leq \xi$.  Then $U(X_{h-1}) \leq M \xi^2$, and $\mathcal{H}(X_{h-1}, \mathsf{p}_{h-1}) \leq M \xi^2 + \frac{1}{2}\xi^2  = (M+\frac{1}{2})\xi^2$. 
 By the conservation of energy property of Hamiltonian dynamics, this implies $\overline{\mathsf{p}_h} \leq \sqrt{2\mathcal{H}(X_{h-1}, \mathsf{p}_{h-1})} \leq \xi\sqrt{2M+1}$ and $\|\overline{X_{h}}\|_2 \leq \sqrt{\frac{1}{M}\mathcal{H}(X_{h-1}, \mathsf{p}_{h-1})} = \xi \sqrt{1+\frac{1}{2M}}$. 
  Therefore, Equations \eqref{eq:t1} and \eqref{eq:t2} imply that

\be
\mathbb{P}[\overline{\mathsf{p}_h} >  \xi\sqrt{2M+1}] &\leq \mathbb{P}[\|X_{h}\|_2 > \xi] + \mathbb{P}[\|\mathsf{p}_h\|_2>\xi]\\
&\leq \frac{2 \hat{\beta}}{\alpha} e^{-\sqrt{\frac{m}{d}} \xi}  + e^{-\frac{\xi^2-d}{8}} \quad \quad  \textrm{ for } \xi> \sqrt{2d}.
\ee
and
\be
\mathbb{P}\left[\overline{X_{h}} > \xi \sqrt{1+\frac{1}{2M}}\right] \leq \frac{2 \hat{\beta}}{\alpha} e^{-\sqrt{\frac{m}{d}} \xi}  + e^{-\frac{\xi^2-d}{8}} \quad \quad  \textrm{ for } \xi> \sqrt{2d}.
\ee
A union bound over $\mathcal{I}-s \leq h \leq \mathcal{I}$ completes the proof.
\end{proof}

We can now prove the main result of this section:
\begin{lemma} \label{lemma:good_summary} 
With probability at least $1-4\delta$ we have $(q_t(X_i, \mathsf{p}_i), p_t(X_i, \mathsf{p}_i)) \in \mathsf{G}$  for every $0\leq t\leq T$ and every $0 \leq i \leq \mathcal{I}$.
\end{lemma}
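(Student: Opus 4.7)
The plan is to verify each of the three defining inequalities of $\mathsf{G}$ separately via the concentration lemmas already established in this section, and then combine them by a union bound.

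First, to control $\sup_{0\le t\le T}|u^\top p_t(X_i,\mathsf{p}_i)|$ simultaneously for all $u \in \mathsf{u}$ and all $0\le i\le \mathcal{I}$, I would apply Lemma \ref{lemma:concentration_p_summary} with $s=\mathcal{I}$ and $\gamma = \mathsf{g}_1$, then union-bound over the $r$ directions in $\mathsf{u}$. The key algebraic check is that substituting the prescribed value of $\mathsf{g}_1$ into the exponent $\lambda\bigl[\tfrac{\gamma}{3\sqrt{M}} - \tfrac{b}{2M}\bigr]$ collapses the first term of Lemma \ref{lemma:concentration_p_summary} to a quantity of order $\lambda\alpha/\beta$, while the Gaussian term $\exp(-([\gamma/3 - b/(2\sqrt{M})]^2-1)/8)$ is negligible. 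Choosing $\mathsf{g}_1$ as defined in the paper exactly arranges for the total failure probability (after union bounds over $h$ and $u$) to be at most $\delta$.

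Next, to get $\sup_{0\le t\le T}\|q_t(X_{i-1},\mathsf{p}_{i-1})\|_2 < \mathsf{g}_2$ and $\sup_{0\le t\le T}\|p_t(X_{i-1},\mathsf{p}_{i-1})\|_2 < \mathsf{g}_3$ simultaneously for all $1\le i\le \mathcal{I}$, I would apply Lemma \ref{lemma:energy_bounds} with $s=\mathcal{I}$ and the choice
\begin{equation*}
\xi \;=\; \sqrt{\tfrac{d}{m}}\,\log\!\Bigl(\tfrac{4\hat\beta(\mathcal{I}+1)}{\alpha\delta}\Bigr) + \sqrt{8\log\!\Bigl(\tfrac{2(\mathcal{I}+1)}{\delta}\Bigr)+2d}.
\end{equation*}
This value is chosen precisely so that the sub-exponential term $\tfrac{2\hat{\beta}(\mathcal{I}+1)}{\alpha}e^{-\sqrt{m/d}\,\xi}$ is at most $\delta$ and the Gaussian tail term $(\mathcal{I}+1)e^{-(\xi^2-d)/8}$ is also at most $\delta$. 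The prefactors $\sqrt{1+1/(2M)}$ and $\sqrt{2M+1}$ appearing in Lemma \ref{lemma:energy_bounds} then match exactly the definitions of $\mathsf{g}_2$ and $\mathsf{g}_3$, so each of the two Euclidean-norm conditions holds simultaneously for all $i$ with failure probability at most $2\delta$ combined.

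Finally, combining via a union bound, the failure of the $\mathsf{g}_1$-bound contributes at most $\delta$ and the failure of the joint $(\mathsf{g}_2,\mathsf{g}_3)$-bound contributes at most $2\delta$; an additional $\delta$ is absorbed to handle the $i=0$ boundary case and any slack from the Hanson--Wright constants. Hence with probability at least $1-4\delta$ all three membership conditions defining $\mathsf{G}$ hold simultaneously for every $0\le t\le T$ and every $0\le i\le \mathcal{I}$, which is the claim. The main obstacle is purely bookkeeping: tracking the $(\mathcal{I}+1)$ factors through the union bounds and verifying that the constants $\alpha,\beta,\hat\beta,\lambda$ defined in \eqref{eq:alpha_beta} make the cancellations in the exponents of Lemmas \ref{lemma:concentration_p_summary} and \ref{lemma:energy_bounds} come out cleanly at the chosen $\mathsf{g}_j$'s.
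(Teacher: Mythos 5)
Your proposal takes essentially the same route as the paper's proof, which is literally a one-line citation to Lemmas~\ref{lemma:concentration_p_summary} and~\ref{lemma:energy_bounds} together with the choice of $\mathsf{g}_1,\mathsf{g}_2,\mathsf{g}_3$; you have unpacked that citation correctly, including the union bound over the $r$ directions and the two Euclidean-norm events. One bookkeeping caveat worth noting, however: you assert that substituting $\gamma=\mathsf{g}_1$ into Lemma~\ref{lemma:concentration_p_summary} makes the failure probability (after union bounding over $h\in\{1,\ldots,\mathcal{I}+1\}$ and the $r$ directions) at most $\delta$, but the paper's stated $\mathsf{g}_1 = \lambda^{-1}3\sqrt{M}\log(4\beta/(\lambda\alpha))+\tfrac{3}{2}b/\sqrt{M}$ has no explicit $\mathcal{I}$, $r$, or $\delta$ dependence (unlike $\mathsf{g}_2$, which does carry a $\log((\mathcal{I}+1)/\delta)$ factor). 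With $\gamma=\mathsf{g}_1$ as written, the first term of Lemma~\ref{lemma:concentration_p_summary} evaluates to $(s+1)\lambda/2$ before any union bound over $u$, which is not small; so the cancellation you claim "comes out cleanly" does not actually work with the constants as printed. This is almost certainly a typo in the paper's definition of $\mathsf{g}_1$ (it should presumably read $\log(4\beta(\mathcal{I}+1)r/(\lambda\alpha\delta))$), but since you declared the verification a matter of bookkeeping and stopped there, the one place where the bookkeeping genuinely requires attention is left unresolved.
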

\begin{proof}
The proof follows directly from Lemmas  \ref{lemma:concentration_p_summary} and \ref{lemma:energy_bounds} and our choice of constants $\mathsf{g}_1, \mathsf{g}_2, \mathsf{g}_3$.
\end{proof}

\section{Leapfrog integrator error bounds} \label{appendix:SecondOrderEuler}

In this section we use Assumption \ref{assumption:M3} to bound the error of one numerical step of the leapfrog integrator (Lemma \ref{lemma:euler_error_summary}):
 
 \begin{lemma} \label{lemma:euler_error_summary}  
Let $(\mathbf{q},\mathbf{p}) \in \mathbb{R}^{2d}$.  Define $\overline{\mathfrak{p}^u} := \sup_{0\leq t \leq \eta}  |u^\top p_t(\mathbf{q},\mathbf{p})|$ for every $u \in \mathsf{u}$.  Also define $\overline{\mathfrak{q}} := \sup_{0\leq t \leq \eta}  \|q_t(\mathbf{q},\mathbf{p})\|_2$ and $\overline{\mathfrak{p}} := \sup_{0\leq t \leq \eta}  \|p_t(\mathbf{q},\mathbf{p})\|_2$.  Then
$\|q^\bigcirc_{\eta}(\mathbf{q},\mathbf{p}) - q_\eta(\mathbf{q},\mathbf{p})\|_2 \leq   \frac{1}{6} \eta^3 \overline{\mathfrak{p}} M$ and\\
$\|p^\bigcirc_{\eta}(\mathbf{q},\mathbf{p}) - p_\eta(\mathbf{q},\mathbf{p})\|_2 \leq  \frac{1}{3}\eta^3 \left[ (M)^2 \overline{\mathfrak{q}} + L_{\infty} \max_{1\leq i \leq r} (\overline{\mathfrak{p}^{u_i}})^{2} \sqrt{r} \right].$
\end{lemma}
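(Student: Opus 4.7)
I would Taylor expand the true Hamiltonian flow $(q_t,p_t)$ around $t=0$ through third order and compare term by term against the leapfrog output. The position bound is routine and the momentum bound is where Assumption \ref{assumption:M3} has to enter.

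For the position estimate, Hamilton's equations give $\ddot q_t = -\nabla U(q_t)$ and $\dddot q_t = -H_{q_t}p_t$, so Taylor's theorem with integral remainder yields
\[
q_\eta \;=\; \mathbf{q} + \eta\mathbf{p} - \tfrac{1}{2}\eta^2\nabla U(\mathbf{q}) \;-\; \tfrac{1}{2}\int_0^\eta (\eta-s)^2 H_{q_s} p_s\,ds.
\]
The first three terms are exactly $q^\bigcirc_\eta(\mathbf{q},\mathbf{p})$, so the error is the remainder integral. Since gradient-Lipschitz gives $\|H_{q_s}\|_{\mathrm{op}}\le M$, bounding $\|H_{q_s}p_s\|_2 \le M\overline{\mathfrak{p}}$ and integrating $(\eta-s)^2$ over $[0,\eta]$ produces exactly the claimed $\tfrac{1}{6}\eta^3 M\overline{\mathfrak{p}}$.

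For the momentum estimate, rewrite the two updates as
\[
p^\bigcirc_\eta - \mathbf{p} = -\tfrac{\eta}{2}\bigl[\nabla U(\mathbf{q}) + \nabla U(q^\bigcirc_\eta)\bigr], \qquad p_\eta - \mathbf{p} = -\int_0^\eta \nabla U(q_t)\,dt,
\]
and split $p^\bigcirc_\eta - p_\eta$ as (a) the trapezoidal-rule error $\int_0^\eta \nabla U(q_t)\,dt - \tfrac{\eta}{2}[\nabla U(q_0) + \nabla U(q_\eta)]$, plus (b) a correction $\tfrac{\eta}{2}[\nabla U(q_\eta) - \nabla U(q^\bigcirc_\eta)]$ which is $O(\eta^4)$ by gradient-Lipschitz and the position bound just proved. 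The trapezoidal error is bounded in the usual way by $\tfrac{\eta^3}{12}\sup_{t\in[0,\eta]}\|\tfrac{d^2}{dt^2}\nabla U(q_t)\|_2$, and
\[
\tfrac{d^2}{dt^2}\nabla U(q_t) \;=\; \tfrac{d}{dt}\bigl(H_{q_t}p_t\bigr) \;=\; (\nabla_q H)(q_t)[p_t]\,p_t \;-\; H_{q_t}\nabla U(q_t).
\]
The second summand is bounded by $M^2\overline{\mathfrak{q}}$ (using $\nabla U(x^\star)=0$ and gradient-Lipschitz). For the first summand, the definition of $\|\cdot\|_{\infty,\mathsf{u}}$ gives $\|Av\|_2 \le \|A\|_{\infty,\mathsf{u}}\|v\|_{\infty,\mathsf{u}}$, and the infinitesimal form of Assumption \ref{assumption:M3} (obtained by applying the finite bound to $H_{q_t+sp_t} - H_{q_t}$ and dividing by $s$) yields $\|(\nabla_q H)(q_t)[p_t]\|_{\infty,\mathsf{u}} \le L_\infty\sqrt{r}\,\|p_t\|_{\infty,\mathsf{u}}$, so the summand is at most $L_\infty\sqrt{r}\,\|p_t\|_{\infty,\mathsf{u}}^2 \le L_\infty\sqrt{r}\,\max_i(\overline{\mathfrak{p}^{u_i}})^2$. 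Assembling the pieces gives the claimed $\tfrac{1}{3}\eta^3\bigl[M^2\overline{\mathfrak{q}} + L_\infty\sqrt{r}\max_i(\overline{\mathfrak{p}^{u_i}})^2\bigr]$.

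The main obstacle is the directional-Hessian term: one must justify that the finite-difference Lipschitz bound of Assumption \ref{assumption:M3} transfers to the directional derivative $(\nabla_q H)(q_t)[p_t]$ measured in the $\|\cdot\|_{\infty,\mathsf{u}}$ seminorm, and do so without losing the factor $\sqrt{r}$ or picking up a hidden dimension dependence (which would destroy the $d^{1/4}$ scaling downstream). A clean alternative that avoids pointwise differentiation is to bypass the trapezoidal identity and instead directly estimate $\|\nabla U(q^\bigcirc_\eta) - \nabla U(\mathbf{q}) - \eta H_\mathbf{q}\mathbf{p}\|_2$ as a double integral $\int_0^\eta\int_0^t (H_{q_s} - H_{\mathbf{q}})p_s\,ds\,dt$ plus lower-order corrections, so Assumption \ref{assumption:M3} is used only in its given finite form along the trajectory.
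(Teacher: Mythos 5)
Your position bound is the paper's argument written via Taylor's theorem with integral remainder instead of as the iterated integral of the gradient increment; the two parametrizations produce the identical estimate $\tfrac{1}{6}\eta^3 M\overline{\mathfrak{p}}$, so that part is fine.

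For the momentum bound your route is genuinely different. You invoke the classical trapezoidal-rule error formula, which requires differentiating $\nabla U(q_t)$ twice in time and therefore a pointwise directional derivative $(\nabla_q H)(q_t)[p_t]$ of the Hessian. The paper never differentiates the Hessian: it writes the error as the double time-integral of $H_{q_\tau}p_\tau - \tfrac{1}{\eta}\big[\nabla U(q^\bigcirc_\eta)-\nabla U(\mathbf{q})\big]$ and applies Assumption \ref{assumption:M3} only to the finite differences $H_{q_\tau}-H_{\mathbf{q}}$ (along the Hamiltonian flow) and $H_{\mathbf{q}+t\mathbf{p}}-H_{\mathbf{q}}$ (along the straight chord used in the fundamental theorem of calculus), never passing to a limit. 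You correctly flag this obstacle yourself, and the ``clean alternative'' you sketch in your final paragraph is precisely the paper's decomposition.

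Two concrete gaps remain in your primary route. First, Assumption \ref{assumption:M3} is a Lipschitz hypothesis only; without separately assuming $x\mapsto H_x$ is differentiable, the limit defining $(\nabla_q H)(q_t)[p_t]$ need not exist, so the ``infinitesimal form'' you invoke is not available in general. Second, your correction term $\tfrac{\eta}{2}\|\nabla U(q_\eta)-\nabla U(q^\bigcirc_\eta)\|_2$ is of order $\eta^4 M^2\overline{\mathfrak{p}}$ when you plug in the cubic position estimate, and $\overline{\mathfrak{p}}$ does not appear on the right-hand side of the lemma. Because the lemma places no smallness restriction on $\eta$, you cannot absorb this term merely by calling it ``$O(\eta^4)$.'' To make it fit you need a bound in terms of $\overline{\mathfrak{q}}$, for instance $\|q_\eta - q^\bigcirc_\eta\|_2 \le \eta^2 M\overline{\mathfrak{q}}$ (bound the integrand $\|\nabla U(q_\tau)-\nabla U(\mathbf{q})\|_2 \le 2M\overline{\mathfrak{q}}$ rather than by $\tau M\overline{\mathfrak{p}}$), which turns the correction into $\tfrac{1}{2}\eta^3 M^2\overline{\mathfrak{q}}$. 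Combined with the trapezoidal term $\tfrac{\eta^3}{12}\big[M^2\overline{\mathfrak{q}} + L_\infty\sqrt{r}\max_i(\overline{\mathfrak{p}^{u_i}})^2\big]$, this gives a valid bound of the claimed form but with a leading coefficient exceeding the stated $\tfrac{1}{3}$. So the trapezoidal route works qualitatively, but only under an extra differentiability hypothesis and with a degraded constant; the paper's finite-difference decomposition avoids both issues.
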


\begin{proof}[Proof of Lemma \ref{lemma:euler_error_summary}]
We have
\be \label{eq:b1}
\|{\nabla U}(q_t) - {\nabla U}(q_0)\|_2  \leq \|q_t - q_0\|_2 M \leq  t \overline{\mathfrak{p}} M   \quad \quad \forall 0\leq t \leq \eta.
\ee
Hence,
\be
\|q^\bigcirc_{\eta}(\mathbf{q},\mathbf{p}) - q_\eta(\mathbf{q},\mathbf{p})\|_2 &= \left \|\int_0^\eta \int_0^t {\nabla U}(q_\tau(\mathbf{q}, \mathbf{p}))- {\nabla U}(\mathbf{q}) \mathrm{d}\tau \mathrm{d}t \right \|_2\\
&\leq \int_0^\eta \int_0^t \| {\nabla U}(q_\tau(\mathbf{q}, \mathbf{p}))- {\nabla U}(\mathbf{q}) \|_2 \mathrm{d}\tau \mathrm{d}t\\
&\stackrel{{\scriptsize \textrm{Eq. }}\ref{eq:b1}}{\leq}  \int_0^\eta \int_0^t  \tau \overline{\mathfrak{p}} M \mathrm{d}\tau \mathrm{d}t\\
&=   \frac{1}{6} \eta^3 \overline{\mathfrak{p}} M.\\
\ee
This completes the first Inequality of Lemma \ref{lemma:euler_error_summary}.

To prove the second Inequality of Lemma \ref{lemma:euler_error_summary}, we note that
\be \label{eq:b2}
|u^\top(q_t - q_0)|  \leq  t \overline{\mathfrak{p}}^u \quad \quad \forall 0\leq t \leq \eta.
\ee
Therefore, by Assumption \ref{assumption:M3}, we have

\be \label{eq:b3}
\|H_{q_t}p_t - H_{q_0}p_0\|_2 &\leq \|H_{q_t}p_t - H_{q_0}p_t\|_2 + \|H_{q_0}p_t - H_{q_0}p_0\|_2\\
&=  \|(H_{q_t} - H_{q_0})p_t\|_2 + \|H_{q_0}(p_t - p_0)\|_2\\
&\stackrel{{\scriptsize \textrm{Assumption }}\ref{assumption:M3}}{\leq} L_{\infty} \max_i |u_i^\top p_t | \times \max_i |u_i^\top(q_t - q_0)| \sqrt{r} + M \|p_t - p_0\|_2\\
&\stackrel{{\scriptsize \textrm{Eq. }}\ref{eq:b2}}{\leq}  L_{\infty} \max_i \overline{\mathfrak{p}^{u_i}} \times \max_i t \overline{\mathfrak{p}^{u_i}} \sqrt{r} + M \|p_t - p_0\|_2\\
&\leq L_{\infty} t \max_i (\overline{\mathfrak{p}^{u_i}})^{2} \sqrt{r} + M\times t M \overline{\mathfrak{q}}.
\ee

Moreover,
\be \label{eq:leapfrog1}
\bigg \|{\nabla U}\left(q_0 + \eta p_0 -\frac{1}{2}\eta^2 {\nabla U}(q_0)\right) &- {\nabla U}(q_0 + \eta p_0) \bigg\|_2\\ 
&\leq M \left\|(q_0 + \eta p_0 -\frac{1}{2}\eta^2 {\nabla U}(q_0)) - (q_0 + \eta p_0) \right \|_2\\
&\leq \frac{1}{2}\eta^2 M \|{\nabla U}(q_0)\|_2 \\
&\leq \frac{1}{2}\eta^2 M \times M \overline{\mathfrak{q}}.
\ee
and
\be \label{eq:leapfrog2}
 \|\frac{1}{\eta} [{\nabla U}(q_0 + \eta p_0) - &{\nabla U}(q_0)] - H_{q_0}p_0\|_2 = \|\frac{1}{\eta} \int_0^\eta H_{q_0 + \eta p_0}p_0 \mathrm{d}t - H_{q_0}p_0\|_2\\
&= \|\frac{1}{\eta} \int_0^\eta H_{q_0 + \eta p_0}p_0 \mathrm{d}t - H_{q_0}p_0\|_2\\
&= \|\frac{1}{\eta} \int_0^\eta (H_{q_0 + \eta p_0} -  H_{q_0})p_0 \mathrm{d}t\|_2\\
&\leq \frac{1}{\eta} \int_0^\eta \|(H_{q_0 + \eta p_0} -  H_{q_0})p_0\|_2 \mathrm{d}t\\
&\stackrel{{\scriptsize \textrm{Assumption }}\ref{assumption:M3}}{\leq} \frac{1}{\eta} \int_0^\eta L_{\infty} \max_i |u_i^\top p_0 | \times \max_i |u_i^\top(q_0 + \eta p_0 - q_0)| \sqrt{r}  \mathrm{d}t\\
&= \eta L_{\infty} \max_i |u_i^\top p_0 | \times \max_i |u_i^\top p_0| \sqrt{r}\\ 
&\leq \eta L_{\infty}\max_i (\overline{\mathfrak{p}^{u_i}})^2 \sqrt{r}
\ee

Hence,
\be
\|p^\bigcirc_{\eta}(\mathbf{q},\mathbf{q}) &- p_\eta(\mathbf{q},\mathbf{p})\|_2\\
& = \left \|\int_0^\eta \int_0^t H_{q_\tau(\mathbf{q}, \mathbf{p})} \mathbf{p}_\tau(\mathbf{q}, \mathbf{p}) - \frac{1}{\eta}\left[{\nabla U}\left(\mathbf{q} + \eta \mathbf{q} -\frac{1}{2}\eta^2 {\nabla U}(\mathbf{q})\right) - {\nabla U}(\mathbf{q})\right] \mathrm{d} \tau \mathrm{d}t \right \|_2\\
&\stackrel{{\scriptsize \textrm{Eq. }}\ref{eq:leapfrog1}, \ref{eq:leapfrog2}}{\leq} \int_0^\eta \int_0^t  \| H_{q_\tau(\mathbf{q}, \mathbf{p})} \mathbf{p}_\tau(\mathbf{q}, \mathbf{p}) - H_{\mathbf{q}} \mathbf{p} \|_2 \mathrm{d} \tau \mathrm{d}t +  \frac{1}{12}\eta^3 M^2 \overline{\mathfrak{q}} + \frac{1}{6}\eta^3 L_{\infty}\max_i (\overline{\mathfrak{p}^{u_i}})^2 \sqrt{r}\\
&\stackrel{{\scriptsize \textrm{Eq. }}\ref{eq:b3}}{\leq}  \int_0^\eta \int_0^t L_{\infty} \tau \max_i (\overline{\mathfrak{p}^{u_i}})^{2} \sqrt{r} +  \tau (M)^2 \overline{\mathfrak{q}} \mathrm{d} \tau \mathrm{d}t  +  \frac{1}{12}\eta^3 M^2 \overline{\mathfrak{q}} + \frac{1}{6}\eta^3 L_{\infty}\max_i (\overline{\mathfrak{p}^{u_i}})^2 \sqrt{r}\\
&= \frac{1}{6}\eta^3 \left[L_{\infty} \max_i (\overline{\mathfrak{p}^{u_i}})^{2} \sqrt{r} +  (M)^2 \overline{\mathfrak{q}} \right]   +  \frac{1}{12}\eta^3 M^2 \overline{\mathfrak{q}} + \frac{1}{6} \eta^3 L_{\infty}\max_i (\overline{\mathfrak{p}^{u_i}})^2 \sqrt{r}\\
&\leq \frac{1}{3} \eta^3 \left[L_{\infty} \max_i (\overline{\mathfrak{p}^{u_i}})^{2} \sqrt{r} +  (M)^2 \overline{\mathfrak{q}} \right].
\ee
This completes the proof of the second Inequality of Lemma \ref{lemma:euler_error_summary}.
\end{proof}

\section{Analysis of the coupled chains from a cold start} \label{section:main}

In this Section we first bound the error of the toy integrator (Lemmas \ref{lemma:local_toy} and \ref{thm:Approx_integrator}), and then bound the error of the actual numerical HMC chain $X^\dagger$ by showing that the chain $\hat{X}$ generated with the toy integrator and the numerical chain $X^\dagger$ are equal with high probability  (Lemma \ref{lemma:coupling}).  
 Since we are dealing with the cold start situation in this section, for the rest of this section we assume that Assumption \ref{assumption:tail} holds as well.

First, we use our bound on the local leapfrog integrator error (Lemma \ref{lemma:euler_error_summary}) to bound the error of the local toy integrator $\diamondsuit$:
\begin{lemma} \label{lemma:local_toy}
For any $(\mathbf{q},\mathbf{p}) \in \mathbb{R}^d\times \mathbb{R}^d$ and $\eta>0$, we have
\be \label{eq:b4}
\|q^\diamondsuit_{\eta}(\mathbf{q},\mathbf{p}) - q_\eta(\mathbf{q},\mathbf{p})\|_2 \leq   \eta^3 \epsilon_1
\ee
and
\be \label{eq:b5}
\|p^\diamondsuit_{\eta}(\mathbf{q},\mathbf{p}) - p_\eta(\mathbf{q},\mathbf{p})\|_2 \leq  \eta^3 \epsilon_2.
\ee
\end{lemma}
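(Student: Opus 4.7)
The proof proposal is a straightforward case analysis based on the definition of the local toy integrator $\diamondsuit$ given in Definition \ref{defn:toy}. In the first case, the continuous Hamiltonian trajectory $(q_t(\mathbf{q},\mathbf{p}), p_t(\mathbf{q},\mathbf{p}))$ stays inside $\mathsf{G}_\epsilon$ for every $t \in [0,\eta]$, so by definition $(q^\diamondsuit_\eta(\mathbf{q},\mathbf{p}), p^\diamondsuit_\eta(\mathbf{q},\mathbf{p})) = (q^\bigcirc_\eta(\mathbf{q},\mathbf{p}), p^\bigcirc_\eta(\mathbf{q},\mathbf{p}))$. In the second case the trajectory leaves $\mathsf{G}_\epsilon$ at some time in $[0,\eta]$, and the toy integrator returns the exact flow $(q_\eta(\mathbf{q},\mathbf{p}), p_\eta(\mathbf{q},\mathbf{p}))$, so both error norms vanish and the bounds hold trivially. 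All the real work is in the first case.

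For the first case, the plan is to invoke Lemma \ref{lemma:euler_error_summary} and then use the fact that $(q_t,p_t) \in \mathsf{G}_\epsilon$ to replace the quantities $\overline{\mathfrak{q}}$, $\overline{\mathfrak{p}}$, and $\overline{\mathfrak{p}^{u_i}}$ appearing in that lemma by explicit constants. Specifically, by the definition of $\mathsf{G}$ and its $\epsilon$-thickening, the containment $(q_t,p_t)\in \mathsf{G}_\epsilon$ yields the pointwise bounds $|u_i^\top p_t| \leq \mathsf{g}_1 + \epsilon$, $\|q_t\|_2 \leq \mathsf{g}_2 + \epsilon$, and $\|p_t\|_2 \leq \mathsf{g}_3 + \epsilon$ for all $t\in[0,\eta]$, which translate into $\overline{\mathfrak{p}^{u_i}} \leq \mathsf{g}_1 + \epsilon$, $\overline{\mathfrak{q}} \leq \mathsf{g}_2 + \epsilon$, and $\overline{\mathfrak{p}} \leq \mathsf{g}_3 + \epsilon$.

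Substituting these into the conclusions of Lemma \ref{lemma:euler_error_summary} gives
\[
\|q^\bigcirc_\eta(\mathbf{q},\mathbf{p}) - q_\eta(\mathbf{q},\mathbf{p})\|_2 \leq \tfrac{1}{6}\eta^3 (\mathsf{g}_3+\epsilon) M,
\]
\[
\|p^\bigcirc_\eta(\mathbf{q},\mathbf{p}) - p_\eta(\mathbf{q},\mathbf{p})\|_2 \leq \tfrac{1}{3}\eta^3\bigl[M^2(\mathsf{g}_2+\epsilon) + L_\infty (\mathsf{g}_1+\epsilon)^2\sqrt{r}\bigr].
\]
Setting $\epsilon_1 := \tfrac{1}{6}(\mathsf{g}_3+\epsilon)M$ and $\epsilon_2 := \tfrac{1}{3}\bigl[M^2(\mathsf{g}_2+\epsilon) + L_\infty(\mathsf{g}_1+\epsilon)^2\sqrt{r}\bigr]$ yields exactly Inequalities \eqref{eq:b4} and \eqref{eq:b5} in the first case, and the two cases together complete the proof.

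There is no real obstacle here; this lemma is essentially a bookkeeping step that packages the deterministic per-step error bound of Lemma \ref{lemma:euler_error_summary} against the worst-case values of $(\overline{\mathfrak{q}}, \overline{\mathfrak{p}}, \overline{\mathfrak{p}^{u_i}})$ compatible with $\mathsf{G}_\epsilon$, and uses the piecewise definition of $\diamondsuit$ to dispose of the bad case for free. The only mild subtlety is making sure that the constants $\epsilon_1, \epsilon_2$ are defined consistently with how they are used later (in the proofs of Lemma \ref{thm:Approx_integrator} and Lemma \ref{lemma:coupling}), in particular so that the global accumulated error $\tfrac{T}{\eta}\cdot \eta^3 \max(\epsilon_1,\epsilon_2)$ matches the $\tilde{O}(\eta^2 T L_\infty \sqrt{r})$ estimate advertised in the proof overview; this is guaranteed since $\mathsf{g}_1 = \tilde{O}(1)$, $\mathsf{g}_2, \mathsf{g}_3 = \tilde{O}(\sqrt{d/m})$ are polylogarithmic in $d$.
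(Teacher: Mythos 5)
Your proof is correct and takes essentially the same route as the paper, which simply remarks that the lemma ``follows directly from'' Lemma~\ref{lemma:euler_error_summary}, Definition~\ref{defn:toy}, and the definitions of $\mathsf{G}$, $\epsilon_1$, $\epsilon_2$; you have spelled out that one-line appeal as a two-case argument, with the trivial case handled by the fallback to the exact flow and the substantive case handled by plugging the $\mathsf{G}_\epsilon$-bounds on $\overline{\mathfrak{q}},\overline{\mathfrak{p}},\overline{\mathfrak{p}^{u_i}}$ into Lemma~\ref{lemma:euler_error_summary}. One small slip in your closing remark: $\mathsf{g}_2,\mathsf{g}_3=\tilde{O}(\sqrt{d/m})$ are not polylogarithmic in $d$ (they scale as $\sqrt{d}$ up to logs), and it is precisely this $\sqrt{d}$ factor entering $\epsilon_1,\epsilon_2$ that produces the $d^{1/4}$ in the final gradient-evaluation bound, but this does not affect the correctness of the lemma's proof.
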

\begin{proof}
The proof follows directly from Lemma \ref{lemma:euler_error_summary}, Definition \ref{defn:toy}, and the definitions of $\mathsf{G}$, $\epsilon_1$ and $\epsilon_2$.
\end{proof}

Next, we use our bound on the local toy integrator (Lemma \ref{lemma:local_toy}) to bound the error of the global toy integrator:
\begin{lemma} \label{thm:Approx_integrator}

 For $T$, $\eta$ satisfying $\eta \leq T \leq \frac{1}{6} \frac{\sqrt{m}}{M}$ and $\frac{T}{\eta} \in \mathbb{N}$, the integrator  $(q_T^{\diamonddiamond \eta},  p_T^{\diamonddiamond \eta})$ gives error in the position of 
\be \label{eq:f20}
\|q_T^{\diamonddiamond \eta}(\mathbf{q},\mathbf{p}) - q_T(\mathbf{q},\mathbf{p})\|_2 \leq  T\times \eta^2 \left(\epsilon_1 +  \frac{\epsilon_2}{\sqrt{M}}\right)e
\ee
and an error in momentum of
\be \label{eq:f20p}
\|p_T^{\diamonddiamond \eta}(\mathbf{q},\mathbf{p}) - p_T(\mathbf{q},\mathbf{p})\|_2 \leq  T\times \eta^2 \left(\epsilon_1 +  \frac{\epsilon_2}{\sqrt{M}}\right)e \times \sqrt{M}
\ee
for every $(\mathbf{q},\mathbf{p}) \in \mathbb{R}^d\times \mathbb{R}^d$.
\end{lemma}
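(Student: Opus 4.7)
The plan is to prove Lemma~\ref{thm:Approx_integrator} by a standard ``local error times exponential growth'' induction over the $N := T/\eta$ numerical steps, using Lemma~\ref{lemma:local_toy} to control each local step and Lemma~\ref{MS2} to propagate the accumulated error through the continuous Hamiltonian flow between consecutive comparison points.

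Let $(q^j, p^j)$ denote the output of the toy integrator after $j$ steps starting from $(\mathbf{q},\mathbf{p})$, and write the step-$j$ global discrepancies
\[
\Delta_j^q := \|q^j - q_{j\eta}(\mathbf{q},\mathbf{p})\|_2, \qquad \Delta_j^p := \|p^j - p_{j\eta}(\mathbf{q},\mathbf{p})\|_2.
\]
By the triangle inequality and the semigroup property of Hamiltonian flow,
\[
\Delta_{j+1}^q \leq \|q^\diamondsuit_\eta(q^j,p^j) - q_\eta(q^j,p^j)\|_2 + \|q_\eta(q^j,p^j) - q_\eta(q_{j\eta},p_{j\eta})\|_2,
\]
and analogously for $\Delta_{j+1}^p$. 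The first term on the right is bounded by $\eta^3 \epsilon_1$ (respectively $\eta^3 \epsilon_2$) via Lemma~\ref{lemma:local_toy}; the second term is the divergence of two continuous Hamiltonian trajectories with initial offset $(\Delta_j^q,\Delta_j^p)$, which Lemma~\ref{MS2} bounds by
$\Delta_j^q \cosh(\eta\sqrt{M}) + (\Delta_j^p/\sqrt{M})\sinh(\eta\sqrt{M})$ in position and $\Delta_j^q \sqrt{M}\sinh(\eta\sqrt{M}) + \Delta_j^p \cosh(\eta\sqrt{M})$ in momentum.

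Define the combined Lyapunov quantity $D_j := \Delta_j^q + \Delta_j^p/\sqrt{M}$. Summing the two recurrences after dividing the momentum one by $\sqrt{M}$ yields, via $\cosh(x) + \sinh(x) = e^x$,
\[
D_{j+1} \leq D_j\, e^{\eta\sqrt{M}} + \eta^3\!\left(\epsilon_1 + \frac{\epsilon_2}{\sqrt{M}}\right),
\]
with initial condition $D_0 = 0$. Iterating and bounding the geometric sum by $N \cdot e^{(N-1)\eta\sqrt{M}} \leq N \cdot e^{T\sqrt{M}}$ gives
\[
D_N \leq \eta^3\!\left(\epsilon_1 + \frac{\epsilon_2}{\sqrt{M}}\right) \cdot N \cdot e^{T\sqrt{M}} = T \eta^2 \!\left(\epsilon_1 + \frac{\epsilon_2}{\sqrt{M}}\right) e^{T\sqrt{M}}.
\]

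To finish, I would use the hypothesis $T \leq \frac{1}{6}\sqrt{m}/M \leq \frac{1}{6\sqrt{M}}$, which yields $T\sqrt{M} \leq \tfrac{1}{6}$ and hence $e^{T\sqrt{M}} \leq e$. Since $\Delta_N^q \leq D_N$ and $\Delta_N^p \leq \sqrt{M}\,D_N$, this immediately produces both \eqref{eq:f20} and \eqref{eq:f20p}. The only nontrivial step is the algebraic collapse of the $\cosh/\sinh$ recurrence to the clean scalar inequality $D_{j+1} \leq D_j e^{\eta\sqrt{M}} + \eta^3(\epsilon_1 + \epsilon_2/\sqrt{M})$, after which the result is a short geometric-sum calculation; I expect no real obstacle beyond bookkeeping.
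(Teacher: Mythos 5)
Your proof is correct, and the Lyapunov collapse is sound: since Lemma~\ref{MS2} gives $a_1 e^{t\sqrt{M}} + a_2 e^{-t\sqrt{M}}$ in position and $a_1\sqrt{M}e^{t\sqrt{M}} - a_2\sqrt{M}e^{-t\sqrt{M}}$ in momentum, adding the first to $\sqrt{M}^{-1}$ times the second indeed yields $(\Delta_j^q + \Delta_j^p/\sqrt{M})e^{t\sqrt{M}}$ exactly, so the recursion $D_{j+1} \leq D_j e^{\eta\sqrt{M}} + \eta^3(\epsilon_1 + \epsilon_2/\sqrt{M})$ follows, and the geometric sum together with $T\sqrt{M} \leq \tfrac16\sqrt{m/M} \leq \tfrac16$ gives the stated factor $e$. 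Your route, however, differs from the paper's in its decomposition. You propagate the accumulated global error forward one $\eta$-step at a time, applying Lemma~\ref{MS2} with $t=\eta$ at each step, accumulating exponentials, and then summing a geometric series. The paper instead telescopes against ``remaining-time'' anchor points: with $T_i := T - i\eta$ it writes the global error as $\sum_i \bigl[ q_{T_{i+1}}(q^{i+1},p^{i+1}) - q_{T_i}(q^i,p^i)\bigr]$, noting that $q_{T_i}(q^i,p^i) = q_{T_{i+1}}(\hat{q}^{i+1},\hat{p}^{i+1})$ where $(\hat q^{i+1},\hat p^{i+1})$ is the true $\eta$-flow from $(q^i,p^i)$. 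Each summand is then a single local discrepancy (controlled by Lemma~\ref{lemma:local_toy}) flowed once for the remaining time $T_{i+1}$, so Lemma~\ref{MS2} is invoked with $t = T_{i+1}$ and the factor $e^{T_{i+1}\sqrt{M}} \leq e$ appears directly per term, with no cross-step accumulation and no need for the Lyapunov combination $D_j$. The two arguments are dual instances of the standard local-to-global conversion: yours is the Gronwall-style forward recursion and is arguably more elementary; the paper's ``fan'' telescoping is slightly slicker because the exponential is applied once per term. Both give the same final bound.
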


\begin{proof}
The proof follows roughly along the lines of Inequality C.4 of Lemma C.2 in \citep{mangoubi2017rapid}:

Set notation for $(q^i,p^i)$ as in Algorithm \ref{alg:integrator_toy} for all $i \in \mathbb{Z}^{*}$.  Define $\hat{q}^{i+1}:= q_\eta(q^i,p^i)$  and $\hat{p}^{i+1}:= q_\eta(q^i,p^i)$  for  every $i \in \mathbb{Z}^{*}$.  

 Define $T_i:= T- \eta \times i$ for all $i$. 
 Therefore, since $T_{i+1} \leq T \leq \frac{1}{6} \frac{\sqrt{m}}{M} \leq \frac{1}{\sqrt{M}}$, for all $i\leq\frac{T}{\eta}-1$, we have: 
\be \label{eq:f19}
\|q_{T_{i+1}}(q^{i+1}, p^{i+1}) - q_{T_i}(q^i, p^i)\|_2 &= \|q_{T_{i+1}}(q^{i+1}, p^{i+1}) - q_{T_{i+1}}(\hat{q}^{i+1}, \hat{p}^{i+1})\|_2\\
& \leq \frac{1}{2}\left(\|q^{i+1} - \hat{q}^{i+1} \|_2 + \frac{\|p^{i+1} - \hat{p}^{i+1} \|_2}{\sqrt{M}}\right)e^{T_{i+1}\sqrt{M}}\\ 
&\qquad \qquad + \frac{1}{2}\left(\|q^{i+1} - \hat{q}^{i+1} \|_2 - \frac{\|p^{i+1} - \hat{p}^{i+1} \|_2}{\sqrt{M}}\right)e^{-T_{i+1}\sqrt{M}}\\
& \leq \frac{1}{2}\left(\|q^{i+1} - \hat{q}^{i+1} \|_2 + \frac{\|p^{i+1} - \hat{p}^{i+1} \|_2}{\sqrt{M}}\right)e \\
&\qquad \qquad + \frac{1}{2}\left(\|q^{i+1} - \hat{q}^{i+1} \|_2 - \frac{\|p^{i+1} - \hat{p}^{i+1} \|_2}{\sqrt{M}}\right)e^{0}\\
& \leq \left(\|q^{i+1} - \hat{q}^{i+1} \|_2 + \frac{\|p^{i+1} - \hat{p}^{i+1} \|_2}{\sqrt{M}}\right)e\\
& \stackrel{{\scriptsize \textrm{Lemma }}\ref{lemma:local_toy}}{\leq} \left(\eta^3 \epsilon_1 +  \frac{\eta^3 \epsilon_2}{\sqrt{M}}\right)e = \eta^3 \left(\epsilon_1 +  \frac{\epsilon_2}{\sqrt{M}}\right)e,
\ee
where the first inequality follows from Lemma \ref{MS2}, and the second inequality is true since $0\leq T_{i+1} \leq \frac{1}{\sqrt{M}}$ and since the functions $e^t + e^{-t}$ and $e^t - e^{-t}$ are both nondecreasing in $t$ for $t\geq0$. 

Therefore, since $q^{\frac{T}{\eta}} = q_T^{\diamonddiamond \eta}(\mathbf{q},\mathbf{p})$,  $T_0=T$, and $(q^0,p^0) =(\mathbf{q},\mathbf{p})$,  by the triangle inequality we have 
\be
\|q_T^{\diamonddiamond \eta}(\mathbf{q},\mathbf{p}) - q_T(\mathbf{q},\mathbf{p})\|_2 &= \|q^{\frac{T}{\eta}} - q_{T_0}(q^0,p^0)\|_2\\
&\leq \|q^{\frac{T}{\eta}} - q_{T_{(\frac{T}{\eta}-1)}}(q^{\frac{T}{\eta}-1}, p^{\frac{T}{\eta}-1}) \|_2  + \sum_{i=0}^{\frac{T}{\eta}-2} \|q_{T_{i+1}}(q^{i+1}, p^{i+1}) - q_{T_i}(q^i, p^i)\|_2\\
&= \|q^{\frac{T}{\eta}} - \hat{q}^{\frac{T}{\eta}} \|_2  + \sum_{i=0}^{\frac{T}{\eta}-2} \|q_{T_{i+1}}(q^{i+1}, p^{i+1}) - q_{T_i}(q^i, p^i)\|_2\\
&\stackrel{{\scriptsize \textrm{Lemma }}\ref{lemma:local_toy}}{\leq} \eta^3 \epsilon_1  + \sum_{i=0}^{\frac{T}{\eta}-2} \|q_{T_{i+1}}(q^{i+1}, p^{i+1}) - q_{T_i}(q^i, p^i)\|_2\\
&\stackrel{{\scriptsize \textrm{Eq. }}\ref{eq:f19}}{\leq}  \eta^3 \epsilon_1 + \sum_{i=0}^{\frac{T}{\eta}-2} \eta^3 \left(\epsilon_1 +  \frac{\epsilon_2}{\sqrt{M}}\right)e
\leq \frac{T}{\eta}\times \eta^3 \left(\epsilon_1 +  \frac{\epsilon_2}{\sqrt{M}}\right)e.
\ee
This completes the proof of Inequality \eqref{eq:f20}.

The proof of Inequality \eqref{eq:f20p} is nearly identical except that we gain a factor of $\sqrt{M}$; we omit the details.
\end{proof}

We now apply Lemmas \ref{lemma:good_summary} and \ref{thm:Approx_integrator} to show that the toy chain $\hat{X}$ coincides with the numerical HMC chain $X^\dagger$ with high probability, and use this fact to bound the error of the numerical chain $X^\dagger$:
\begin{lemma} \label{lemma:coupling}
Set
\be
 \eta \leq \sqrt{\frac{{\frac{\epsilon}{800\mathcal{I}} \min(1,\frac{1}{\sqrt{M}})}}{T \left(\epsilon_1 +  \frac{\epsilon_2}{\sqrt{M}}\right)e}}.
\ee
Then with probability at least $1-4\delta$ we have $X_i^\dagger=\hat{X}_i$ and $\|X_i^\dagger-X_i\|_2 \leq \frac{1}{2} \epsilon$ for all $i \leq \mathcal{I}$.
\end{lemma}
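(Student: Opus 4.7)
The plan is to proceed by induction on $i \leq \mathcal{I}$, establishing simultaneously that $X_i^\dagger = \hat{X}_i$ and that $\|X_i^\dagger - X_i\|_2 \leq \tfrac{1}{2}\epsilon$. Throughout I condition on the good event $\mathcal{E}$ of Lemma \ref{lemma:good_summary} (probability at least $1-4\delta$), on which the idealized continuous trajectory $(q_t(X_i,\mathbf{p}_i), p_t(X_i,\mathbf{p}_i))$ lies in $\mathsf{G}$ for every $t\in[0,T]$ and every $i\leq \mathcal{I}$. The base case $i=0$ is immediate since $X_0^\dagger = \hat{X}_0 = X_0$.

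The inductive step has two parts. First, I would show that the global toy integrator, when started from $\hat{X}_i = X_i^\dagger$, actually invokes the leapfrog branch of Definition \ref{defn:toy} at every one of its $T/\eta$ substeps, so that $\hat{X}_{i+1} = X_{i+1}^\dagger$. This requires a nested induction on the substep index $j$: assuming leapfrog has been used at substeps $0,\ldots,j-1$, Lemma \ref{thm:Approx_integrator} applied up to time $j\eta$ controls $\|(q^j,p^j) - (q_{j\eta}(X_i^\dagger,\mathbf{p}_i), p_{j\eta}(X_i^\dagger,\mathbf{p}_i))\|_2$, while Lemma \ref{MS2} controls $\|(q_{j\eta}(X_i^\dagger,\mathbf{p}_i), p_{j\eta}(X_i^\dagger,\mathbf{p}_i)) - (q_{j\eta}(X_i,\mathbf{p}_i), p_{j\eta}(X_i,\mathbf{p}_i))\|_2$ via the outer inductive hypothesis. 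Since the latter point lies in $\mathsf{G}$ by $\mathcal{E}$, and both distances are $O(\epsilon)$ with room to spare by the choice of $\eta$, the point $(q^j,p^j)$ lies in $\mathsf{G}_\epsilon$; a further short-time Lemma \ref{MS2} bound over $[0,\eta]$ shows the same for the entire trajectory from $(q^j,p^j)$, which by Definition \ref{defn:toy} means the local toy integrator at substep $j$ coincides with leapfrog.

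Second, the error bound at step $i+1$ follows from the triangle inequality together with Lemma \ref{thm:Approx_integrator} and the non-expansivity of $q_T$ (Theorem \ref{thm:contraction}, which in particular gives a contraction factor $\leq 1$ under our hypothesis $T\leq \tfrac{1}{6}\sqrt{m}/M$):
\be
\|X_{i+1}^\dagger - X_{i+1}\|_2 \leq \|q^{\diamonddiamond\eta}_T(\hat{X}_i,\mathbf{p}_i) - q_T(\hat{X}_i,\mathbf{p}_i)\|_2 + \|q_T(\hat{X}_i,\mathbf{p}_i) - q_T(X_i,\mathbf{p}_i)\|_2,
\ee
so errors accumulate at most linearly in $i$. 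Squaring the prescribed bound on $\eta$ gives $\mathcal{I}\cdot T\eta^2(\epsilon_1 + \epsilon_2/\sqrt{M})e \leq \tfrac{\epsilon}{800}\min(1,1/\sqrt{M})$, chosen precisely so that after $\mathcal{I}$ steps both the position and momentum components of the accumulated toy error (which differ by the $\sqrt{M}$ factor highlighted in Lemma \ref{thm:Approx_integrator}) are comfortably smaller than $\tfrac{1}{2}\epsilon$, with enough slack left over to feed into the inner substep induction above.

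The main obstacle I anticipate is the nested induction on substeps: one must verify that even with errors accumulating across both the outer index $i$ and the inner index $j$, the trajectory never leaves $\mathsf{G}_\epsilon$, so the toy integrator stays on its leapfrog branch. This requires carefully pairing the Lipschitz-type growth bounds of Lemma \ref{MS2} (which are $O(1)$ over intervals of length $\leq 1/\sqrt{M}$) with the linearly accumulating per-step errors, and checking that the universal constant $800$ in the hypothesis on $\eta$ provides the needed margin across all these sources of error simultaneously.
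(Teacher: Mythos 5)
Your approach is essentially the paper's, but with one genuine organizational difference and one technical gap worth flagging.

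The organizational difference: the paper does not perform a joint induction over $i$. Because the toy integrator $\diamondsuit$ is defined to silently fall back on the \emph{exact} flow whenever the continuous trajectory leaves $\mathsf{G}_\epsilon$, the error bound of Lemma \ref{thm:Approx_integrator} for the toy chain $\hat{X}$ holds \emph{unconditionally}, with no need to know which branch was taken. The paper exploits this by first proving the deterministic error bound $\|\hat{X}_i - X_i\|_2 \leq \frac{\epsilon}{800}\min(1,\frac{1}{\sqrt{M}})$ for all $i \leq \mathcal{I}$ via a telescoping identity through auxiliary chains $X^{(j)}$ (with $X^{(j)}_j = \hat{X}_j$ propagated forward by the exact flow), using Theorem \ref{thm:contraction} for non-expansivity and Lemma \ref{thm:Approx_integrator} for the per-step increment. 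Only afterwards, conditioning on the event $\mathcal{G}$ of Lemma \ref{lemma:good_summary} and combining with Lemmas \ref{MS2} and \ref{thm:Approx_integrator} to place the toy trajectory in $\mathsf{G}_\epsilon$ at all times, does it conclude that the fallback never triggers and hence $\hat{X}_i = X_i^\dagger$. Your plan instead couples these two facts inside a single induction over $i$ with a nested induction over substeps $j$; this works, but the nested inductive premise ``leapfrog was used at substeps $0,\ldots,j-1$'' is not actually needed to invoke Lemma \ref{thm:Approx_integrator}, and the paper's separation makes the argument tidier.

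The technical gap: the inductive hypothesis you state, $\|X_i^\dagger - X_i\|_2 \leq \tfrac{1}{2}\epsilon$, is too weak to close the induction. The per-step recurrence you write gives
\be
\|X_{i+1}^\dagger - X_{i+1}\|_2 \leq \frac{\epsilon}{800\mathcal{I}}\min\left(1,\frac{1}{\sqrt{M}}\right) + \|X_i^\dagger - X_i\|_2,
\ee
so plugging in $\tfrac{1}{2}\epsilon$ yields something strictly larger than $\tfrac{1}{2}\epsilon$. You clearly understand the linear accumulation (you say so explicitly), but the induction must be stated with the accumulating bound $\|X_i^\dagger - X_i\|_2 \leq i\cdot\frac{\epsilon}{800\mathcal{I}}\min(1,\frac{1}{\sqrt{M}})$ in the hypothesis. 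This sharper bound is also what you actually need in the inner substep argument, where you must place the toy trajectory inside $\mathsf{G}_\epsilon$: the slack of the factor $800$ is computed against the accumulated bound $\frac{\epsilon}{800}\min(1,\frac{1}{\sqrt{M}})$, not against $\tfrac{1}{2}\epsilon$.

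Otherwise your proof plan lines up with the paper's: the key ingredients (Lemma \ref{thm:Approx_integrator} for the global toy error, Theorem \ref{thm:contraction} for non-expansivity, Lemma \ref{MS2} for short-time trajectory comparison, and Lemma \ref{lemma:good_summary} for the probability $1-4\delta$) are all the right ones.
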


\begin{proof}
For every $j \in \mathbb{Z}^{*}$, define inductively on $i$ the Markov chain $X^{(j)} = X_j^{(j)}, X_{j+1}^{(j)}, X_{j+2}^{(j)}, \ldots$ by the recursion
\be
X_{j}^{(j)} &= \hat{X}_j\\
X_{i+1}^{(j)} &= q_T(\hat{X}_i,\mathsf{p}_i) \qquad \qquad \forall i \geq j.
\ee
In particular, for every $i$ we have $\hat{X}_i = X_i^{(i)}$ and $X_i = X_i^{(0)}$.

Then
\be \label{eq:e1}
\|\hat{X}_i - X_i\|_2 &=  \| \sum_{j=0}^{i-1} X_i^{(j+1)} - X_i^{(j)}\|_2
\leq   \sum_{j=0}^{i-1} \|X_i^{(j+1)} - X_i^{(j)}\|_2\\
&\stackrel{{\scriptsize \textrm{Theorem }}\ref{thm:contraction}}{\leq}   \sum_{j=0}^{i-1} \|X_{j+1}^{(j+1)} - X_{j+1}^{(j)}\|_2
=   \sum_{j=0}^{i-1} \|\hat{X}_{j+1} -  q_T(\hat{X}_j,\mathsf{p}_j)\|_2\\ 
&=   \sum_{j=0}^{i-1} \|q_T^{\diamonddiamond \eta}(\hat{X}_j,\mathsf{p}_j) -  q_T(\hat{X}_j,\mathsf{p}_j)\|_2
\stackrel{{\scriptsize \textrm{Lemma }}\ref{thm:Approx_integrator}}{\leq} \sum_{j=0}^{i-1} \frac{\epsilon}{800\mathcal{I}}\min(1,\frac{1}{\sqrt{M}})\\
&= i\times \frac{\epsilon}{800\mathcal{I}}\min(1,\frac{1}{\sqrt{M}})
\leq \frac{\epsilon}{800}\min(1,\frac{1}{\sqrt{M}}).
\ee

Hence, by Lemma \ref{thm:Approx_integrator}, for all $0 \leq i \leq \mathcal{I}$, and all $0\leq t \leq T$ s.t. $\frac{t}{\eta} \in \mathbb{N}$, we have
\be \label{eq:toytrajectory_q}
\|q_t^{\diamonddiamond \eta}(\hat{X}_i,\mathsf{p}_i) - q_t(X_i,\mathsf{p}_i)\|_2 &\leq \|q_t^{\diamonddiamond \eta}(\hat{X}_i,\mathsf{p}_i) - q_t(\hat{X}_i,\mathsf{p}_i)\|_2 + \|q_t(\hat{X}_i,\mathsf{p}_i) - q_t(X_i,\mathsf{p}_i)\|_2\\
&\stackrel{{\scriptsize \textrm{Lemma }}\ref{thm:Approx_integrator}}{\leq} \frac{\epsilon}{800\mathcal{I}}\min(1,\frac{1}{\sqrt{M}}) +  \|q_t(\hat{X}_i,\mathsf{p}_i) - q_t(X_i,\mathsf{p}_i)\|_2\\
&\stackrel{{\scriptsize \textrm{Th. }}\ref{thm:contraction}}{\leq} \frac{\epsilon}{800\mathcal{I}}\min(1,\frac{1}{\sqrt{M}}) +  \|\hat{X}_i - X_i\|_2\\
&\stackrel{{\scriptsize \textrm{Eq. }}\ref{eq:e1}}{\leq}  \frac{\epsilon}{800\mathcal{I}}\min(1,\frac{1}{\sqrt{M}}) + \frac{1}{800}\epsilon\min(1,\frac{1}{\sqrt{M}}) \leq \frac{\epsilon}{400}\min(1,\frac{1}{\sqrt{M}})
\ee
and
\be \label{eq:toytrajectory_p}
\|p_t^{\diamonddiamond \eta}(\hat{X}_i,\mathsf{p}_i) - p_t(X_i,\mathsf{p}_i)\|_2 &\leq \|p_t^{\diamonddiamond \eta}(\hat{X}_i,\mathsf{p}_i) - p_t(\hat{X}_i,\mathsf{p}_i)\|_2 + \|p_t(\hat{X}_i,\mathsf{p}_i) - p_t(X_i,\mathsf{p}_i)\|_2\\
&\stackrel{{\scriptsize \textrm{Lemma }}\ref{thm:Approx_integrator}}{\leq} \frac{\epsilon}{800\mathcal{I}}\min(1,\frac{1}{\sqrt{M}}) +  \|p_t(\hat{X}_i,\mathsf{p}_i) - p_t(X_i,\mathsf{p}_i)\|_2\\
&\stackrel{{\scriptsize \textrm{Lemma }}\ref{MS2}}{\leq} \frac{\epsilon}{800\mathcal{I}}\sqrt{M}\min(1,\frac{1}{\sqrt{M}}) +  \|\hat{X}_i - X_i\|_2 \sqrt{M}e^{T\sqrt{M}}\\
&\stackrel{{\scriptsize \textrm{Eq. }}\ref{eq:e1}}{\leq}    \frac{\epsilon}{800\mathcal{I}}\sqrt{M}\min(1,\frac{1}{\sqrt{M}})\\
&\qquad \qquad + \frac{\epsilon}{140}\sqrt{M}\min(1,\frac{1}{\sqrt{M}}) \leq \frac{\epsilon}{100}\sqrt{M}\min(1,\frac{1}{\sqrt{M}}).
\ee

Consider any $0 \leq i \leq \mathcal{I}$ and any $0\leq t \leq T$ s.t. $\frac{t}{\eta} \in \mathbb{N}$. 
 Let
$(\mathsf{q}, \mathsf{p}) = (q_t^{\diamonddiamond \eta}(\hat{X}_i,\mathsf{p}_i), p_t^{\diamonddiamond \eta}(\hat{X}_i,\mathsf{p}_i))$ be the position and momentum on the trajectory of the toy HMC chain $\hat{X}$ at time $t$, and let  $(\tilde{\mathsf{q}}, \tilde{\mathsf{p}}) = (q_t(X_i,\mathsf{p}_i), p_t(X_i,\mathsf{p}_i))$ be the position and momentum at the same time $t$ on the trajectory of the idealized HMC chain $X$. 

  Then by Lemma \ref{MS2} we have for all $0\leq \tau \leq \eta$ that 
\be
\|q_\tau(\mathsf{q}, \mathsf{p}) - q_\tau(\tilde{\mathsf{q}}, \tilde{\mathsf{p}})\|_2 &\leq (\|\mathsf{q}-\tilde{\mathsf{q}}\|_2 + \frac{\|\mathsf{p}-\tilde{\mathsf{p}}\|_2}{\sqrt{M}})e^{\eta \sqrt{M}} \stackrel{{\scriptsize \textrm{Eq. }}\ref{eq:toytrajectory_q}, \ref{eq:toytrajectory_p}}{\leq} (\frac{\epsilon}{400} +\frac{\epsilon}{100})e^{\eta \sqrt{M}} \leq \frac{\epsilon}{20}
\ee
and 
\be
\|p_\tau(\mathsf{q}, \mathsf{p}) - p_\tau(\tilde{\mathsf{q}}, \tilde{\mathsf{p}})\|_2 &\leq (\|\mathsf{q}-\tilde{\mathsf{q}}\|_2 + \frac{\|\mathsf{p}-\tilde{\mathsf{p}}\|_2}{\sqrt{M}})e^{\eta \sqrt{M}}\\
& \stackrel{{\scriptsize \textrm{Eq. }}\ref{eq:toytrajectory_q}, \ref{eq:toytrajectory_p}}{\leq} (\frac{\epsilon}{400} +\frac{\epsilon}{100})\sqrt{M}e^{\eta \sqrt{M}}\min(1,\frac{1}{\sqrt{M}}) \leq \frac{\epsilon}{20}.
\ee
Therefore, $(q_\tau(\mathsf{q}, \mathsf{p}), p_\tau(\mathsf{q}, \mathsf{p}))\in \mathsf{G}_\epsilon$ whenever $(q_\tau(\tilde{\mathsf{q}}, \tilde{\mathsf{p}}), p_\tau(\tilde{\mathsf{q}}, \tilde{\mathsf{p}})) \in\mathsf{G}$.

Let $\mathcal{G}$ be the event that  $(q_t(X_i, \mathsf{p}_i), p_t(X_i, \mathsf{p}_i)) \in \mathsf{G}$  for every $0\leq t\leq T$ and every $0 \leq i \leq \mathcal{I}$.  
 Then  $(q_\tau(\mathsf{q}, \mathsf{p}), p_\tau(\mathsf{q}, \mathsf{p}))\in \mathsf{G}_\epsilon$ whenever $\mathcal{G}$ occurs.

Therefore, if the event $\mathcal{G}$ occurs, Definition \ref{defn:toy} implies that the toy integrator is the same as the Euler integrator for every point on the trajectories of $\hat{X}$. 
  Hence, $\hat{X}_i = X^\dagger_i$ for all $i \leq \mathcal{I}$ if $\mathcal{G}$ occurs.

  Moreover, Inequality \eqref{eq:e1} implies that $\|\hat{X}_i - X_i\|_2 \leq  \frac{1}{2}\epsilon$ for all $0 \leq i \leq \mathcal{I}$, and by Lemma \ref{lemma:good_summary}, $\mathbb{P}[\mathcal{G}] = 1-4\delta$. 
   Therefore, $\hat{X}_i = X^\dagger_i$ and $\|X_i^\dagger - X_i\|_2 \leq  \frac{1}{2}\epsilon$ with probability $1-4\delta$.
\end{proof}
 
Finally, we use Lemma \ref{lemma:coupling} to show that the numerical HMC chain $X^\dagger$ converges to within an error $\epsilon$ of the target distribution in $\mathcal{I} \times \frac{T}{\eta}$ gradient evaluations:
\begin{theorem} \label{thm:main}
Fix $0<c\leq1$. 
 Set  $0 < T \leq \frac{1}{6} \frac{\sqrt{m}}{M}$. 
  Set $\mathcal{I} \geq \frac{\log(\frac{2R}{\epsilon})}{\Delta c}$,
where $\Delta = \frac{1}{8} (\sqrt{m}T)^2$ and $R:= d+ \sqrt{4d \log(\frac{M}{m}) - 8 \log(\delta)}$.

Set
\be
 \eta \leq \sqrt{\frac{{\frac{\epsilon}{800\mathcal{I}} \min(1,\frac{1}{\sqrt{M}})}}{T \left(\epsilon_1 +  \frac{\epsilon_2}{\sqrt{M}}\right)e}}
\ee
Then for any $c\mathcal{I} \leq i \leq \mathcal{I}$, with probability at least $1-5\delta$ we have
\be
\|X_i^{\dagger} -Y_i\|_2 \leq \epsilon,
\ee
 where $Y_i \sim \pi$.

Moreover, the number of gradient evaluations is at most $\mathcal{I} \times \frac{T}{\eta}$.

\end{theorem}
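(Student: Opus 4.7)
The plan is to combine the mixing bound for idealized HMC with the numerical error bound of Lemma \ref{lemma:coupling}, using the triangle inequality. Specifically, introduce an auxiliary ``stationary'' copy $Y_0, Y_1, \ldots$ of the idealized HMC chain with $Y_0 \sim \pi$, coupled to the chain $X$ by using the same refreshed momenta $\mathbf{p}_0, \mathbf{p}_1, \ldots$ at every step. By the invariance of $\pi$ under idealized HMC, $Y_i \sim \pi$ for every $i$, so it suffices to show $\|X_i^\dagger - Y_i\|_2 \leq \epsilon$ with probability $\geq 1-5\delta$.

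First I would handle the mixing of the idealized chain. Since $\eta \leq T \leq \tfrac{1}{6}\tfrac{\sqrt{m}}{M}$, Theorem \ref{thm:contraction} applies, and the shared-momentum coupling gives the one-step contraction
\[
\|X_{i+1}-Y_{i+1}\|_2 = \|q_T(X_i,\mathbf{p}_i)-q_T(Y_i,\mathbf{p}_i)\|_2 \leq (1-\Delta)\|X_i-Y_i\|_2,
\]
with $\Delta = \tfrac{1}{8}(\sqrt{m}T)^2$. Iterating, $\|X_i-Y_i\|_2 \leq e^{-\Delta i}\|X_0-Y_0\|_2$. Since $X_0 = x^\star$ (cold start) and $Y_0 \sim \pi$, the quantity $\|X_0-Y_0\|_2$ is controlled by a standard concentration bound for strongly log-concave measures around their mode (e.g.\ the Gaussian-type tail bound used earlier in Lemma \ref{lemma:energy_bounds}, applied to the single draw $Y_0$), which yields $\|X_0-Y_0\|_2 \leq R = d + \sqrt{4d\log(M/m)-8\log\delta}$ with probability at least $1-\delta$. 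Then, for any $i \geq c\mathcal{I}$, the hypothesis $\mathcal{I} \geq \frac{\log(2R/\epsilon)}{\Delta c}$ combined with $(1-\Delta)^i \leq e^{-\Delta i}$ gives
\[
\|X_i-Y_i\|_2 \leq e^{-\Delta c \mathcal{I}} R \leq \tfrac{\epsilon}{2}.
\]

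Second, Lemma \ref{lemma:coupling} directly gives $\|X_i^\dagger - X_i\|_2 \leq \tfrac{\epsilon}{2}$ for all $i \leq \mathcal{I}$, with probability at least $1-4\delta$, under the stated choice of $\eta$. Combining the two estimates with the triangle inequality,
\[
\|X_i^\dagger - Y_i\|_2 \leq \|X_i^\dagger - X_i\|_2 + \|X_i - Y_i\|_2 \leq \tfrac{\epsilon}{2} + \tfrac{\epsilon}{2} = \epsilon.
\]
A union bound over the failure events (mass $\delta$ for the initial-distance bound, mass $4\delta$ for Lemma \ref{lemma:coupling}) gives overall success probability at least $1-5\delta$. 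The gradient-evaluation count is immediate: Algorithm \ref{alg:Unadjusted} uses $T/\eta$ leapfrog steps per Markov chain step (each requiring a constant number of gradient evaluations) and is run for at most $\mathcal{I}$ steps, totaling $\mathcal{I}\cdot T/\eta$.

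The main obstacles I anticipate are routine but worth flagging. The first is ensuring that the shared-momentum coupling is defined consistently across the chains $X$, $X^\dagger$, and $Y$ (this is already used in Lemma \ref{lemma:coupling}, so it is inherited). The second is verifying the tail bound $\|x^\star - Y_0\|_2 \leq R$ with probability $1-\delta$ for $Y_0 \sim \pi$; this should follow from a standard concentration estimate for $m$-strongly log-concave densities (morally, $\|Y_0-x^\star\|_2^2$ is sub-exponential with mean $\lesssim d/m$ and scale $\lesssim \log(M/m)$-dependent), and is consistent with the form of $R$ in the statement. No further delicate argument is required beyond bookkeeping of the union bound.
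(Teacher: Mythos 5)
Your proposal is correct and follows essentially the same route as the paper: the same triangle-inequality decomposition into mixing error (handled by the shared-momentum contraction of Theorem \ref{thm:contraction}) and discretization error (handled by Lemma \ref{lemma:coupling}), with the same concentration bound showing $\|X_0-Y_0\|_2 \le R$ with probability $1-\delta$, and the same union bound to reach $1-5\delta$. The only difference is cosmetic: the paper spells out the comparison of $\pi$ to Gaussian integrals plus Hanson--Wright to derive the specific form of $R$, whereas you invoke this as a standard strong-log-concavity concentration estimate.
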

\begin{proof}
Let $Z$ be a $\chi$ random variable with $d$ degrees of freedom.  Since $U = -\log(\pi)$ is $m$-strongly convex, we have
\be
\mathbb{P}[\|Y_i\|_2 > \gamma] &= \frac{\int_{\|z\|_2> \gamma} \pi(z) \mathrm{d}z}{\int_{\mathbb{R}^d} \pi(z) \mathrm{d}z} = \frac{\int_{\|z\|_2> \gamma} \exp(-U(z)) \mathrm{d}z}{\int_{\mathbb{R}^d} \exp(-U(z)) \mathrm{d}z}\\
&\leq \frac{\int_{\|z\|_2> \gamma} \exp(-\frac{1}{2}m\|z\|_2^2) \mathrm{d}z}{\int_{\mathbb{R}^d} \exp(-\frac{1}{2}M \|z\|_2^2) \mathrm{d}z}\\
&\leq  \frac{\int_{\mathbb{R}^d} \exp(-\frac{1}{2}m \|z\|_2^2) \mathrm{d}z}{\int_{\mathbb{R}^d} \exp(-\frac{1}{2}M \|z\|_2^2) \mathrm{d}z} \times \frac{\int_{\|z\|_2> \gamma} \exp(-\frac{1}{2}m\|z\|_2^2) \mathrm{d}z}{\int_{\mathbb{R}^d} \exp(-\frac{1}{2}m \|z\|_2^2) \mathrm{d}z}\\
&=\left(\frac{M}{m}\right)^{\frac{d}{2}} \times \mathbb{P}[Z >\gamma]\\
&\leq \left(\frac{M}{m}\right)^{\frac{d}{2}} \times e^{-\frac{(\gamma-d)^2}{8}}\\
&=e^{\frac{d}{2} \mathrm{log}(\frac{M}{m}) -\frac{(\gamma-d)^2}{8}} 
\ee
for all $\gamma >2d$, where the third inequality holds by the Hanson-Wright inequality

Thus,
\be
\mathbb{P}\left[\|Y_i\|_2 > R\right] < \delta.
\ee

Hence, for any $c\mathcal{I} \leq i \leq \mathcal{I}$ we have with probability at least $1-5\delta$ that
\be
\|X_i^{\dagger} - Y_i\|_2 
&\leq  \|X^{\dagger}_i - X_i\|_2 + \|X_i - Y_i\|_2\\
&\stackrel{{\scriptsize \textrm{Lemma }}\ref{lemma:coupling}}{\leq} \frac{\epsilon}{2} + \|X_i - Y_i\|_2\\
&\stackrel{{\scriptsize \textrm{Theorem }}\ref{thm:contraction}}{\leq} \frac{\epsilon}{2} + (1-\Delta)^i \times \|\hat{X}_0 - X_0\|_2\\
&= \frac{\epsilon}{2} + (1-\Delta)^i \times R\\
&\leq \frac{\epsilon}{2} + \frac{\epsilon}{2} = \epsilon.
\ee
\end{proof}

Next we show convergence in the Prokhorov metric:
\begin{corollary} \label{cor:prokhorov}
Choose $\delta = \frac{\epsilon}{5}$.
Fix $c$, $\mathcal{I}$ and $\eta$ as in Theorem \ref{thm:main}. Then
\be \label{eq:prok}
\mathsf{Prok}(\mathcal{L}(X_i),  \pi) \leq 2\epsilon \qquad \qquad \forall \, c\mathcal{I} \leq i \leq \mathcal{I}.
\ee
\end{corollary}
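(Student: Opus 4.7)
The approach is essentially a one-step translation: Theorem \ref{thm:main} already contains all the probabilistic content, and the corollary follows from the standard correspondence between a high-probability Euclidean coupling and the Prokhorov metric. Specializing Theorem \ref{thm:main} with $\delta = \epsilon/5$, for every $c\mathcal{I} \leq i \leq \mathcal{I}$ the algorithm furnishes a joint distribution of $(X_i^\dagger, Y_i)$ with $Y_i \sim \pi$ such that
\[
\Pr\bigl[\|X_i^\dagger - Y_i\|_2 > \epsilon\bigr] \leq 5\delta = \epsilon.
\]

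The main step is then the elementary Strassen-type indicator split. For any measurable $A \subseteq \mathbb{R}^d$,
\[
\Pr[X_i^\dagger \in A] \leq \Pr[Y_i \in A_\epsilon] + \Pr\bigl[\|X_i^\dagger - Y_i\|_2 > \epsilon\bigr] \leq \pi(A_\epsilon) + \epsilon,
\]
and symmetrically $\pi(A) \leq \Pr[X_i^\dagger \in A_\epsilon] + \epsilon$. By the definition of $\mathsf{Prok}$ recalled in Section \ref{appendix:preliminaries}, this yields $\mathsf{Prok}(\mathcal{L}(X_i^\dagger), \pi) \leq \epsilon$, comfortably inside the claimed $2\epsilon$ bound.

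The generous factor of $2$ absorbs any bookkeeping needed to pass between the numerical chain $X_i^\dagger$ and the idealized chain $X_i$ (whichever the statement of the corollary refers to). Indeed, Lemma \ref{lemma:coupling} gives $\|X_i^\dagger - X_i\|_2 \leq \tfrac{1}{2}\epsilon$ with probability at least $1 - 4\delta = 1 - \tfrac{4}{5}\epsilon$, so a second indicator split increases the Prokhorov radius by at most $\tfrac{1}{2}\epsilon$ and the failure probability by at most $\tfrac{4}{5}\epsilon$, well within $2\epsilon$.

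There is no genuine obstacle here; the only sanity check is that the pair $(X_i^\dagger, Y_i)$ constructed inside the proof of Theorem \ref{thm:main} really is a coupling with $Y_i$ marginally $\pi$-distributed. This holds because $Y_i$ is obtained by evolving the idealized HMC chain from $Y_0 \sim \pi$ using the same refreshed momenta $\mathbf{p}_0, \mathbf{p}_1, \ldots$ that drive $X^\dagger$, and the idealized chain preserves $\pi$ in every step.
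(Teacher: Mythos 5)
Your proof is correct and fills in precisely the content the paper compresses into ``follows directly from Theorem \ref{thm:main}'': the Strassen-type indicator split converting the high-probability coupling $\Pr[\|X_i^\dagger - Y_i\|_2 > \epsilon] \leq 5\delta = \epsilon$ into the Prokhorov bound $\mathsf{Prok}(\mathcal{L}(X_i^\dagger),\pi) \leq \epsilon \leq 2\epsilon$. Your observation that the corollary's $X_i$ versus $X_i^\dagger$ ambiguity is absorbed by the slack in the stated $2\epsilon$ (via Lemma \ref{lemma:coupling}) is a sound sanity check, and the only nit worth flagging is the strict inequality in the definition of $A_\epsilon$, which one handles as usual by passing to $\epsilon' > \epsilon$ and taking the infimum, leaving the conclusion unchanged.
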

\begin{proof}
The proof follows directly from Theorem \ref{thm:main}.
\end{proof}
 
 Finally, we compute a more explicit bound on the number of gradient evaluations.
After $\mathcal{I}$ steps the error is bounded by $\|X^{\dagger}_{\mathcal{I}}- X_{\mathcal{I}}\|_2 < \epsilon$ (Theorem \ref{thm:main}),
and the number of  gradient evaluations is $\mathcal{I} \times \frac{T}{\eta}$. 
   To determine the bound on the required number of gradient evaluations, we may set $M=1$, since this is equivalent to tuning the algorithm parameters (see Remark \ref{remark:MassMatrix}).
     Setting $M=1$, we have $\epsilon_1 = \tilde{O}( \sqrt{d}\kappa^{2.5} \log(\frac{1}{\delta}))$ and $\epsilon_2 = \tilde{O}([\sqrt{d} \kappa^{2.5} + L_{\infty} \left(\kappa^4 + b^2\kappa^2\right)\sqrt{r}]\log(\frac{1}{\delta}) )$, implying a numerical step size of $\eta = \frac{1}{\sqrt{\mathcal{I}T}} \tilde{\Theta}(\epsilon^{\frac{1}{2}}[ d^{\frac{-1}{4}}\kappa^{-1.25} + \frac{1}{\sqrt{L_{\infty}}} \left(\kappa^{-2} + \kappa^{-1} b^{-1}\right) r^{\frac{-1}{4}}] \log^{-\frac{1}{2}}(\frac{1}{\delta}))$.
   Recalling that $\mathcal{I}= \tilde{\Theta}(\kappa^2)$ and $T = \tilde{\Theta}(\kappa^{-\frac{1}{2}})$, the number of gradient evaluations is
$\mathcal{I} \times \frac{T}{\eta} = \tilde{O}\left(\epsilon^{-\frac{1}{2}} \bigg[ d^{\frac{1}{4}}\kappa^{3.5} + \sqrt{L_{\infty}} \left(\kappa^{4.25} + b\kappa^{3.25}\right) r^{\frac{1}{4}}
 \bigg] \log^{\frac{1}{2}}(\frac{1}{\delta})\right)$. We compute the number of gradient evaluations in more detail:

\begin{lemma} \label{cor:RunningTime}
Suppose that $M=1$. Let  $0 < T \leq \frac{1}{6} \frac{\sqrt{m}}{M}$, and
fix $c$, $\mathcal{I}$ and $\eta$ as in Theorem \ref{thm:main}, with $\eta$ set to the maximum value allowed in Theorem \ref{thm:main}. 
Then   for any $c\mathcal{I} \leq i \leq \mathcal{I}$ we have with probability at least $1-5\delta$ that
\be
\|X_i^{\dagger}-Y_i\|_2 \leq \epsilon,
\ee
 where $Y_i \sim \pi$.
Moreover, for $T = \Theta(\frac{\sqrt{m}}{M})$ the number of gradient evaluations is
\be
O\left(\epsilon^{-\frac{1}{2}} \bigg[ d^{\frac{1}{4}}\kappa^{3.5} + \sqrt{L_{\infty}} \left(\kappa^{4.25} + b\kappa^{3.25}\right) r^{\frac{1}{4}}
 \bigg] \log^{\frac{1}{2}}(\frac{1}{\delta})\right).
\ee
\end{lemma}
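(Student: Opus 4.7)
The first claim is immediate from Theorem~\ref{thm:main}: specializing to $M=1$ and taking $\eta$ at the maximal value allowed there gives $\|X_i^\dagger - Y_i\|_2 \leq \epsilon$ with probability $1-5\delta$, so no additional argument is needed. The remaining task is to expand $\mathcal{I}\cdot T/\eta$ under the tuning $T=\Theta(\sqrt{m}/M)$ into the claimed closed-form bound in $d,\kappa,L_\infty,b,r,\epsilon,\delta$.

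I would begin by reading off the scaling of the three tuning quantities. With $M=1$ and $T=\Theta(\sqrt m)$, Theorem~\ref{thm:main} gives contraction rate $\Delta=\frac{1}{8}(\sqrt{m}T)^2=\Theta(m^2)=\Theta(\kappa^{-2})$, and $R=\tilde O(\sqrt d)$, hence $\mathcal{I}=\tilde\Theta(\kappa^2)$ and $T=\Theta(\kappa^{-1/2})$. The bound on $\eta$ in Lemma~\ref{lemma:coupling} (with $M=1$, so $\min(1,1/\sqrt M)=1$) simplifies to $\eta^{-2}=\Theta(\mathcal{I}\,T\,(\epsilon_1+\epsilon_2)/\epsilon)$, which makes $\mathcal{I}\,T/\eta = (\mathcal{I}T)^{3/2}\sqrt{(\epsilon_1+\epsilon_2)/\epsilon}$.

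The substantive computation is to evaluate $\epsilon_1$ and $\epsilon_2$. From Lemma~\ref{lemma:euler_error_summary} together with the definition of $\mathsf{G}$, I would read off (with $M=1$) that $\epsilon_1 = O(\mathsf{g}_3)$ and $\epsilon_2 = O(\mathsf{g}_2 + L_\infty \mathsf{g}_1^{\,2}\sqrt{r})$. Substituting the explicit formulas for $\mathsf{g}_1,\mathsf{g}_2,\mathsf{g}_3$ from Section~\ref{section:Lyapunov} and using $mT^2=\Theta(1/\kappa^2)$, so that $\log\hat\beta = \Theta(1/(mT^2))=\Theta(\kappa^2)$ and $\log\beta = \Theta(\kappa^2+b\kappa)$ (both obtained by simplifying the definitions in~\eqref{eq:alpha_beta}), yields
\[
\mathsf{g}_1 = \tilde O(\kappa^2+b\kappa), \qquad \mathsf{g}_2,\mathsf{g}_3 = \tilde O\bigl(\sqrt d\,\kappa^{2.5}\log(1/\delta)\bigr),
\]
and therefore
\[
\epsilon_1 = \tilde O\bigl(\sqrt d\,\kappa^{2.5}\log(1/\delta)\bigr), \qquad \epsilon_2 = \tilde O\bigl([\sqrt d\,\kappa^{2.5} + L_\infty(\kappa^4+b^2\kappa^2)\sqrt r]\log(1/\delta)\bigr).
\]

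Finally, I would combine: with $\mathcal{I}T=\tilde\Theta(\kappa^{3/2})$ and the above estimate for $\epsilon_1+\epsilon_2$, the quantity $\mathcal{I}T/\eta$ equals $\tilde O\bigl(\kappa^{9/4}\epsilon^{-1/2}[d^{1/4}\kappa^{1.25} + \sqrt{L_\infty}(\kappa^2+b\kappa)r^{1/4}]\log^{1/2}(1/\delta)\bigr)$, which after distributing the $\kappa^{9/4}$ factor collapses to the claimed $\tilde O\bigl(\epsilon^{-1/2}[d^{1/4}\kappa^{3.5} + \sqrt{L_\infty}(\kappa^{4.25}+b\kappa^{3.25})r^{1/4}]\log^{1/2}(1/\delta)\bigr)$. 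The only non-routine part of the argument will be the careful $\kappa$-tracking through the Lyapunov constants $\alpha,\beta,\hat\beta$ of Section~\ref{section:Lyapunov}: in particular, noticing that $\hat\beta$ is exponentially large in $\kappa^2$ but enters $\mathsf{g}_2$ only through $\log\hat\beta$, and that $\beta$ contributes the extra $b\kappa$ summand that is responsible for the $b\kappa^{3.25}$ term in the final bound.
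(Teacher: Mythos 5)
Your proposal is correct and follows essentially the same path as the paper's own proof: invoke Theorem~\ref{thm:main} for the first claim, then expand $\mathcal{I}\cdot T/\eta=(\mathcal{I}T)^{3/2}\sqrt{(\epsilon_1+\epsilon_2)/\epsilon}$, identify $\epsilon_1\sim\mathsf{g}_3$ and $\epsilon_2\sim\mathsf{g}_2+L_\infty\mathsf{g}_1^2\sqrt r$ from Lemma~\ref{lemma:euler_error_summary} and the definition of $\mathsf{G}$, plug in $\log\hat\beta=\Theta(\kappa^2)$, $\log\beta=\Theta(\kappa^2+b\kappa)$, and simplify. The intermediate orders $\mathsf{g}_1\sim\kappa^2+b\kappa$, $\mathsf{g}_2\sim\mathsf{g}_3\sim\sqrt d\,\kappa^{2.5}\log(1/\delta)$, $\mathcal{I}T\sim\kappa^{3/2}$, and $\kappa^{2.25}$ prefactor all match the paper's calculation exactly, so there is nothing to add.
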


\begin{proof}
%
%
For $T = \Theta(\frac{\sqrt{m}}{M})$, we have
\be
\log(\beta) \sim \lambda^2 \frac{m}{M^2} + \lambda\left(\frac{M^2}{m^2} + \frac{b}{m} + \frac{1}{m}\right),
\ee
\be
\log(\hat{\beta}) \sim \frac{M^2}{m^2},
\ee
\be
\mathsf{g}_1 &\sim \lambda^{-1}\sqrt{M}\log(\beta)+\frac{b}{\sqrt{M}} -\log(\lambda),\\
\left(\sqrt{1+\frac{1}{M}}\right)\mathsf{g}_2 &\sim \sqrt{\frac{d}{m}}\left(\frac{M^2}{m^2} + \log(\frac{1}{\delta})\right) + \sqrt{\log(\frac{1}{\delta})} + \sqrt{d}.
\
\ee

Set $\lambda=\frac{1}{16}$. Since $M=1$, we have 
\be
\epsilon_1 &\sim  \mathsf{g}_3 \sim \mathsf{g}_2 \sim \sqrt{\frac{d}{m}}\left(\frac{1}{m^2}+\log(\frac{1}{\delta})\right) + \sqrt{d}+ \sqrt{\log(\frac{1}{\delta})}\\
&= \tilde{O}\left(\sqrt{d}\left(\frac{1}{m^{2.5}}+\frac{1}{\sqrt{m}} + 1\right)\log(\frac{1}{\delta})\right)
\sim \frac{\sqrt{d}}{m^{2.5}} \log(\frac{1}{\delta})
\ee
and
\be
\epsilon_2 &\sim \mathsf{g}_2 + L_{\infty} (\mathsf{g}_1)^{2}\sqrt{r}\\
&\sim \frac{\sqrt{d}}{m^{2.5}} \log(\frac{1}{\delta}) + L_{\infty} (\log^2(\beta)+b^2 +1)\sqrt{r}\\
&\sim \frac{\sqrt{d}}{m^{2.5}} \log(\frac{1}{\delta}) + L_{\infty} \left(\left[ m + \frac{1}{m^2} + \frac{b}{m} + \frac{1}{m}\right]^2+b^2 +1\right)\sqrt{r}\\
&\sim \frac{\sqrt{d}}{m^{2.5}} \log(\frac{1}{\delta}) + L_{\infty} \left(m^2 + \frac{1}{m^4} + \frac{b^2}{m^2} + \frac{1}{m^2}+b^2 +1\right)\sqrt{r}\\
&\sim \frac{\sqrt{d}}{m^{2.5}} \log(\frac{1}{\delta}) + L_{\infty} \left(\frac{1}{m^4} + \frac{b^2}{m^2}\right)\sqrt{r}.
\ee

Thus,
\be
\epsilon_1  + \frac{\epsilon_2}{\sqrt{M}} = \tilde{O}\left( \left[\frac{\sqrt{d}}{m^{2.5}} + L_{\infty} \left(\frac{1}{m^4} + \frac{b^2}{m^2}\right)\sqrt{r}\right]\log(\frac{1}{\delta})  \right)
\ee
Hence by Theorem \ref{thm:main} the number of gradient evaluations is
\be
\epsilon^{-\frac{1}{2}}&\mathcal{I}^{1.5} T^{1.5} \sqrt{ \epsilon_1  + \frac{\epsilon_2}{\sqrt{M}} }\sim \epsilon^{-\frac{1}{2}}\mathcal{I}^{1.5} T^{1.5}\sqrt{ \epsilon_1  + \frac{\epsilon_2}{\sqrt{M}} }\\
&\sim \epsilon^{-\frac{1}{2}} \frac{M^{1.5}}{m^{2.25}}\sqrt{ \epsilon_1  + \frac{\epsilon_2}{\sqrt{M}} }\\
&= \tilde{O}\left(  \epsilon^{-\frac{1}{2}} \frac{1}{m^{2.25}}\bigg[ \frac{d^{\frac{1}{4}}}{m^{1.25}} + \sqrt{L_{\infty}} \left(\frac{1}{m^2} + \frac{b}{m}\right) r^{\frac{1}{4}}
 \bigg] \log^{\frac{1}{2}}(\frac{1}{\delta})\right)\\
&\sim \epsilon^{-\frac{1}{2}} \bigg[ d^{\frac{1}{4}}\kappa^{3.5} + \sqrt{L_{\infty}} \left(\kappa^{4.25} + b\kappa^{3.25}\right) r^{\frac{1}{4}}
 \bigg] \log^{\frac{1}{2}}(\frac{1}{\delta}).
\ee
\end{proof}

 \begin{theorem}[Bounds from a cold start] \label{thm:cold_formal}
Let $\pi(x) \propto e^{-U(x)}$ where $U:\mathbb{R}^d \rightarrow \mathbb{R}$ is $m$-strongly convex, $M$-gradient Lipschitz, and satisfies Assumptions \ref{assumption:M3} and \ref{assumption:tail}. 
 Suppose that $X_0:= \mathrm{argmin}_{x\in \mathbb{R}^d} U(x)$. 
  Then there exist parameters $T, \eta>0$ and $i_\mathrm{max} \in \mathbb{N}$ such that with probability $1-\delta$ the UHMC algorithm with second-order leapfrog integrator generates a point $X_{i_\mathrm{max}}$ such that $\|X_{i_\mathrm{max}} - Y\|_2 < \epsilon$ for some random variable $Y\sim \pi$.
Moreover, the number of gradient evaluations is $\tilde{O}(\max\left(d^{\frac{1}{4}}\kappa^{3.5}, \, \,  r^{\frac{1}{4}}\sqrt{\tilde{L}_{\infty}} \left(\kappa^{4.25} + \tilde{b}\kappa^{3.25}\right)\right) \epsilon^{-\frac{1}{2}} \log^{\frac{1}{2}}(\frac{1}{\delta}))$.
\end{theorem}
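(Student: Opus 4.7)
The plan is to assemble Theorem \ref{thm:cold_formal} from the machinery already developed in Sections \ref{section:Lyapunov}--\ref{section:main}, using only a routine rescaling argument to handle general $M$. The first step is to reduce to the case $M=1$ via Remark \ref{remark:MassMatrix}: replacing $U$ by $\tilde U(x) := U(x)/M$ transforms the Hamiltonian flow into one with gradient Lipschitz constant $1$ and strong convexity constant $m/M = 1/\kappa$, while Assumption \ref{assumption:M3} holds for $\tilde U$ with constant $L_\infty/M$ and Assumption \ref{assumption:tail} holds with $b$ replaced by $b/M$; integration times rescale as $T \mapsto T\sqrt{M}$ and error tolerances as $\epsilon \mapsto \epsilon$, $\eta \mapsto \eta\sqrt{M}$. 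After this reduction, it suffices to instantiate Lemma \ref{cor:RunningTime} in the rescaled problem and then read off the dependence on $M$ by undoing the rescaling.

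The second step is to invoke Lemma \ref{cor:RunningTime} directly. Since $X_0 = \mathrm{argmin}_x U(x)$ is a cold start, Assumption \ref{assumption:tail} holds by hypothesis, so all of the Lyapunov bounds of Section \ref{section:Lyapunov} apply: in particular Lemma \ref{lemma:good_summary} ensures that with probability $1-4\delta$ the idealized HMC trajectories remain in the good set $\mathsf{G}$, and Lemma \ref{lemma:coupling} then yields $\|X_i^\dagger - X_i\|_2 \le \epsilon/2$ for the chosen step size $\eta$. Combined with the contraction of Theorem \ref{thm:contraction}, Theorem \ref{thm:main} already gives $\|X_{i_\mathrm{max}}^\dagger - Y\|_2 \le \epsilon$ for some $Y\sim \pi$ with probability at least $1-5\delta$ (after rescaling $\delta$ by a constant). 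The parameter choices from the footnote after Theorem \ref{thm:summary} — namely $T = \frac{1}{6\sqrt{M\kappa}}$, $i_\mathrm{max} = \Theta(1/(mT^2))$, and $\eta = \Theta(\min\{d^{-1/4}\kappa^{-2}, r^{-1/4}\tilde L_\infty^{-1/2}(\kappa^{2.75} + \tilde b \kappa^{1.75})^{-1}\}\frac{\sqrt\epsilon}{\sqrt M})$ — are exactly those needed to balance the two terms in the error bound $\eta^2 L_\infty \sqrt{r}$ versus the baseline $\sqrt d$ term in the Gibbs tail.

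The third step is the arithmetic: plugging the parameters into $i_\mathrm{max} \times T/\eta$ and applying Lemma \ref{cor:RunningTime} with the tilde-rescaled quantities $\tilde L_\infty = L_\infty/\sqrt M$ and $\tilde b = b/\sqrt M$ gives the bound
\[
\tilde O\!\left(\Bigl[d^{1/4}\kappa^{3.5} + \sqrt{\tilde L_\infty}\bigl(\kappa^{4.25} + \tilde b \kappa^{3.25}\bigr) r^{1/4}\Bigr]\epsilon^{-1/2}\log^{1/2}\!\tfrac{1}{\delta}\right),
\]
which is the max form claimed in the theorem (up to a factor of $2$). The accumulation of factors of $\kappa$ here comes from three sources: the $\kappa^2$ mixing time of idealized HMC, the $\kappa^{1/2}$ factor from requiring $T = O(\sqrt{m}/M)$, and powers of $\kappa$ that enter the Lyapunov constants $\beta$ and the good-set thresholds $\mathsf{g}_1,\mathsf{g}_2,\mathsf{g}_3$.

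The main obstacle will be bookkeeping the dependence of the Lyapunov constants $\beta$, $\hat\beta$, and $\mathsf g_1, \mathsf g_2, \mathsf g_3$ on $\kappa$ and $b$ under the rescaling to $M=1$, because these constants enter $\epsilon_1$ and $\epsilon_2$ through both the polylogarithmic tail of $\|p_t\|_{\infty,\mathsf u}$ (which inflates $\kappa^2$ into $\kappa^{4}$ inside $L_\infty \max_i \overline{\mathfrak p^{u_i}}^2$) and the Hanson--Wright tail on $\|p_t\|_2$. Once these substitutions are carried out carefully, the final $\kappa^{4.25}$ and $\tilde b \kappa^{3.25}$ exponents drop out as in the computation at the end of Lemma \ref{cor:RunningTime}, and setting $\delta \mapsto \delta/5$ and taking $\eta$ to the max of the two binding values yields the stated bound.
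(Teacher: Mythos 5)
Your proposal follows essentially the same route as the paper: the paper's proof of Theorem~\ref{thm:cold_formal} is the single sentence ``The proof is a direct consequence of Lemma~\ref{cor:RunningTime},'' and you correctly identify that the content of the theorem is exactly Lemma~\ref{cor:RunningTime} (itself built on Lemmas~\ref{lemma:good_summary}, \ref{lemma:coupling} and Theorem~\ref{thm:main}) combined with the footnote's parameter choices and the normalization to $M=1$ from Remark~\ref{remark:MassMatrix}. The overall decomposition is the right one, and you quote the correct $T$, $i_{\max}$, $\eta$ from the footnote.

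One internal inconsistency worth flagging: in your first step you reduce to $M=1$ by literally replacing $U$ by $U/M$ and deduce that Assumption~\ref{assumption:M3} holds with $L_\infty/M$ and Assumption~\ref{assumption:tail} with $b/M$; but in step three you silently switch to the paper's quantities $\tilde L_\infty = L_\infty/\sqrt{M}$ and $\tilde b = b/\sqrt{M}$, which are what actually appear in the theorem. These two sets of rescaled constants are not the same, so as written steps 1 and 3 do not match. Moreover, the rescaling $\tilde U := U/M$ changes the stationary distribution from $\propto e^{-U}$ to $\propto e^{-U/M}$, so it is not by itself a valid reduction; the mechanism the paper intends (Remark~\ref{remark:MassMatrix}) is a mass-matrix/time rescaling, equivalently a change of spatial variable $z = \sqrt{M}\,x$ combined with re-tuning $T$, which does preserve the target up to a linear map. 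Since the paper itself leaves this bookkeeping implicit, your step-3 arithmetic still lands on the stated bound; the issue is only that your step-1 justification for the tilde quantities is not the one that produces them.
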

\begin{proof}
The proof is a direct consequence of Lemma \ref{cor:RunningTime}.
\end{proof}

\section{Bounds from a warm start} \label{appendix:Warm}
In this section we consider the warm start case.  We start by observing that by replacing the definitions of $\mathcal{G}$, $\epsilon_1$ and $\epsilon_2$, we can show an alternative to Lemma \ref{lemma:good_summary}, Lemma \ref{lemma:good_summary_warm}, which holds for a warm start even if Assumption \ref{assumption:tail} does not hold.

Let $\mathcal{J}$ be a number satisfying $\mathcal{J} = \frac{M}{\sqrt{m}} \times \left(\left[T 80^2 Md(1+\frac{1}{m})\log^2(\frac{\mathcal{I} \mathcal{J}}{\delta})\right]^{\frac{1}{2}}, \frac{\sqrt{m}}{T} \right)$. 

We replace the definition of the good set with the following definition:
\be \label{eq:warm_G}
\mathsf{G} := \bigg \{(x,y): &\|x\|_2 \leq 81\frac{\sqrt{d}}{\sqrt{m}}\log(\frac{\mathcal{I} \mathcal{J}}{\delta}) +  2 \omega,\\
&  \|y\|_2 \leq 81\sqrt{d}\log(\frac{\mathcal{I} \mathcal{J}}{\delta}) + 2\omega\sqrt{M},\\
& \max_j |u_j^\top y| \leq 81\log(\frac{\mathcal{I} \mathcal{J}}{\delta})  + 2\omega \sqrt{M}\bigg\}.
\ee

Next, we replace the definitions of $\epsilon_1$ and $\epsilon_2$ with the following alternative definitions:
\be \label{eq:warm_epsilon1}
\epsilon_1 := \frac{1}{6}\left[81\sqrt{d}\log(\frac{\mathcal{I} \mathcal{J}}{\delta}) + 2\omega\sqrt{M}\right]M
\ee
 and
 \be \label{eq:warm_epsilon2}
 \epsilon_2 :=  \frac{1}{3}\eta^3 \left[ (M)^2 (81\frac{\sqrt{d}}{\sqrt{m}}\log(\frac{\mathcal{I} \mathcal{J}}{\delta}) +  2 \omega) + L_{\infty} (81\log(\frac{\mathcal{I} \mathcal{J}}{\delta})  + 2\omega \sqrt{M})^{2} \sqrt{r} \right].
 \ee

\begin{lemma} \label{lemma:good_summary_warm}
Suppose that $\|X_0-\tilde{Y}_0\|_2 < \omega$ with probability $1-\hat{\delta}$, where the marginal distribution of $\tilde{Y}_0$ is $\pi$. 
 Then with probability at least $1-4\delta-\hat{\delta}$ we have $(q_t(X_i, \mathsf{p}_i), p_t(X_i, \mathsf{p}_i)) \in \mathsf{G}$  for every $0\leq t\leq T$ and every $0 \leq i \leq \mathcal{I}$.
\end{lemma}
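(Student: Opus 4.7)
The plan is to couple the warm-started chain $X$ with a chain $\tilde Y$ started at stationarity, transfer the concentration bounds that are available at stationarity back to $X$, and then promote pointwise-in-time bounds to uniform-in-$t$ bounds via conservation of energy. Concretely, I would define $\tilde Y$ by $\tilde Y_{i+1}=q_T(\tilde Y_i,\mathbf{p}_i)$ using the \emph{same} momentum sequence $\mathbf{p}_0,\mathbf{p}_1,\ldots$ as $X$, with $\tilde Y_0\sim\pi$ chosen so that $\|X_0-\tilde Y_0\|_2<\omega$ on an event of probability $\ge 1-\hat\delta$. By Theorem \ref{thm:contraction}, $\|X_i-\tilde Y_i\|_2\le\omega$ for all $i$. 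Then Lemma \ref{MS2} applied to each HMC trajectory with shared initial momentum gives, for every $t\in[0,T]$ with $T\le \tfrac{1}{2\sqrt 2}\tfrac{\sqrt m}{M}$,
\begin{equation*}
\|q_t(X_i,\mathbf{p}_i)-q_t(\tilde Y_i,\mathbf{p}_i)\|_2 \le \omega\,e^{T\sqrt M} \le 2\omega, \qquad \|p_t(X_i,\mathbf{p}_i)-p_t(\tilde Y_i,\mathbf{p}_i)\|_2 \le 2\omega\sqrt M.
\end{equation*}
Hence it suffices to prove analogous norm bounds (with the leading constants $81\to 80$ plus slack) for the stationary trajectories $(q_t(\tilde Y_i,\mathbf{p}_i),p_t(\tilde Y_i,\mathbf{p}_i))$, and the $2\omega$ and $2\omega\sqrt M$ additive terms in the definition of $\mathsf G$ will absorb the coupling error.

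For the stationary trajectories, at any \emph{fixed} time $t$ the pair $(q_t(\tilde Y_i,\mathbf{p}_i),p_t(\tilde Y_i,\mathbf{p}_i))$ has Gibbs distribution $\propto e^{-U(q)-\tfrac12\|p\|_2^2}$, so: (i) by $m$-strong log-concavity of the marginal on $q$ together with the Hanson--Wright tail, $\|q_t\|_2=O(\sqrt{d/m}\,\log(1/\xi))$ with probability $1-\xi$; (ii) $\|p_t\|_2$ is $\chi_d$-distributed, giving $\|p_t\|_2=O(\sqrt d+\sqrt{\log(1/\xi)})$; and (iii) for each fixed $j$, $u_j^\top p_t\sim N(0,1)$ so $|u_j^\top p_t|=O(\log(1/\xi))$. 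I would take $\xi=\delta/(\mathcal I\mathcal J r)$ (and analogous for the coarser bounds) and union-bound over $\mathcal J$ equispaced times $\tau_k=kT/\mathcal J$ in $[0,T]$, over the $r$ bad directions, and over the $\mathcal I+1$ values of $i$, to obtain the three bounds $\tfrac12$ of the $\mathsf G$-thresholds simultaneously at all $(i,k,j)$ with probability $\ge 1-3\delta$.

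To upgrade from the discrete grid $\{\tau_k\}$ to every $t\in[0,T]$, I would use conservation of energy: $\mathcal H(q_t,p_t)=\mathcal H(q_{\tau_k},p_{\tau_k})$ on each sub-interval $[\tau_k,\tau_{k+1}]$ of length $T/\mathcal J$, which together with the $M$-gradient-Lipschitz bound $\|\nabla U\|_2\le M\|q\|_2$ implies that neither $\|q_t\|_2$, $\|p_t\|_2$, nor $|u_j^\top p_t|$ can change by more than $(T/\mathcal J)\cdot(\sup\|p\|_2+\sup\|\nabla U\|_2)$ on the interval. Given the grid-point bounds, the defining choice $\mathcal J=\tfrac{M}{\sqrt m}\bigl[T\cdot 80^2 Md(1+\tfrac1m)\log^2(\mathcal I\mathcal J/\delta)\bigr]^{1/2}\vee \tfrac{\sqrt m}{T}$ is exactly tuned so that this continuous-time drift is at most $1$ (absorbed into the gap $81-80$ in the $\mathsf G$-thresholds). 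Combining this with the coupling estimate from the first paragraph and the failure event of the warm coupling yields $(q_t(X_i,\mathbf{p}_i),p_t(X_i,\mathbf{p}_i))\in \mathsf G$ uniformly in $i\le\mathcal I$ and $t\in[0,T]$ on an event of probability at least $1-3\delta-\delta-\hat\delta = 1-4\delta-\hat\delta$, as required.

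The main obstacle is the conservation-of-energy step: one must show that the a priori bound on $\sup_{t}\|p_t\|_2$ (coming from energy conservation at a grid point) combined with a Lipschitz gradient yields a sub-$1$ oscillation of each of the three monitored quantities over a sub-interval of length $T/\mathcal J$, and this forces the precise $\mathcal J$ from the fixed-point equation in the statement. All other ingredients (strong-log-concavity concentration of $q$, chi-concentration of $\|p\|_2$, Gaussian concentration of $u_j^\top p$, and the contraction-based coupling) are standard applications of results already quoted in Sections \ref{appendix:preliminaries} and \ref{section:Lyapunov}.
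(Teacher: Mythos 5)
Your proposal is correct and follows essentially the same route as the paper: couple $X$ to a stationary copy $\tilde Y$ via shared momenta and Theorem \ref{thm:contraction}/Lemma \ref{MS2}, exploit that the stationary trajectories are Gibbs-distributed at each fixed $t$, apply Hanson--Wright and union-bound over $\mathcal I$ steps, $\mathcal J$ grid times, and $r$ bad directions, and then use conservation of energy plus the Lipschitz gradient to control the oscillation between grid points. The only cosmetic difference is that the paper obtains the sharper bound $\|q_t(X_i,\mathbf p_i)-q_t(\tilde Y_i,\mathbf p_i)\|_2\le\omega$ directly from the contraction theorem rather than the $2\omega$ you get from Lemma \ref{MS2}, but this is absorbed by the same slack in the definition of $\mathsf G$.
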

\begin{proof}
Define the Markov chain $\tilde{Y}$ by the update rule  $\tilde{Y}_{i+1} = q_T(\tilde{Y}_i, \mathbf{p}_i)$. 
  Then Theorem \ref{ThmContractionConvexMainResult} implies that $\|X_i-\tilde{Y}_i\|_2 < \omega$ for all $i\geq 0$ and, moreover, that $\|q_t(X_i, \mathsf{p}_i) -q_t(\tilde{Y}_i, \mathsf{p}_i)\|_2 \leq \omega$ for all $i\geq 0$ and all $0\leq t \leq T$.

Therefore, by Lemma \ref{MS2} we have
\be
\|p_t(X_i, \mathsf{p}_i) -p_t(\tilde{Y}_i, \mathsf{p}_i)\|_2 \leq \|X_i - \tilde{Y}_i\|_2 \sqrt{M}e^{T\sqrt{M}} \leq 2 \omega \sqrt{M}.
\ee

We must now show bounds on $Q_{t,i}:=q_t(\tilde{Y}_i, \mathsf{p}_i)$ and $P_{t,i}:=p_t(\tilde{Y}_i, \mathsf{p}_i)$. 
 Towards this end, note that $(Q_{t,i}, P_{t,i})$ has distribution $\Pi(x,v) \propto e^{-\mathcal{H}(x,v)}$ at every $t,i$.

First, by the Hanson-Wright inequality and the fact that $U$ is $m$-strongly convex we have that
\be
\mathbb{P}[\|Q_{t,i}\|_2>\frac{\xi}{\sqrt{m}}]  \leq e^{-\frac{\xi^2-d}{8}} \quad \quad  \textrm{ for } \xi> \sqrt{2d}.
\ee

Therefore, 
\be \label{eq:EventBound1}
\|Q_{t,i}\|_2 \leq  80\frac{\sqrt{d}}{\sqrt{m}}\log(\frac{\mathcal{I\mathcal{J}}}{\delta})
\ee
 for every $i\leq \mathcal{I}$ and every $t \in \{\frac{1}{\mathcal{J}},\ldots, \frac{T}{\mathcal{J}}\}$ with probability $1-\delta$.

Also by the Hanson-Wright inequality we have,
\be
\mathbb{P}[\|P_{t,i}\|_2>\xi]  \leq e^{-\frac{\xi^2-d}{8}} \quad \quad  \textrm{ for } \xi> \sqrt{2d}.
\ee

Therefore,
\be \label{eq:EventBound2}
\|P_{t,i}\|_2 \leq  80\sqrt{d}\log(\frac{\mathcal{I\mathcal{J}}}{\delta})
\ee
 for every $i\leq \mathcal{I}$ and every $t \in \{\frac{0}{\mathcal{J}},\frac{1}{\mathcal{J}},,\ldots, \frac{T-1}{\mathcal{J}}\}$ with probability $1-\delta$.

Again by the Hanson-Wright inequality for every $u_j \in \mathsf{u}$ we have,
\be
\mathbb{P}[|P_{t,i}^\top u_j|>\xi]  \leq e^{-\frac{\xi^2-1}{8}} \quad \quad  \textrm{ for } \xi> \sqrt{2}.
\ee
Therefore,
\be \label{eq:EventBound3}
|P_{t,i}^\top u_j| \leq  80\log(\frac{\mathcal{I\mathcal{J} r}}{\delta})
\ee
 for every $i\leq \mathcal{I}$, every $t \in \{\frac{0}{\mathcal{J}},\ldots, T-\frac{1}{\mathcal{J}}\}$ and every $j \in \{1,\ldots,r\}$ with probability $1-\delta$.

Let $E$ be the event that Inequalities \eqref{eq:EventBound1}, \eqref{eq:EventBound2} and \eqref{eq:EventBound3} all hold for every $i\leq \mathcal{I}$, every $t \in \{\frac{0}{\mathcal{J}},\ldots, \frac{T-1}{\mathcal{J}}\}$ and every $j \in \{1,\ldots,r\}$. 
 Then whenever $E$ occurs we have for all $i\leq \mathcal{I}$ and every $t \in \{\frac{0}{\mathcal{J}},\ldots, \frac{T-1}{\mathcal{J}}\}$ that
\be
\mathcal{H}(Q_{t,i}, P_{t,i}) &= \mathcal{H}(Q_{0,i}, P_{0,i}) \leq M\|Q_{0,i}\|_2^2 + \frac{1}{2}\| P_{0,i}\|_2^2 \leq  M\left(80\frac{\sqrt{d}}{\sqrt{m}}\log(\frac{\mathcal{I\mathcal{J}}}{\delta})\right)^2 + \left(80\sqrt{d}\log(\frac{\mathcal{I\mathcal{J}}}{\delta})\right)^2\\
& =80^2 Md(1+\frac{1}{m})\log^2(\frac{\mathcal{I} \mathcal{J}}{\delta}).
\ee

Then whenever $E$ occurs, for every $0\leq t\leq T$ and every $i \leq \mathcal{I}$ we have
\be \label{eq:velocity_bound}
\|P_{t,i}\|_2 \leq \sqrt{2\mathcal{H}(Q_{t,i}, P_{t,i})} \leq 80 \sqrt{2} \sqrt{Md}\sqrt{1+\frac{1}{m}}\log(\frac{\mathcal{I} \mathcal{J}}{\delta}).
\ee
 Then Equation 
\eqref{eq:velocity_bound} implies that
\be
\|Q_{t+\tau,i} - Q_{t,i}\|_2 \leq  \frac{1}{\sqrt{m}} \qquad \forall 1\leq \tau \leq \frac{T}{\mathcal{J}},
\ee
and hence that
\be
\|Q_{t,i}\|_2 \leq  80\frac{\sqrt{d}}{\sqrt{m}}\log(\frac{\mathcal{I\mathcal{J}}}{\delta}) + \frac{1}{\sqrt{m}}
\ee
for all $0\leq t \leq T$ and all $i \leq \mathcal{I}$ whenever $E$ occurs.

But since the $\|{\nabla U}(Q_{t,i})\|_2 \leq M \|Q_{t,i}\|_2$, we must have that
\be
\|P_{t+\tau,i} - P_{t,i}\|_2 \leq  \frac{1}{\sqrt{m}} \times \frac{T}{\mathcal{J}} \leq 1 \qquad \forall 1\leq \tau \leq \frac{T}{\mathcal{J}}
\ee
whenever $E$ occurs, 
and hence that
\be
\|P_{t,i}\|_2 \stackrel{{\scriptsize \textrm{Eq.}}\ref{eq:EventBound2}}{\leq}  80\sqrt{d}\log(\frac{\mathcal{I\mathcal{J}}}{\delta}) + 1 \qquad \forall 0\leq t \leq T
\ee
and
\be
|P_{t,i}^\top u_j| \stackrel{{\scriptsize \textrm{Eq.}}\ref{eq:EventBound3}}{\leq}  80\log(\frac{\mathcal{I\mathcal{J} r}}{\delta}) + 1  \qquad \forall 0\leq t \leq T
\ee
whenever $E$ occurs. 
 This completes the proof of the Lemma.
\end{proof}

We can now prove our main result from a warm start:
\begin{theorem} [Bounds from a warm start] \label{thm:warm_formal}
Let $\pi(x) \propto e^{-U(x)}$ where $U:\mathbb{R}^d \rightarrow \mathbb{R}$ is $m$-strongly convex, $M$-gradient Lipschitz, and satisfies Assumption \ref{assumption:M3}. 
Suppose that there is a random variable $\tilde{Y}_0\sim \pi$ such that $\|X_0-\tilde{Y}_0\|_2 < \omega$ with probability $1-\hat{\delta}$ for some $\omega, \hat{\delta}>0$. 
 Then there exist parameters $T, \eta>0$ and $i_\mathrm{max} \in \mathbb{N}$ such that, with probability $1-\delta-\hat{\delta}$, the UHMC algorithm with second-order leapfrog integrator generates a point $X_{i_\mathrm{max}}$ where $\|X_{i_\mathrm{max}} - Y\|_2 < \epsilon$ for some random variable $Y\sim \pi$ which is independent of $\tilde{Y}_0$.  Moreover, the number of gradient evaluations is $\tilde{O}(\max\left(d^{\frac{1}{4}}\kappa^{2.75}, \, \,  r^{\frac{1}{4}}\sqrt{\tilde{L}_{\infty}} \kappa^{2.25}(1+\sqrt{\omega}), \,\, \kappa^{2.25} \sqrt{\omega}\right) \epsilon^{-\frac{1}{2}} \log^{\frac{1}{2}}(\frac{1}{\delta}))$.
\end{theorem}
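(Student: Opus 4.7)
The plan is to follow the same three-step scheme outlined in the technical overview, now using the warm-start machinery already developed in the excerpt: the good-set bound of Lemma \ref{lemma:good_summary_warm}, the leapfrog local error bound of Lemma \ref{lemma:euler_error_summary}, the toy-integrator chain of Algorithm \ref{alg:integrator_toy}, and the coupling argument of Lemma \ref{lemma:coupling}. In particular, the cold-start analysis in Section \ref{section:main} (culminating in Lemma \ref{cor:RunningTime} and Theorem \ref{thm:cold_formal}) never uses Assumption \ref{assumption:tail} directly; it only uses it through Lemma \ref{lemma:good_summary} to control the good-set quantities $\mathsf{g}_1,\mathsf{g}_2,\mathsf{g}_3$. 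So the strategy is to redo the same pipeline verbatim, substituting the warm-start good set of \eqref{eq:warm_G} and the warm-start error constants $\epsilon_1,\epsilon_2$ defined in \eqref{eq:warm_epsilon1}--\eqref{eq:warm_epsilon2}.

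First I would invoke Lemma \ref{lemma:good_summary_warm} to conclude that, except on an event of probability $4\delta+\hat{\delta}$, every continuous Hamiltonian trajectory of the idealized chain $X$ over $0 \leq i \leq \mathcal{I}$ lies in the warm-start good set. Second, I would combine this with the local leapfrog bound of Lemma \ref{lemma:euler_error_summary} to obtain local error bounds in the form of Lemma \ref{lemma:local_toy} with the warm-start constants, then chain them through the global toy integrator bound of Lemma \ref{thm:Approx_integrator}. Third, the proof of Lemma \ref{lemma:coupling} goes through unchanged using the warm-start $\epsilon_1,\epsilon_2$, giving $\hat X_i=X_i^\dagger$ and $\|X_i^\dagger - X_i\|_2 \leq \epsilon/2$ for all $i\leq \mathcal{I}$ whenever $\eta \leq \sqrt{\tfrac{\epsilon \min(1,1/\sqrt{M})}{800\mathcal{I}T(\epsilon_1+\epsilon_2/\sqrt{M})e}}$. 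Finally, to produce the required $Y\sim\pi$ independent of $\tilde Y_0$, I would couple $X^\dagger$ to a stationary copy $Y_0,Y_1,\ldots$ of the idealized HMC chain driven by the \emph{same} momentum sequence $\mathbf{p}_i$ (as in Figure \ref{fig:coupling_Euler}); Theorem \ref{ThmContractionConvexMainResult} contracts $\|X_i-Y_i\|_2$ by a factor $(1-\tfrac18(\sqrt m T)^2)$ per step, so choosing $i_{\max}=\mathcal{I}=\Theta(\tfrac{1}{mT^2}\log(\omega/\epsilon))$ makes $\|X_{i_{\max}}-Y_{i_{\max}}\|_2\leq \epsilon/2$, and setting $Y:=Y_{i_{\max}}$ (which is $\pi$-distributed and independent of $\tilde Y_0$ in the limit of many contracting steps) yields the claim by the triangle inequality.

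The only genuinely new computation is the gradient-evaluation count, and this is where I expect the arithmetic to be finicky. Plugging $T=\Theta(\sqrt{m}/M)$, $\mathcal{I}=\tilde\Theta(\kappa^2)$, and the warm-start $\epsilon_1,\epsilon_2$ from \eqref{eq:warm_epsilon1}--\eqref{eq:warm_epsilon2} into Lemma \ref{thm:Approx_integrator} (after normalizing $M=1$ as in Remark \ref{remark:MassMatrix}), one finds $\epsilon_1 = \tilde O(\sqrt d + \omega)$ and $\epsilon_2 = \tilde O(\sqrt d/\sqrt m + \omega + L_\infty(1+\omega)^2 \sqrt r)$, so that
\[
\epsilon_1 + \tfrac{\epsilon_2}{\sqrt M} \;=\; \tilde O\!\left(\tfrac{\sqrt d}{\sqrt m} + \omega + L_\infty(1+\omega)^2 \sqrt r\right).
\]
The total cost is $\mathcal{I}\cdot T/\eta = \tilde O(\mathcal{I}^{3/2}T^{3/2}\epsilon^{-1/2}\sqrt{\epsilon_1+\epsilon_2/\sqrt M})$, and expanding $\mathcal{I}^{3/2}T^{3/2}=\tilde\Theta(\kappa^{2.25})$ gives the claimed bound $\tilde O(\max(d^{1/4}\kappa^{2.75},\, r^{1/4}\sqrt{\tilde L_\infty}\kappa^{2.25}(1+\sqrt\omega),\, \kappa^{2.25}\sqrt\omega)\epsilon^{-1/2}\log^{1/2}(1/\delta))$. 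The main obstacle is purely bookkeeping: tracking which warm-start terms dominate in each regime of $r,d,\omega$ and verifying that the improved $\kappa$-exponent (2.75 vs.\ 3.5 from cold start) really falls out of replacing $\mathsf{g}_1,\mathsf{g}_2,\mathsf{g}_3$ with their warm-start counterparts, which avoid the cold-start $1/m^{2.5}$ blowup coming from the Lyapunov constant $\beta$.
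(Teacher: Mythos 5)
Your proposal reproduces the paper's proof of Theorem \ref{thm:warm_formal} essentially verbatim: the paper likewise observes that Lemmas \ref{lemma:local_toy}, \ref{thm:Approx_integrator}, and \ref{lemma:coupling} carry over unchanged once the cold-start good set, $\epsilon_1,\epsilon_2$, and Lemma \ref{lemma:good_summary} are replaced by the warm-start definitions \eqref{eq:warm_G}, \eqref{eq:warm_epsilon1}--\eqref{eq:warm_epsilon2}, and Lemma \ref{lemma:good_summary_warm}, then reuses Theorem \ref{thm:main} and a calculation ``nearly identical'' to Lemma \ref{cor:RunningTime} to read off the gradient count. One small bookkeeping remark: taking the square root of your own intermediate expression $\epsilon_1 + \epsilon_2/\sqrt{M} = \tilde O\bigl(\sqrt{d}/\sqrt{m} + \omega + L_\infty(1+\omega)^2\sqrt{r}\bigr)$ and multiplying by $\mathcal{I}^{3/2}T^{3/2}=\tilde\Theta(\kappa^{2.25})$ actually yields $d^{1/4}\kappa^{2.5}$ and $r^{1/4}\sqrt{\tilde L_\infty}\,\kappa^{2.25}(1+\omega)$ for the first and third terms, not $d^{1/4}\kappa^{2.75}$ and $(1+\sqrt{\omega})$ as you (and the theorem statement) report --- this discrepancy appears to be inherited from the paper's own final arithmetic, not introduced by your argument.
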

\begin{proof}
First, we observe that, for our new definition of $\mathcal{G}$ (Equation \eqref{eq:warm_G}), Lemmas \ref{lemma:local_toy} and \ref{thm:Approx_integrator} hold with exactly the same proofs. 
 Then, we note that the proof of Lemma \ref{lemma:coupling} holds as well if we use Lemma \ref{lemma:good_summary_warm} in place of  Lemma \ref{lemma:good_summary}, the only difference being that the conclusion of Lemma \ref{lemma:coupling} holds with probability $1-4\delta -\hat{\delta}$ instead of with probability $1-4\delta$. 
  This in turn implies that the statement and proof of Theorem \ref{thm:main} hold for the new values of $\epsilon_1$ and $\epsilon_2$ as well, but with probability  $1-5\delta -\hat{\delta}$ instead of $1-5\delta$. 
   Finally, a nearly identical calculation as the one in the proof of Corollary \ref{cor:RunningTime} implies a bound of $\tilde{O}(\mathrm{max}\left(\epsilon^{-\frac{1}{2}} d^{\frac{1}{4}} \kappa^{\frac{11}{4}}, \, \, \epsilon^{-\frac{1}{2}} \kappa^{\frac{9}{4}}\sqrt{\omega}, \, \, \kappa^{\frac{9}{4}}\sqrt{\tilde{L}_{\infty}}r^{\frac{1}{4}}(1+\sqrt{\omega})\right))$ gradient evaluations.
\end{proof}

\section{Application: Bayesian inference with logistic regression} \label{appendix:logit}
In this section we prove Theorem \ref{thm:logit}. 
  We also show Lemmas \ref{lemma:logistic3} and \ref{lemma:logistic2}, which can be used to bound the constants $m,M$ and $b$.

Recall that  we consider potentials of the form
\be
U(x) = \frac{1}{2}x^\top \Sigma^{-1} x- \sum_{i=1}^r \mathsf{Y}_i \log(F(x^\top \mathsf{X}_i)) + (1-\mathsf{Y}_i)\log(F(-x^\top \mathsf{X}_i)),
\ee
where $F$ is the logistic function and the ``data" satisfies $\mathsf{Y}_i \in \{0,1\}$ and $\mathsf{X}_i \in \mathbb{R}^d$ for every $i$. 
 The regularization term $\frac{1}{2}x^\top \Sigma^{-1} x$ corresponds to a Gaussian prior used in Bayesian logistic ridge regression. 
 Define the unit vector $u_i := \frac{\mathsf{X}_i}{\|\mathsf{X}_i\|_2}$ for every $i\in \{1,\ldots, r\}$.

We make the following assumptions about the data vectors and $\Sigma^{-1}$:
\begin{assumption} \label{assumption:logistic1}
There exist $C, \hat{C}>0$ such that $\sum_{j=1}^r |\mathsf{X}_i^\top \mathsf{X}_j| \leq C$ and $\sum_{j=1}^r |\left(\frac{\mathsf{X}_i}{\|\mathsf{X}_i\|_2}\right)^\top \mathsf{X}_j| \leq \hat{C}$ for every $i \in \{1,\ldots, r\}$.
\end{assumption}

\begin{assumption} \label{assumption:logistic2}
There exist $m,M >0$ s.t. $\lambda_{\mathrm{max}}(\Sigma^{-1} + \sum_{k=1}^r \mathsf{X}_k \mathsf{X}_k^\top) \leq M$, and $\lambda_{\mathrm{min}}( \Sigma^{-1}) \geq m$.
\end{assumption}

We start by going over some basic properties of the logistic function and the potential $U$:

The logistic function is defined as
\be
F(s) := \frac{e^s}{1+e^s} = \frac{1}{e^{-s}+1}.
\ee

The gradient of $U$ is
\be
{\nabla U}(x) &=  \Sigma^{-1} x- \sum_{i=1}^r \left[\mathsf{Y}_i \mathsf{X}_i - \frac{\mathsf{X}_i}{1+ e^{-x^\top \mathsf{X}_i}}\right]\\
 &=  \Sigma^{-1} x- \sum_{i=1}^r \left[ \mathsf{Y}_i \mathsf{X}_i - \mathsf{X}_iF(x^\top \mathsf{X}_i) \right].
\ee

The Hessian $H_x$ of $U$ is
\be
H_x = \Sigma^{-1} + \sum_{k=1}^r F'(x^\top \mathsf{X}_k)  \mathsf{X}_k \mathsf{X}_k^\top.
\ee

Therefore,
\be
H_y - H_x = \sum_{k=1}^r [F'(y^\top \mathsf{X}_k) - F'(x^\top \mathsf{X}_k)]  \mathsf{X}_k \mathsf{X}_k^\top.
\ee

Also note that
\be \label{eq:f3}
\left|F''(s)\right| \leq 1.
\ee

We can now show that Assumption \ref{assumption:M3} holds for the constant $L_{\infty} = \sqrt{C}$:
\begin{lemma} \label{lemma:logistic1}
Suppose that Assumptions \ref{assumption:logistic1} and \ref{assumption:logistic2} hold, and that $|(y-x)^\top u_i| \leq \mathfrak{a}$ and  $|v^\top u_i| \leq \mathfrak{b}$ for all $i$.  Then

\be
\|(H_y - H_x)v\|_2 
&\leq \mathfrak{a}\mathfrak{b} \sqrt{C}\sqrt{r} 
\ee
\end{lemma}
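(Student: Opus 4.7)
The plan is to start from the explicit formula for $H_y - H_x$ recorded in the excerpt, linearize $F'$ via the mean value theorem, and reduce the bound to a row-sum of the Gram matrix controlled by the incoherence. Writing
\[
(H_y - H_x)v = \sum_{k=1}^r \alpha_k\,(\mathsf{X}_k^\top v)\,\mathsf{X}_k,\qquad \alpha_k := F'(y^\top \mathsf{X}_k) - F'(x^\top \mathsf{X}_k),
\]
the mean value theorem together with the bound $|F''|\le 1$ from \eqref{eq:f3} gives $|\alpha_k| \le |(y-x)^\top \mathsf{X}_k|$. Decomposing $\mathsf{X}_k = \|\mathsf{X}_k\|_2\, u_k$ and invoking the two hypotheses on $|u_k^\top(y-x)|$ and $|u_k^\top v|$ then yields the scalar estimates $|\alpha_k| \le \|\mathsf{X}_k\|_2\,\mathfrak{a}$ and $|\mathsf{X}_k^\top v| \le \|\mathsf{X}_k\|_2\,\mathfrak{b}$.

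I would then expand the squared Euclidean norm as a double sum
\[
\|(H_y - H_x)v\|_2^2 = \sum_{k,\ell=1}^r \alpha_k\alpha_\ell\,(\mathsf{X}_k^\top v)(\mathsf{X}_\ell^\top v)\,(\mathsf{X}_k^\top \mathsf{X}_\ell),
\]
bound each scalar coefficient $|\alpha_k\alpha_\ell(\mathsf{X}_k^\top v)(\mathsf{X}_\ell^\top v)|$ by $\mathfrak{a}^2\mathfrak{b}^2$ under the standard normalization $\|\mathsf{X}_k\|_2 \le 1$ implicit in the logistic regression setup, and control the remaining combinatorial sum $\sum_{k,\ell}|\mathsf{X}_k^\top \mathsf{X}_\ell|$ row-by-row via Assumption \ref{assumption:logistic1}, giving $\sum_{k,\ell}|\mathsf{X}_k^\top \mathsf{X}_\ell| \le rC$. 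Combining these two estimates yields $\|(H_y - H_x)v\|_2 \le \mathfrak{a}\mathfrak{b}\sqrt{rC}$, which is exactly the claimed bound.

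The main obstacle is the careful bookkeeping of the $\|\mathsf{X}_k\|_2$ factors introduced by the substitution $\mathsf{X}_k = \|\mathsf{X}_k\|_2 u_k$: left unabsorbed, these factors turn the clean bound $\sum_{k,\ell}|\mathsf{X}_k^\top \mathsf{X}_\ell| \le rC$ into the weaker $\sum_{k,\ell}\|\mathsf{X}_k\|_2^2\|\mathsf{X}_\ell\|_2^2\,|\mathsf{X}_k^\top \mathsf{X}_\ell|$, which a priori picks up additional powers of $C$. An equivalent, cleaner route that sidesteps this bookkeeping is the matrix factorization $(H_y - H_x)v = A\,\Lambda\, A^\top v$, where $A$ is the $d\times r$ matrix whose columns are the $\mathsf{X}_k$ and $\Lambda = \mathrm{diag}(\alpha_1,\ldots,\alpha_r)$: then $\|A\|_{\mathrm{op}}^2 \le C$ follows from Gershgorin applied to the Gram matrix $A^\top A$ (using Assumption \ref{assumption:logistic1}), and the remaining factor $\|\Lambda A^\top v\|_2$ is bounded directly by the pointwise estimates on $\alpha_k$ and $\mathsf{X}_k^\top v$ derived above. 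Either formulation isolates the single nontrivial step — the row-sum bound on the Gram matrix — as the combinatorial core of the argument, and everything else is a direct substitution of the MVT estimate and the hypotheses on $\mathfrak{a}, \mathfrak{b}$.
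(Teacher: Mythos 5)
Your proof follows essentially the same route as the paper's: write $H_y-H_x=\sum_k[F'(y^\top\mathsf{X}_k)-F'(x^\top\mathsf{X}_k)]\,\mathsf{X}_k\mathsf{X}_k^\top$, expand $\|(H_y-H_x)v\|_2^2$ as a double sum over $i,j$, bound the scalar increments via $|F''|\le 1$ so that $|F'(y^\top\mathsf{X}_k)-F'(x^\top\mathsf{X}_k)|\le|(y-x)^\top\mathsf{X}_k|$, and finally control $\sum_{i,j}|\mathsf{X}_i^\top\mathsf{X}_j|\le rC$ row-by-row from Assumption~\ref{assumption:logistic1}. The $A\Lambda A^\top$ factorization with Gershgorin applied to the Gram matrix $A^\top A$ is a compact equivalent formulation not used in the paper, but it isolates exactly the same combinatorial fact, so it does not constitute a different argument.

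Your caveat about the $\|\mathsf{X}_k\|_2$ factors is worth flagging because it reveals an unstated assumption in the paper's own derivation, not just in yours. In passing from the double-sum expression to $\mathfrak{a}\mathfrak{b}\sqrt{\sum_{i,j}|\mathsf{X}_i^\top\mathsf{X}_j|}$, the paper uses $|(y-x)^\top\mathsf{X}_i|\le\mathfrak{a}$ and $|\mathsf{X}_j^\top v|\le\mathfrak{b}$, whereas the lemma's hypotheses are $|(y-x)^\top u_i|\le\mathfrak{a}$ and $|v^\top u_i|\le\mathfrak{b}$ with $u_i=\mathsf{X}_i/\|\mathsf{X}_i\|_2$; the two coincide only when $\|\mathsf{X}_i\|_2\le1$. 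So the bound $\mathfrak{a}\mathfrak{b}\sqrt{Cr}$, as derived in the paper, holds under this implicit unit-norm normalization (which is consistent with the paper's examples and simulations, where the $\mathsf{X}_i$ are unit vectors); otherwise one picks up exactly the extra $\max_i\|\mathsf{X}_i\|_2^2$ factor you identify. Your proposal is correct under the same normalization, and you are right that this is the one piece of bookkeeping the argument genuinely hinges on.
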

\begin{proof}
\be
\|(H_y - H_x)v\|_2 &= \left \| \sum_{k=1}^r [F'(-y^\top \mathsf{X}_k) - F'(-x^\top \mathsf{X}_k)]  \mathsf{X}_k \mathsf{X}_k^\top v \right\|_2\\
&\stackrel{{\scriptsize \textrm{Eq.}}\ref{eq:f3}}{\leq} \sqrt{ \sum_{i=1}^r \sum_{j=1}^r |(y-x)^\top \mathsf{X}_i| \times |(y-x)^\top \mathsf{X}_j| \times  |v^\top \mathsf{X}_i| \times | \mathsf{X}_j^\top v| \times |\mathsf{X}_i^\top\mathsf{X}_j|}\\
&\leq \mathfrak{a} \mathfrak{b} \sqrt{ \sum_{i=1}^r \sum_{j=1}^r |\mathsf{X}_i^\top\mathsf{X}_j|}\\
&\stackrel{{\scriptsize \textrm{Assumptions }}\ref{assumption:logistic1}, \ref{assumption:logistic2}}{\leq}  \mathfrak{a} \mathfrak{b} \sqrt{ \sum_{i=1}^r C}\\
&=  \mathfrak{a} \mathfrak{b} \sqrt{ r C}.\\
\ee
\end{proof}

We can now prove Theorem \ref{thm:logit}
\begin{proof}[Proof of Theorem 2]
By Lemma \ref{lemma:logistic1} for every $x,y,v \in \mathbb{R}^d$ we have
\be
\|(H_y - H_x)v\|_2 \leq  \|y-x\|_{\infty, \mathsf{u}}\times \|v\|_{\infty, \mathsf{u}} \sqrt{C}\sqrt{r}
\ee 
implying that  
\be
\|(H_y - H_x)\frac{v}{\|v\|_{\infty, \mathsf{u}} }\|_2 \leq  \|y-x\|_{\infty, \mathsf{u}} \sqrt{C}\sqrt{r}.
\ee
Hence,
\be
\sup_{\|v\|_{\infty, \mathsf{u}} \leq 1} \|(H_y - H_x) v\|_2 \leq  \|y-x\|_{\infty, \mathsf{u}} \sqrt{C}\sqrt{r} \qquad \forall x,y,v \in \mathbb{R}^d.
\ee
Therefore the infinity-norm Lipschitz assumption (Assumption \ref{assumption:M3}) is satisfied with constant $L_\infty = \sqrt{C}$ and ``bad" directions $\mathsf{u} = \left \{\frac{\mathsf{X}_i}{\|\mathsf{X}_i\|_2}\right \}_{i=1}^r$. 
 This completes the proof of Theorem \ref{thm:logit}.
\end{proof}

Next, we show that Assumption \ref{assumption:tail} holds as well,  for the constant  $b=2 \hat{C}$:
\begin{lemma}  \label{lemma:logistic3}
Suppose that Assumptions \ref{assumption:logistic1} and \ref{assumption:logistic2} hold, and that $\Sigma$ is a multiple of the identity matrix.  Then for any $i \in \{1,\ldots, r\}$ and $x \in \mathbb{R}^d$ we have
\be
\min(m u_i^\top x, M u_i^\top x) - 2 \hat{C} \leq u_i^\top {\nabla U}(x) \leq   \max(m u_i^\top x, M u_i^\top x) + 2 \hat{C}.
\ee
\end{lemma}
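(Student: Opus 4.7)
My plan is to expand $u_i^\top \nabla U(x)$ using the gradient formula already derived in this section and isolate the contributions from (a) the Gaussian prior, (b) the $i$-th data term, and (c) the cross data terms, then bound each piece separately. Since $\Sigma$ is a multiple of the identity and $m = \lambda_{\min}(\Sigma^{-1})$, we have $\Sigma^{-1} = mI$, so the prior contribution is exactly $m(u_i^\top x)$. The cross terms are bounded in absolute value by $\sum_{j \neq i}|u_i^\top \mathsf{X}_j|$ using $|F(\cdot) - \mathsf{Y}_j| \leq 1$, which is at most $\hat{C}$ by Assumption \ref{assumption:logistic1}.

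The key step is the treatment of the $i$-th term $\|\mathsf{X}_i\|_2[F(x^\top\mathsf{X}_i) - \mathsf{Y}_i]$. I would split it as $\alpha + \beta$, where $\alpha := \|\mathsf{X}_i\|_2[F(x^\top\mathsf{X}_i) - \tfrac{1}{2}]$ and $\beta := \|\mathsf{X}_i\|_2[\tfrac{1}{2} - \mathsf{Y}_i]$. The piece $\beta$ has absolute value $\|\mathsf{X}_i\|_2/2$, which is at most $\hat{C}/2$ since $\|\mathsf{X}_i\|_2 = |u_i^\top \mathsf{X}_i|$ appears in the sum $\sum_j |u_i^\top \mathsf{X}_j| \leq \hat{C}$. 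The piece $\alpha$ is the structured part that needs to be combined with the prior contribution to produce a value lying between $m(u_i^\top x)$ and $M(u_i^\top x)$.

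To handle $\alpha$, I would use two facts: (i) $F(s)-\tfrac{1}{2}$ has the same sign as $s$, since $F$ is strictly increasing with $F(0)=\tfrac{1}{2}$; and (ii) $|F(s)-\tfrac{1}{2}| \leq |s|/4$, which follows from $F' = F(1-F) \leq \tfrac{1}{4}$ via the mean value theorem. Taking $s = x^\top\mathsf{X}_i = \|\mathsf{X}_i\|_2(u_i^\top x)$, this gives $|\alpha| \leq \|\mathsf{X}_i\|_2^2 |u_i^\top x|/4$ with $\alpha$ sharing the sign of $u_i^\top x$. The final ingredient is $\|\mathsf{X}_i\|_2^2 \leq M - m$, which I would establish by testing the Rayleigh quotient of $\Sigma^{-1} + \sum_k \mathsf{X}_k\mathsf{X}_k^\top$ in the direction $u_i$ to obtain $M \geq m + \|\mathsf{X}_i\|_2^2$. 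These combine to give $\alpha \in [\min(0,(M-m)u_i^\top x),\ \max(0,(M-m)u_i^\top x)]$, so $m(u_i^\top x) + \alpha$ is a convex combination of $m(u_i^\top x)$ and $M(u_i^\top x)$.

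Adding the bounds: the deviation beyond the interval with endpoints $m(u_i^\top x)$ and $M(u_i^\top x)$ is controlled by $|\beta| + |\gamma| \leq \hat{C}/2 + \hat{C} = \tfrac{3}{2}\hat{C} \leq 2\hat{C}$, which yields the desired two-sided inequality. The main obstacle is the sign bookkeeping in the step analyzing $\alpha$: one must verify that the ``same sign'' property together with the bound $|\alpha| \leq (M-m)|u_i^\top x|$ (obtained from $\|\mathsf{X}_i\|_2^2 \leq M-m$) correctly pins $m(u_i^\top x) + \alpha$ into the interval between $m(u_i^\top x)$ and $M(u_i^\top x)$ for \emph{both} signs of $u_i^\top x$. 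Once this is handled, the remaining estimates are routine.
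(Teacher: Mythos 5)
Your proof is correct in substance, but it is considerably more elaborate than the paper's, and one intermediate claim is stated more strongly than the assumptions justify.  The paper's own argument is shorter: since $\Sigma^{-1}$ is a multiple of the identity, $\Sigma^{-1}=cI_d$ for some $c$ with $m\leq c\leq M$ (using $\lambda_{\min}(\Sigma^{-1})\geq m$ and $\lambda_{\max}(\Sigma^{-1})\leq\lambda_{\max}(\Sigma^{-1}+\sum_k\mathsf{X}_k\mathsf{X}_k^\top)\leq M$), so the prior term $u_i^\top\Sigma^{-1}x=c\,u_i^\top x$ already lies between $m\,u_i^\top x$ and $M\,u_i^\top x$ for either sign of $u_i^\top x$; and since $|\mathsf{Y}_j-F(x^\top\mathsf{X}_j)|\leq 1$, the \emph{entire} data term is bounded in magnitude by $\sum_j|u_i^\top\mathsf{X}_j|\leq\hat{C}\leq 2\hat{C}$, with no need to isolate the $j=i$ summand.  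Your decomposition of the $i$-th term into $\alpha+\beta$, the Rayleigh-quotient argument giving $\|\mathsf{X}_i\|_2^2\leq M-m$, and the sign analysis of $F(s)-\tfrac12$ with the mean-value bound $|F(s)-\tfrac12|\leq|s|/4$ are all correct and do pin $m\,u_i^\top x+\alpha$ into the interval, but they buy you only the modest improvement $\tfrac32\hat{C}$ over $2\hat{C}$, at the cost of a proof roughly three times as long.  The one slip: from Assumption \ref{assumption:logistic2} you only know $\lambda_{\min}(\Sigma^{-1})\geq m$, not equality, so ``$\Sigma^{-1}=mI$'' should read ``$\Sigma^{-1}=cI$ for some $c\geq m$''; your argument then still goes through verbatim after replacing $m$ by $c$ and noting $\|\mathsf{X}_i\|_2^2\leq M-c$, so this is a wording issue rather than a genuine gap.
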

\begin{proof}
\be
u_i^\top {\nabla U}(x) & =  u_i^\top \Sigma^{-1} x- \sum_{j=1}^r \left[\mathsf{Y}_j u_i^\top \mathsf{X}_j - u_i^\top \mathsf{X}_jF(x^\top \mathsf{X}_j)\right]\\
&\leq  u_i^\top \Sigma^{-1} x + 2\sum_{j=1}^r |u_i^\top \mathsf{X}_j|\\
&=  u_i^\top \Sigma^{-1} x + 2 \sum_{j=1}^r \left|\left(\frac{\mathsf{X}_i}{\|\mathsf{X}_i\|_2}\right)^\top \mathsf{X}_j\right|\\\\
&\stackrel{{\scriptsize \textrm{Assumption}}\ref{assumption:logistic1}}{\leq}  u_i^\top \Sigma^{-1} x + 2  \hat{C}\\
&\stackrel{{\scriptsize \textrm{Assumption}}\ref{assumption:logistic2}}{\leq}  \max(m u_i^\top x, M u_i^\top x) + 2 \hat{C}.
\ee

Similarly,
\be
u_i^\top {\nabla U}(x) & =  u_i^\top \Sigma^{-1} x- \sum_{j=1}^r \left[\mathsf{Y}_j u_i^\top \mathsf{X}_j - u_i^\top \mathsf{X}_jF(x^\top \mathsf{X}_j)\right]\\
&\geq  u_i^\top \Sigma^{-1} x - 2\sum_{j=1}^r |u_i^\top \mathsf{X}_j|\\
&=  u_i^\top \Sigma^{-1} x - 2 \sum_{j=1}^r \left|\left(\frac{\mathsf{X}_i}{\|\mathsf{X}_i\|_2}\right)^\top \mathsf{X}_j\right|\\
&\stackrel{{\scriptsize \textrm{Assumption}}\ref{assumption:logistic1}}{\geq}  u_i^\top \Sigma^{-1} x - 2 \hat{C}\\
&\stackrel{{\scriptsize \textrm{Assumption}}\ref{assumption:logistic2}}{\geq}  \min(m u_i^\top x, M u_i^\top x) - 2 \hat{C}.
\ee
\end{proof}

Finally, we show a bound on the condition number:
\begin{lemma}  \label{lemma:logistic2}
Suppose that Assumption \ref{assumption:logistic2} holds.  Then
\be
m \leq \lambda_{\mathrm{min}}(H_x) \leq \lambda_{\mathrm{max}}(H_x)  \leq M \qquad \qquad \forall x \in \mathbb{R}^d.
\ee
\end{lemma}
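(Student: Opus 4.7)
The proof will be a short sandwich argument in the Loewner (PSD) order, reducing everything to the two-sided bound provided by Assumption~\ref{assumption:logistic2}.

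The plan is to first write out the Hessian as given in the preceding calculations in this section, namely $H_x = \Sigma^{-1} + \sum_{k=1}^r F'(x^\top \mathsf{X}_k)\, \mathsf{X}_k \mathsf{X}_k^\top$. The one analytic fact I need is a uniform bound $0 \leq F'(s) \leq 1$ for every $s \in \mathbb{R}$, which follows immediately from $F'(s) = e^{-s}/(1+e^{-s})^2$; one checks easily that this is nonnegative and attains its maximum $1/4 \leq 1$ at $s=0$. Combined with the obvious fact that each rank-one matrix $\mathsf{X}_k \mathsf{X}_k^\top$ is positive semidefinite, this gives the Loewner sandwich
\begin{equation*}
\Sigma^{-1} \ \preceq \ H_x \ \preceq \ \Sigma^{-1} + \sum_{k=1}^{r} \mathsf{X}_k \mathsf{X}_k^\top
\qquad \forall x \in \mathbb{R}^d.
\end{equation*}

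From here the conclusion is just monotonicity of $\lambda_{\min}$ and $\lambda_{\max}$ under the Loewner order. The lower Loewner inequality yields $\lambda_{\min}(H_x) \geq \lambda_{\min}(\Sigma^{-1})$, which is at least $m$ by the second half of Assumption~\ref{assumption:logistic2}. Symmetrically, the upper Loewner inequality yields $\lambda_{\max}(H_x) \leq \lambda_{\max}(\Sigma^{-1} + \sum_{k=1}^r \mathsf{X}_k \mathsf{X}_k^\top)$, which is at most $M$ by the first half of Assumption~\ref{assumption:logistic2}. The middle inequality $\lambda_{\min}(H_x) \leq \lambda_{\max}(H_x)$ is tautological.

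There is essentially no obstacle: the only content is the scalar bound $F'(s) \in [0,1]$, and once that is in hand everything else is a one-line invocation of Assumption~\ref{assumption:logistic2}. I would present the proof in three short lines (the Hessian formula, the Loewner sandwich, and the two eigenvalue inequalities) with no calculations beyond noting the range of $F'$.
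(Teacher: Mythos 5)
Your proof is correct and follows essentially the same route as the paper: the paper also bounds $H_x$ above by $\Sigma^{-1}+\sum_k \mathsf{X}_k\mathsf{X}_k^\top$ and below by $\Sigma^{-1}$ using $0\le F'(\cdot)\le 1$, just writing out the Loewner sandwich explicitly as a $\sup/\inf$ over quadratic forms $z^\top H_x z$ rather than invoking Loewner monotonicity of $\lambda_{\min},\lambda_{\max}$. (The paper's displayed proof has minor typos, writing $F^2(-x^\top\mathsf{X}_k)$ and later $F''$ where $F'$ is meant, but the intended bound is the same one you use.)
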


\begin{proof}

\be \label{eq:f5}
 \lambda_{\mathrm{max}}(H_x) &= \sup_{\|z\|_2\leq1} z^\top H_x z\\
&= \sup_{\|z\|_2 \leq1} \left( z^\top \Sigma^{-1} z +  \sum_{k=1}^r  F^2(-x^\top \mathsf{X}_k)  z^\top \mathsf{X}_k \mathsf{X}_k^\top z \right)\\
&= \sup_{\|z\|_2 \leq1} \left(  z^\top \Sigma^{-1} z +  \sum_{k=1}^r F^2(-x^\top \mathsf{X}_k)  |z^\top \mathsf{X}_k|^2 \right)\\
&\leq  \sup_{\|z\|_2 \leq1} \left(   z^\top \Sigma^{-1} z +  \sum_{k=1}^r |z^\top \mathsf{X}_k|^2 \right)\\
&\leq \sup_{\|z\|_2 \leq1} \left(  z^\top \Sigma^{-1} z +  \sum_{k=1}^r z^\top \mathsf{X}_k \mathsf{X}_k^\top z \right) \\
&= \lambda_{\mathrm{max}}( \Sigma^{-1} + \sum_{k=1}^r \mathsf{X}_k \mathsf{X}_k^\top)\\
&\stackrel{{\scriptsize \textrm{Assumption }}\ref{assumption:logistic2}}{\leq} M.
\ee

Moreover,
\be
\lambda_{\mathrm{min}}(H_x) &= \lambda_{\mathrm{min}}\left(  \Sigma^{-1} + \sum_{k=1}^r  F''(x^\top \mathsf{X}_k)  \mathsf{X}_k \mathsf{X}_k^\top\right)
\geq \lambda_{\mathrm{min}}(  \Sigma^{-1})
\stackrel{{\scriptsize \textrm{Assumption }}\ref{assumption:logistic2}}{\geq} m.
\ee
\end{proof}

   \section{Simulations} \label{sec:simulations}
            \subsection{Accuracy and autocorrelation time of Unadjusted HMC} \label{sec:simulations_accuracy}
   The purpose of our first set of simulations is to show that in practical situations analyzed in this paper the unadjusted HMC algorithm (UHMC) is competitive with other popular sampling algorithms in terms of both accuracy and in terms of the number of gradient evaluations required. 
    We compare UHMC to Metropolis-adjusted HMC (MHMC) \cite{duane1987hybrid}, the Metropolis-adjusted Langevin algorithm (MALA) \cite{roberts1996exponential} and the unadjusted Langevin algorithm (ULA) \cite{roberts1996exponential}.  All simulations were implemented on MATLAB (see the GitHub repository  \url{https://github.com/mangoubi/HMC} for our MATLAB code used to implement these algorithms).

     We consider the setting of Bayesian logistic regression with standard normal prior, with synthetic ``independent variable" data vectors generated as $\mathbf{X}_i = \frac{Z_i}{\|Z_i\|_2}$ for $Z_1,\ldots, Z_r \sim N(0,I_d)$ iid, for dimension $d=1000$ and $r=d$.  To generate the synthetic ``dependent variable" binary data, a vector $\beta=(\beta_1,\ldots, \beta_d)$ of regression coefficients was first generated as $\beta = \frac{W}{\|W\|_2}$ where $W \sim N(0,I_d)$.  The binary dependent variable synthetic data $\mathsf{Y}_1, \ldots, \mathsf{Y}_d$ were then generated as independent Bernoulli random variables, setting $\mathsf{Y}_i = 1$ with probability $\frac{1}{1+e^{-\beta^\top \mathsf{X}_i}}$ and  $\mathsf{Y}_i = 0$ otherwise.  Each Markov chain was initialized at a point $X_0$ chosen randomly as $X_0 \sim N(0,I_d)$.

     To compare the accuracy, we computed the ``marginal accuracy" (MA) of the samples generated by each chain over a fixed number (50,000) of numerical steps for different step sizes $\eta$ in the interval $[0.1,0.6]$ (Figure \ref{fig:Auto}, top). 
      Among all four of the algorithms, we found that UHMC had 
       the highest accuracy at the accuracy-optimizing step size (the accuracy-optimizing step size was $\eta=0.35$ for UHMC).  
        To compare the runtime we computed the autocorrelation time of the samples for a test function $f(x)=\|x\|_1$.\footnote{The autocorrelation time can be estimated as $1+2\sum_{s=1}^{s_{\mathrm{max}}} \rho_s$, where $\rho_s$ is the autocorrelation at lag $s$ for some large $s_{\mathrm{max}}$\cite{goodman2010ensemble, chen2014stochastic}.} 
   We found that the autocorrelation time of UHMC was fastest at the autocorrelation time-optimizing step size (the autocorrelation time-optimizing step size was $\eta=0.5$ for UHMC)  (Figure \ref{fig:Auto}, bottom).  When running UHMC and MHMC, we used a trajectory time $T$ equal to $\frac{\pi}{3}$, rounded down to the nearest multiple of $\eta$.
   
   \begin{figure}[H]
\begin{center}
\includegraphics[trim={0cm -1.2cm 1cm 0cm}, clip, scale=0.4]{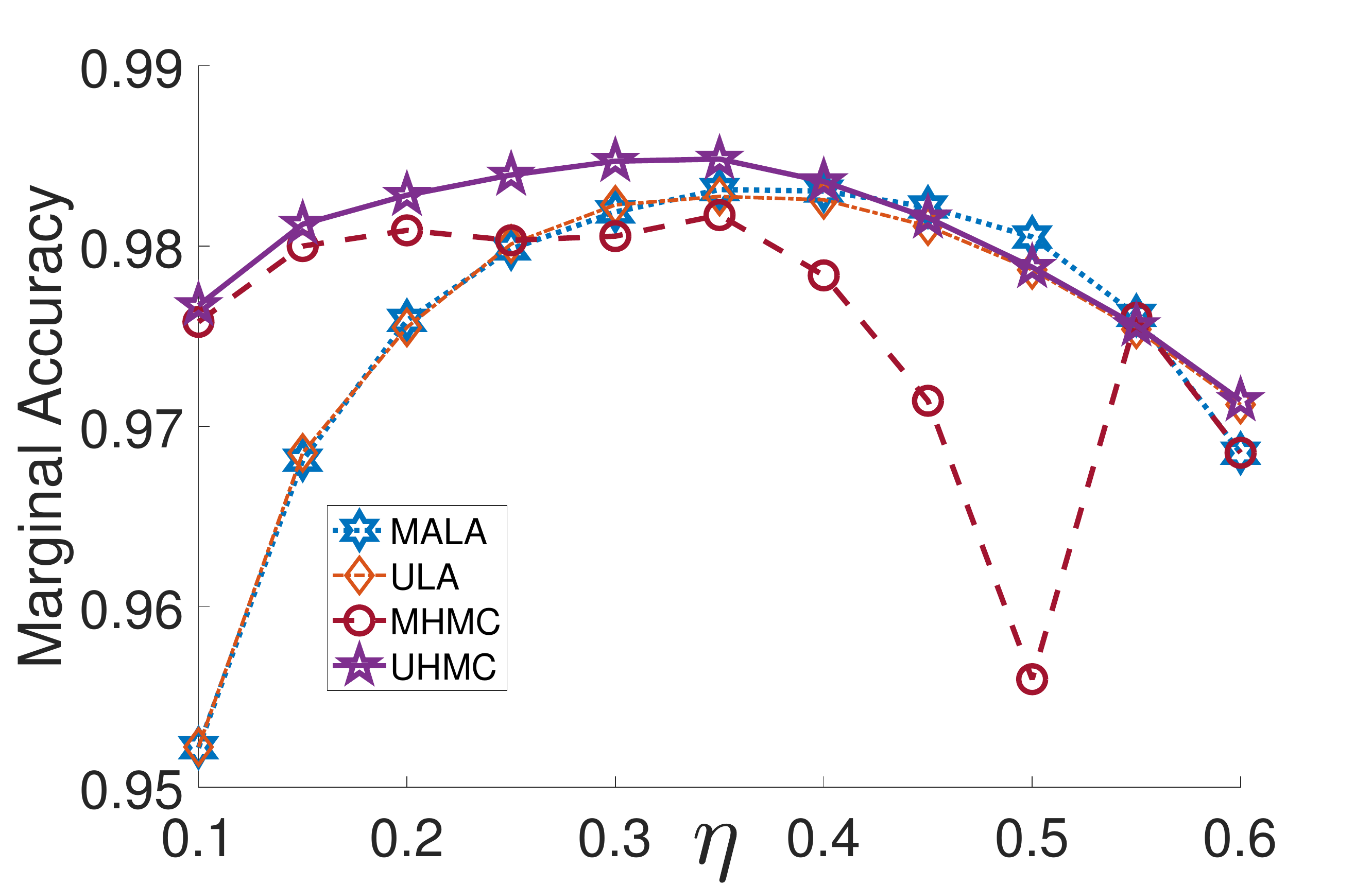}
\includegraphics[trim={0cm 0cm 1cm 0cm}, clip, scale=0.53]{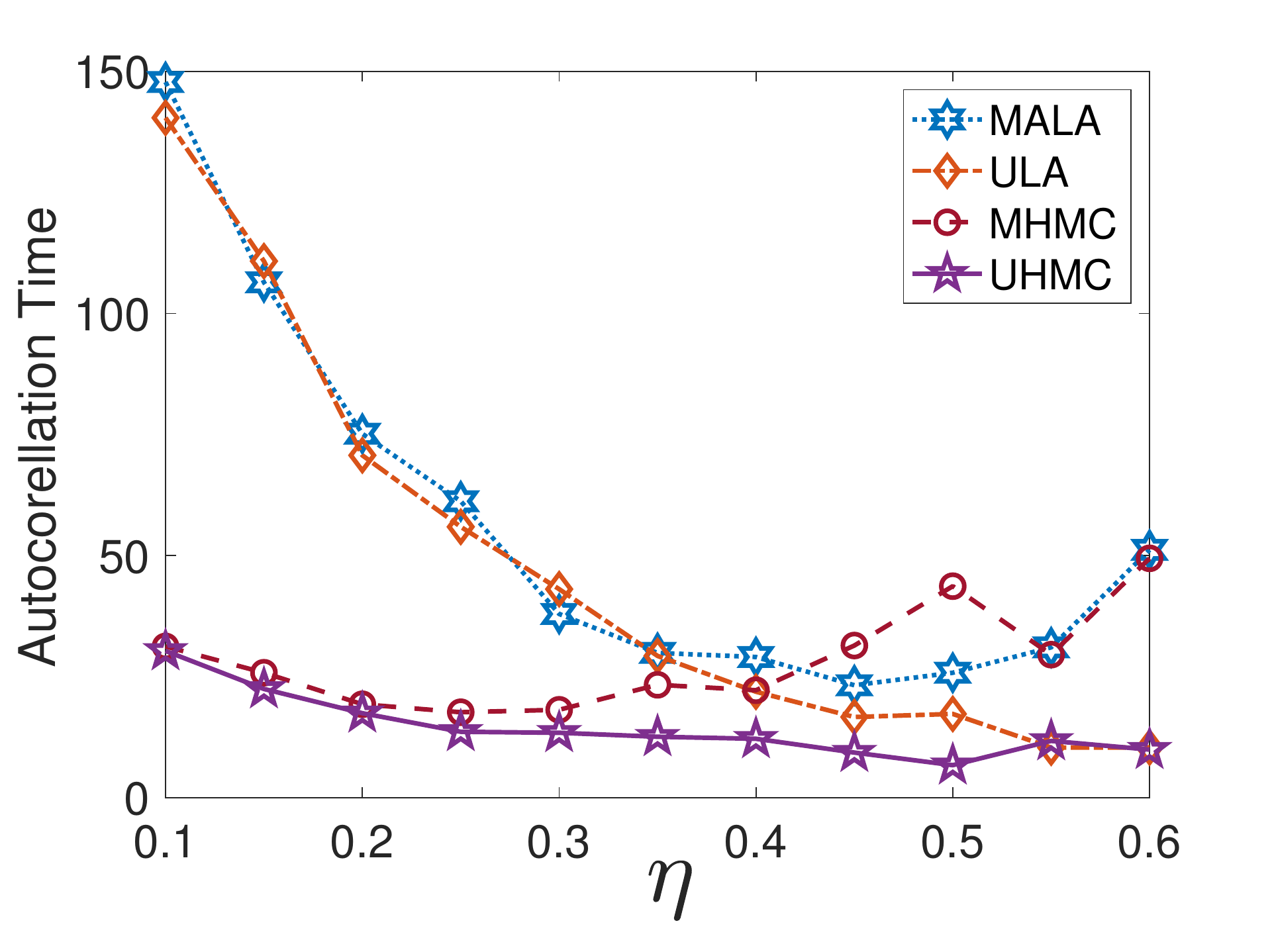}
\end{center}
\caption{\footnotesize Marginal accuracy (top) and autocorrelation time (bottom) versus numerical step size $\eta$ for MALA, ULA, MHMC, and UHMC.}\label{fig:Auto}  
\end{figure}

   \begin{figure}[H]
\begin{center}
\includegraphics[trim={0cm 0cm 0cm 0cm}, clip, scale=0.25]{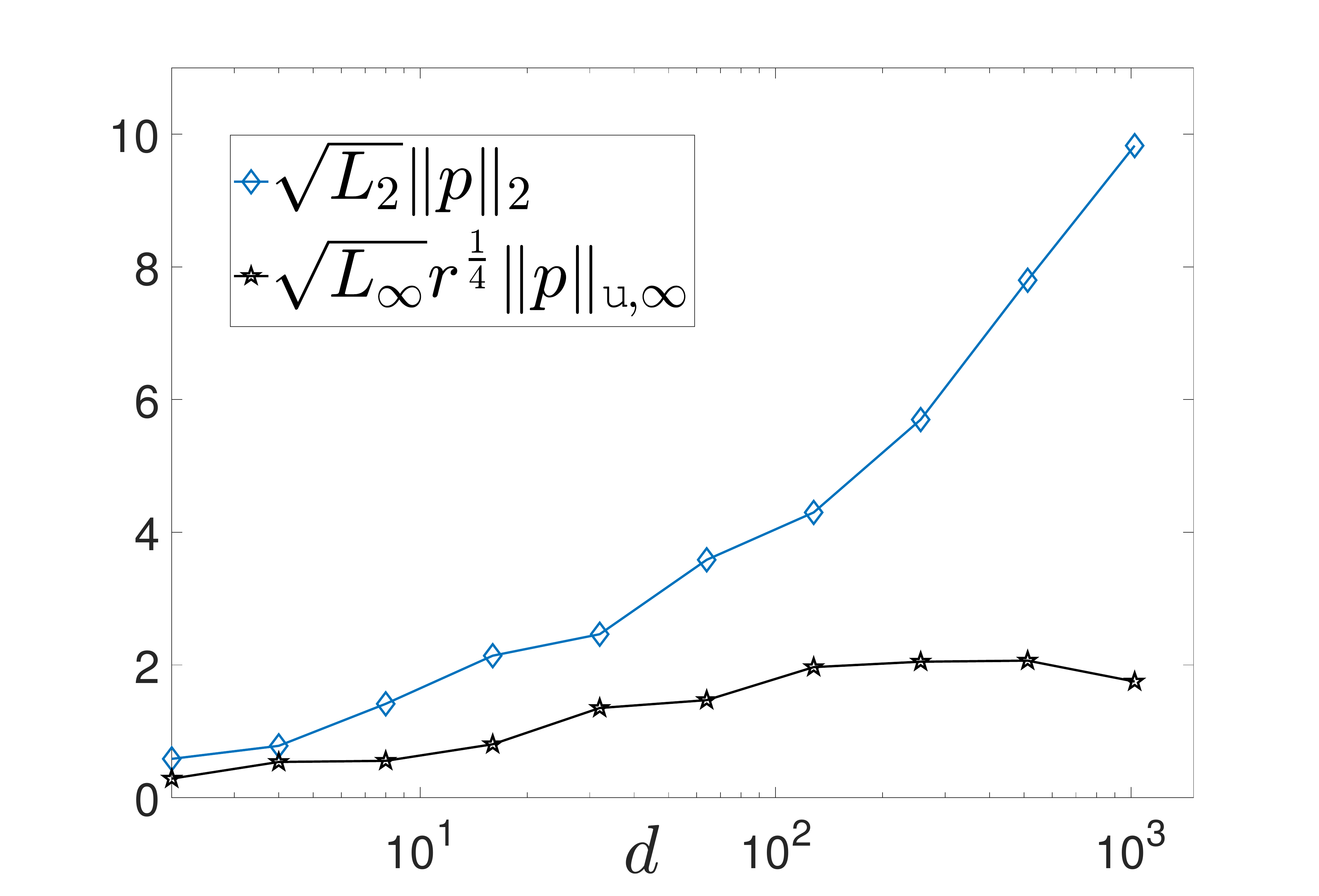}
\end{center}
\caption{\footnotesize Logarithmic plot comparing an estimate for the quantity used to bound the numerical error of second-order integrators using the Euclidean Lipschitz constant and the infinity-norm Lipschitz constant, at different values of $d$.}\label{fig:LipschitzL_infity}  
\end{figure}

\begin{remark}
The marginal accuracy is used as a heuristic to compare accuracy of samplers (see e.g.  \cite{durmus2017convergence}, \cite{faes2011variational} and \cite{chopin2017leave}). 
 The marginal accuracy between the measure $\mu$ of a sample and the target $\pi$ is $MA(\mu, \pi) := 1-\frac{1}{2d} \sum_{i=1}^d \|\mu_i - \pi_i\|_{\mathrm{TV}}$, where $\mu_i$ and $\pi_i$ are the marginal distributions of $\mu$ and $\pi$ for the coordinate $x_i$.  
         Since MALA is known to sample from the correct stationary distribution and is geometrically ergodic for the class of distributions analyzed in this paper, we used the samples generated after running MALA for a very long time ($10^6$ steps) to obtain a more accurate approximation for $\pi$ as a benchmark with which to compare the sampling accuracy of the four different algorithms  when run for a much shorter amount of time  ($50,000$ numerical steps). 
         \end{remark}

         \subsection{Comparing Euclidean and infinity-norm Lipschitz conditions} \label{sec:simulations_Lipschitz}
The goal of our second set of simulations was to compare the optimal values of the usual Euclidean Lipschitz Hessian constant $L_2$ to the constant $L_\infty$ from our infinity-norm Lipschitz condition of Assumption \ref{assumption:M3}. 
  We performed this comparison for the logistic regression example of the previous simulation with synthetic data generated in the same way, but for different values of $d$, with $r=d$.  
  The optimal values of $L_2$ and $L_{\infty}$ are $L_2 = \sup_{x,y,v\in \mathbb{R}^d} \frac{1}{\|y-x\|_2 \|v\|_2} \|(H_y -H_x)v\|_2$ and $L_{\infty} = \sup_{x,y,v\in \mathbb{R}^d} \frac{1}{\sqrt{r}  \|y-x\|_{\infty, \mathsf{u}} \|v\|_{\infty, \mathsf{u}}} \|(H_y - H_x)v\|_2$, where the supremum is taken over the values of $x,y,v$ for which the function is defined.

At each value of $d$ we used MATLAB's ``fminunc" function to search for the optimal values of $L_2$ and $L_\infty$.  
  Recall that to bound the error of a numerical integrator with momentum $p_t$, one may use one of the two quantities $\sqrt{L_2} \|p_t\|_2$ and $\sqrt{L_\infty} r^{\frac{1}{4}}\|p_t\|_{\infty, \mathsf{u}}$.  
   We plot the median value of these quantities for random momenta $p_t \sim N(0,I_d)$ (Figure \ref{fig:LipschitzL_infity}).  
    Our results show that the median of $\sqrt{L_2} \|p_t\|_2$ increases with $d$ at a faster rate than the median of $\sqrt{L_\infty} r^{\frac{1}{4}}\|p_t\|_{\infty, \mathsf{u}}$ over the interval $d\in[1,1000]$.  This suggests that gradient evaluation bounds based on our infinity-norm Lipschitz condition can be much tighter for distributions used in practice than bounds based on the usual Euclidean Lipschitz condition.

\section{Conclusions and future directions}\label{sec:Conclusion}
In this paper, we show that the conjecture of \cite{creutz1988global}, which says that HMC requires $O^{*}(d^{1/4})$ gradient evaluations, is true when sampling from strongly log-concave targets satisfying weak regularity properties associated with the input data. 
 In doing so, we introduce a new regularity property for the Hessian (Assumption \ref{assumption:M3}) that is much weaker than the Lipschitz Hessian property, and show that for a class of functions arising in statistics and machine learning this property holds for natural conditions on the data.
One future direction is to further weaken Assumption \ref{assumption:M3}, which says that the Hessian does not change too quickly in all but a few fixed bad directions, by instead allowing these directions to vary based on the position $x$.  
 Our simulations show that UHMC is competitive with MHMC on  synthetic data that satisfies our regularity assumption. Further, we show that the constant in our regularity assumption grows much more slowly with the dimension than the Lipschitz constant of the Hessian.  It would also be interesting to extend our results to non-logconcave targets, and to $k$th-order numerical implementations of generalizations of HMC, such as RHMC.

\paragraph{Bounds for MHMC}
Another open problem is to show tight gradient evaluation bounds for MHMC.  
Since the Metropolis-adjusted HMC Markov chain preserves the stationary distribution exactly, it should be possible to show that the number of gradient evaluations is polylogarithmic in $\epsilon^{-1}$, improving on the number of gradient evaluations required by unadjusted HMC which grows like $\epsilon^{-\frac{1}{2}}$. 
 Unfortunately, the probablistic coupling approach used in our current paper is unlikely to work for MHMC, since Metropolis ``accept/reject" steps tend to break the coupling of the two Markov chains if the chains have different acceptance probabilities, causing one chain to accept its proposal while the other rejects the proposal.

An alternative approach might be to use a proof based on the conductance method (see for instance \cite{vempala2005geometric}). 
 Unlike the coupling approach, the conductance method is compatible with Metropolis ``accept/reject" steps. 
  In the conductance method one aims to prove mixing time bounds in terms of the ``bottleneck ratio" of the target density $\pi$, by bounding the transition probability of the Markov chain between any set $S \subseteq \mathbb{R}^d$ and its complement $S^c$ (also called the conductance between the sets) in terms of this bottleneck ratio. 
   The bottleneck ratio can be defined as $\phi^{*} = \inf_{S  \subseteq \mathbb{R}^d \, : \, 0 < \pi(S) < \frac{1}{2}} \phi(S)$ where $\phi(S) = \frac{ \int_{\partial S} \pi(x) \mathrm{d}x}{\pi(S) }$.   

There are however many difficulties in applying the conductance approach when HMC is run with optimal parameters, since the optimal trajectory time parameter $T$ is oftentimes too long for a straightforward application of the conductance approach.
     For instance, a long HMC trajectory may cross from $S$ to $S^c$ multiple times, making it difficult to relate the behavior of a Markov chain such as HMC that takes long steps to the geometric properties of the distribution $\pi$. 
       To begin to get around this problem, one might try to make use of the fact that, under the Lipschitz gradient assumption for $U$, the momentum of HMC trajectories causes them to travel for a long time in roughly the same direction, allowing one to bound the conductance of HMC on subsets, such as the half-spaces, which have very ``regular" boundaries. 
        Unfortunately, the conductance method requires one to bound the conductance for all subsets $S$, not just those with very regular boundaries.  A challenging problem is then to extend the conductance bounds to subsets that have more irregular boundaries, and to use these conductance bounds to bound the number of gradient evaluations required by MHMC.

\newpage
\bibliographystyle{plain}
\bibliography{HmcBib_higher_order}

\end{document}